\documentclass[11pt]{article}

\usepackage{amsmath}
\usepackage{amssymb}
\usepackage{enumitem}
\usepackage{comment}
\usepackage{cancel}
\usepackage{tikz}
\usepackage{caption}
\usepackage{url}
\usepackage{graphicx}
\usepackage{xcolor}
\usepackage{hyperref}
\usepackage{booktabs}
\usepackage{longtable}
\usetikzlibrary{shapes.geometric, arrows.meta}

\usepackage[authoryear]{natbib}

\newtheorem{theorem}{Theorem}

\newtheorem{algorithm}{Algorithm}
\newtheorem{assumption}{Assumption}

\newtheorem{definition}{Definition}
\newtheorem{remark}{Remark}

\newtheorem{example}{Example}

\newtheorem{proposition}{Proposition}

\newcommand{\ind}{\perp \!\!\!\perp}

\newenvironment{proof}[1][Proof]{\noindent\textbf{#1.} }{\ \rule{0.5em}{0.5em}}
\begin{document}     

\title{Filtering without recursion \\ and some of its uses in financial economics}

\author{
  Simon Donker van Heel\textsuperscript{a,b,*} and
  Neil Shephard\textsuperscript{c}
}
\date{\today}
\maketitle

\begin{center}
  \small
  \textsuperscript{a}\textit{Econometric Institute, Erasmus University Rotterdam, Rotterdam, The Netherlands} \\
  \textsuperscript{b}\textit{Tinbergen Institute, Amsterdam, The Netherlands} \\
  \textsuperscript{c}\textit{Dept of Economics \& Dept of Statistics, Harvard University, Cambridge, MA 02138, USA} 
\end{center}

\renewcommand{\thefootnote}{\fnsymbol{footnote}}
\footnotetext[1]{Corresponding author. 
E-mail addresses: \texttt{donkervanheel@ese.eur.nl} (S.W.\ Donker van Heel),
\texttt{shephard@fas.harvard.edu} (N.\ Shephard).  Donker van Heel thanks the Fulbright Program for financial support under their scholarship program.  We thank Rutger-Jan Lange and Dick van Dijk for comments on an earlier draft.  
{\tt Refine.ink} was used to review an earlier draft for consistency and clarity.}
\renewcommand{\thefootnote}{\arabic{footnote}}

\begin{abstract}
We develop a filter for time series, defined at each time $t$ as the minimizer of a discounted convex combination of observed and expected losses. The filter can be estimated by simulation to an arbitrary level of accuracy in $O(1)$ flops at each time point $t$ \& can be run for all values $t=1,...,T$ in parallel. These methods are applied to robustly compute a preaveraged price process from the more than 1.5 million trades made on a single financial asset in a single day where the noise's variance is infinite. It yields a flat ``volatility signature'' plot, down to the 1 second level, so the microstructure noise no longer biases the volatility estimate. 
This is not true when linear methods are employed.      
\end{abstract}

\noindent Keywords:  Filtering; High frequency finance; Loss function; $M$-estimator; Volatility. 

\baselineskip=20pt

\section{Introduction}

This paper defines weighted estimands for filtering, smoothing \& prediction \citep[e.g.][]{Whittle:83,Harvey(89)} based on observed \& expected loss functions. This covers, e.g., check loss for quantile filtering, squared error for linear filtering, scoring rules, utility functions for economic decision making \& heavy-tailed log-likelihoods. The filter at each time $t$ is computed by minimizing an empirical loss of resampled data. Much of the literature on filtering is recursive, but our approach is not. Even so, using the device we introduce, the filter can be computed across all steps of $t=1,...,T$ in an aggregate $O(T)$ flops.  

A classic simple e.g. is the exponentially weighted moving median (EWMM), which only uses observed losses.  The EWMM of a scalar time series $Y_{1:T}$, is defined as 
$$
\theta_{t}=\underset{\theta}{\arg }\min \ \sum_{j=0}^{t-1} \lambda^{j}\, |Y_{t-j}-\theta|,\quad t=1,...,T.
$$
where $\lambda \in [0,1]$ is a discount rate. 
It is robust to outlying data, yet it has no known recursion \citep{LuxenbergBoyd(24)}, so computing $\theta_{1:T}$ directly costs $O(T^2)$. Our approach is to produce a large number $B$ (e.g. $B=10{,}000$) of random lag lengths $j_1^*,...,j_B^*$ as i.i.d.\ draws with $P(j_b^*=k)\propto \lambda^k$, for $k=0,...,t-1$ (using the inverse of the quantile function of a truncated geometric random variable) \& approximate $\theta_t$ with $\hat{\theta}_t$, the sample median of $Y_{t-j_1^*},....,Y_{t-j_B^*}$. The approximation error $\hat{\theta}_t-\theta_t$ is classically behaved \& can be made arbitrarily small by selecting $B$.
Section \ref{sect:preav} reports $\hat{\theta}_{1:T}$ run over the roughly $T=1.5$ million trades of a single day of a heavily traded financial asset. It has total computational cost of $O(T)$ not $O(T^2)$. 

For first order stationary time series, the EWMM extends to 
$$
\theta_{t}=\underset{\theta}{\arg }\min \ \sum_{j=0}^{t-1} \lambda^{j}\, \{(1-\alpha) \mathbb{E}[|Y_1-\theta|] + \alpha|Y_{t-j}-\theta|\},\quad t=1,...,T,\quad  \alpha \in [0,1],
$$
a convex combination of observed \& expected loss, calling $\alpha$ the ``anchor'', a tuning parameter.  A 2-step approach replaces $\mathbb{E}[|Y_1-\theta|]$ by $\frac{1}{T}\sum_{s=1}^T|Y_s-\theta|$, then draws $j_1^*,...,j_B^*$ as i.i.d.\ from
$$
P(j_b^*=k) = \frac{1-\alpha}{T} + \frac{\alpha \lambda^k}{n_t}\,1(k\ge 0),\quad k=t-T,...,t-1,\quad n_t = \sum_{j=0}^{t-1} \lambda^j, 
$$    
\& approximates $\theta_t$ with $\hat{\theta}_t$, the sample median of $Y_{t-j_1^*},....,Y_{t-j_B^*}$. The 1st term yields a uniform draw from $Y_{1:T}$ that pulls $\hat{\theta}_t$ towards the long-run median; the 2nd is the discounted recent past. Replacing $\frac{1}{T}\sum_{s=1}^T|Y_s-\theta|$ by $\frac{1}{t}\sum_{s=1}^t|Y_s-\theta|$ keeps the filter real time, with $\frac{1-\alpha}{t}$ in place of $\frac{1-\alpha}{T}$ \& $k=0,...,t-1$.

The median is not special. Our algorithm computes, at every observation, filters with losses \& weight functions of the researcher's choosing; before, that was infeasible for long series outside rolling windows of fixed length (uniform weights), linear filters \& exponential family likelihoods. With exponential weights, as in the EWMM, but any convex loss, \cite{LuxenbergBoyd(24)} approximate the estimand sequentially, replacing the loss on the older data by a quadratic term. Others side step the $O(T^2)$ cost by changing the model, e.g. through the use of the dynamic conditional score (DCS) filter \citep{harvey2013dynamic}, the generalized autoregressive score (GAS) filter \citep{CrealKoopmanLucas(13)} \& the implicit score-driven (ISD) filter \citep{lange2024robust}, which track a latent time-varying parameter with the score of an assumed parametric density. More broadly, \cite{PenaYohai(23)} review robust time series methods. 

In our application we keep every trade, so the filter must resist the outliers. The noise of intensively traded stocks is then so heavy tailed that its variance is infinite, yet the classical high frequency volatility estimators require it to be finite \citep{ZhangMyklandAitSahalia(05),BarndorffNielsenHansenLundeShephard(08realised),JacodLiMyklandPodolskijVetter(07)}. Once we robustly filter the outliers using a Huber loss, the resulting integrated \& spot volatility estimators are well behaved.

Section \ref{sect:loss} defines the filtering, smoothing \& prediction estimands for a general loss \& weight function, with the anchor as an option for first order stationary series. Section \ref{sect:simFilter} gives the simulation estimator \& the theory of its error $\theta^*_t-\theta_t$, conditional on the data. In the square loss case the error is unbiased and asymptotically Gaussian in $B$, in the median case we give its exact distribution, and for a general loss standard $M$-estimation theory applies. A stratified version computes the heaviest weights exactly \& samples only the rest. Each $t$ is computed separately, so the sweep over $t=1,...,T$ can run in parallel.

Section \ref{sect:preav} robustly filters the high-frequency price record of an asset traded on a financial market, driving a robust preaveraged estimator of integrated and spot volatility --- following the tradition of \cite{AndersenBollerslevDieboldLabys(01)},  \cite{BarndorffNielsenShephard(02realised)}, \cite{JacodLiMyklandPodolskijVetter(07)},  \cite{MyklandZhang(16)}.  In total we use 5 different loss functions in Section \ref{sect:preav} to tackle various high frequency challenges.  
Section \ref{sect:conc} details some conclusions. The Appendix contains the results from various robustness checks, while a web Appendix reports the corresponding empirical results from another 54 assets.

\section{Filtering via weighted losses}\label{sect:loss}

Think of time series data $y_{1:T} = \{y_1,...,y_T\}$ and a time series model for the corresponding random variables $Y_{1:T} = \{Y_1,...,Y_T\}$.  Here $T$ is the length of the time series.  

\begin{definition}[Loss function] Write a deterministic loss function 
$
L(\theta, y),$ where $\theta \in \Theta{\subseteq \mathbb{R}^d}$, the parameter space, \& $y \in \mathcal{Y}$ the sample space of $Y_t$. $L$ could be, e.g., a loss function, minus a log-likelihood or a scoring rule.  For simplicity, we refer to it as a loss function.  Time-varying losses $L_t(\theta,y)$ appear in Section~\ref{sect:preav-path}.   
\end{definition}

Some assumptions are made about the form of the loss function.

\begin{assumption}\label{assum:start1} (a) $\Theta$ is convex.  (b) $L(\theta, y)$ is convex in $\theta \in \Theta$ for all $y \in \mathcal{Y}$. (c)  $\mathbb{E}[L(\theta,Y_t)]$ exists for all $\theta\in \Theta$. (d)  $\mathbb{E}[L(\theta,Y_t)]$ is strictly convex for all $\theta\in \Theta$.\footnote{It is often technically more convenient to view the loss function as the differences in loss $L(\theta,y)-L(\theta_0,y)$, compared to some benchmark $\theta_0 \in \Theta$ as this difference can often have an expectation, even though the loss itself does not.  The leading case of this using $|y-\theta|-|y|$, not simply $|y-\theta|$, when using absolute loss.  }   
\end{assumption}

\begin{example}[Median] The loss 
$
L(\theta,y) = |y - \theta| - |y|,
$
is convex in $\theta$ and $|L(\theta,y)|\le |\theta|$ by the triangular inequality. 
The $\mathbb{E}[L(\theta,Y_t)]$ exists for every $Y_t$; if $Y_t$ has a positive density everywhere the expected loss is strictly convex, minimized at the unique median of $Y_t$, e.g. \cite{Koenker(05)}.
\end{example}

\subsection{Estimands}

This paper is based on weighted estimands: yielding filters, smoothers \& predictors. 

\begin{definition}[Weighted estimands]\label{def:ewm}
Assumption \ref{assum:start1} holds, the anchor $\alpha \in [0, 1]$, \& $\{w_{t,j}\}_{j=0}^{t-1}$ \& $\{w_{t|T,j}\}_{j=t-T}^{t-1}$ are non-stochastic, non-negative weights at time $t$ \& lag $j$. The time-$t$ filter, predictor \& smoother are 
\begin{equation*}
\theta_{t}=\underset{\theta \in \Theta}{\arg }\min \ Q_{t}(\theta),\quad 
\theta_{t|t-1} = \arg\min_{\theta \in \Theta} Q_{t|t-1}(\theta),\quad 
\theta_{t|T} = \underset{\theta \in \Theta}{\arg }\min \ Q_{t|T}(\theta),
\end{equation*}%
respectively, where 
$$
Q_{t}(\theta) = \sum_{j=0}^{t-1} w_{t,j} \{(1-\alpha)\mathbb{E}[L(\theta, Y_{t-j})] + \alpha L(\theta, Y_{t-j})\},\quad \text{and} \quad \sum_{j=0}^{t-1} w_{t,j}=1,
$$
$$
Q_{t|t-1}(\theta) =  w_{t,0}(1-\alpha)\mathbb{E}[L(\theta, Y_t)] + \sum_{j=1}^{t-1} w_{t,j} \{(1-\alpha)\mathbb{E}[L(\theta, Y_{t-j})] + \alpha L(\theta, Y_{t-j})\},
$$
$$
Q_{t|T}(\theta) = \sum_{j=t-T}^{t-1} w_{t|T,j} \left\{(1-\alpha) \mathbb{E}[L(\theta, Y_{t-j})] + \alpha L(\theta, Y_{t-j})\right\},\quad \text{where} \quad \sum_{j=t-T}^{t-1} w_{t|T,j}=1.
$$
\end{definition}

\begin{remark} 

(a) The $\mathbb{E}[L(\theta,Y_{t-j})]$ are expectations under the time series model for $Y_{1:T}$, so the estimands are functions of the model as well as the data. When $\alpha=1$ they are functions of the data alone, the case of our empirical work in Section \ref{sect:preav}. When $\alpha<1$ the expected losses can be estimated by averages over $Y_{1:T}$ under first-order strict stationarity (Remark \ref{rem:Fprop}(d)). Either way, Section \ref{sect:simFilter} estimates the estimands to an arbitrary level of accuracy controlled by the researcher.
 
(b) Definition~\ref{def:ewm}, Assumption~\ref{assum:start1} \& $\alpha \in [0,1)$ imply that $Q_{t}(\theta)$, $Q_{t|t-1}(\theta)$
\& $Q_{t|T}(\theta)$ are globally strictly convex, \& so $\theta_t$, $\theta_{t|t-1}$ \& $\theta_{t|T}$ are guaranteed to be unique if they exist.  Sufficient for existence is that  $\Theta$ is closed and each criterion is coercive \& lower semicontinuous in $\theta$.
 
(c) The cost of computing $Q_{t}(\theta)$ is $O(t)$, so computing it for an entire sweep over $t=1,...,T$ costs $O(T^2)$.  The cost of computing $Q_{t|T}(\theta)$ is $O(T)$, so the entire sweep costs, again, $O(T^2)$.  For large $T$ this is a problem, which we will solve in a moment.  

(d) Negative weights might  benefit modeling cyclical behavior, but that is ruled out here.  

(e) Flexible weights connect to the linear  distributed lag models \citep{Griliches(67)}, Almon lags \citep{Almon(65)} \& MIDAS regressions \citep{GhyselsSinkoValkanov(07),BaiGhyselsWright(13)}. In particular, with squared loss and covariance stationarity, the time-invariant version of the resulting predictor directly produces a linear distributed lag model.

(f) Write $W_t(k):= \sum_{j=0}^k w_{t,j}$ and $W_t^{-1}(p):= \min(k: W_t(k) \ge p)$ as the $p$-th quantile of the weights of the lags  $0,1,2...$ for $p\in (0,1)$. $W_t^{-1}(1/2)$ is the median lag of the weights. For the exponential weights of Ex.\ref{ex:exponential} it is 1 less than the half-life, the lag at which $\lambda^j$ halves.

(g) When the loss is minus the log-likelihood from an exponential family model with exponential weights, this reproduces the \cite{donkershephard2025CEF} analytic filter.

\end{remark}

\subsection{Models of weights}

Here we discuss four models of weights, which are all used in Section~\ref{sect:preav}.

\begin{example}[Uniform weights] 
A uniform weight, with natural number parameter $\ell$, is   
$$
w_{t,j} = \frac{1(j \le \ell)}{\min(t,1+\ell)},\quad j\in \{0,1,...,t-1\},\quad t\in\{1,...,T\}.
$$
Then $W_t^{-1}(p)=\lceil p\min(t,1+\ell) \rceil-1$. The smoothing version has $w_{t|T,j}\propto I(|j|\le \ell)$.  
Uniform weights are classical, yielding, e.g., rolling quantiles and averages.    
\end{example}

Now focus only on detailing the filtering case, as the smoothing case follows immediately.  
\begin{example}[Exponential weights]\label{ex:exponential} Think of $\lambda \in [0,1)$, a discount rate. Then 
$$
w_{t,j} = \frac{1}{n_t}\lambda^j,\quad \text{where},\quad j\in \{0,1,...,t-1\},\quad n_{t} = \sum_{j=0}^{t-1} \lambda^j = \frac{1-\lambda^t}{1-\lambda}.  
$$
The smoothing version has
$$
w_{t|T,j} = \frac{\lambda^{|j|}}{n_{t|T}},\quad j\in \{t-T,...,t-1\},\quad n_{t|T} = \sum_{j=t-T}^{t-1} \lambda^{|j|} = n_t + n_{T-t+1} -1.
$$
Notice that the 
$$
\sum_{j=0}^k w_{t,j} = \frac{n_{k+1}}{n_t} = \frac{1-\lambda^{k+1}}{1-\lambda^t},\quad k \in [0,1,...,t-1].    
$$
Writing 
$
1+\kappa_p = \frac{\log\{1-p(1-\lambda^t)\}}{\log\lambda},
$
so $W_t^{-1}(p)=\lceil \kappa_p \rceil$.  
Exponential weights are classical, e.g. exponentially weighted moving averages (EWMA).   
\end{example}

\begin{example}[Hyperbolic weights]\label{ex:hyperbolic} A tractable hyperbolic model for weights is $$
w_{t,j} = \frac{1}{n_t}\frac{1}{j+a},\quad j\in \{0,1,...,t-1\},\quad a>0,\quad n_t =\Psi(t+a) -\Psi(a), 
$$
where $\Psi$ is the digamma function (by telescoping, recalling that $\Psi(a+1)-\Psi(a)=1/a$, e.g. (6.3.5) in \cite{AbramowitzStegun(70)}). This is inspired by the  \cite{Zipf(32)}–\cite{Mandelbrot(53)} distribution (taking the power of the denominator to be 1). 
The ``$a$'' term is the ``Mandelbrot parameter.''  For $x\rightarrow \infty$ the $\Psi(x) - \log(x) \rightarrow 0$, so each individual weight will go to zero as $t$ gets large logarithmically.  Notice that the 
$$
\sum_{j=0}^k w_{t,j} = \frac{n_{k+1}}{n_t} = \frac{\Psi(k+1+a)-\Psi(a)}{\Psi(t+a)-\Psi(a)},\quad k \in \{0,1,...,t-1\}.    
$$
Write   
$
1+\kappa_p = \Psi^{-1}\{\Psi(a) + p \times n_t \} -a,
$
so $W_t^{-1}(p)=\lceil \kappa_p \rceil$.  
Hyperbolic decay relates to long memory \citep{Hurst(51),GrangerJoyeux(80)}.  
\end{example}

\begin{example}[Superposition weights]\label{ex:superposition} Flexibility can be built using a convex combination of $P\ge 2$ simpler weighting schemes, with time $t$, lag $j$ weights being $\{w_{t,j,p}\}_{p=1}^P$, yielding 
$$
w_{t,j} = \sum_{p=1}^P \alpha_p w_{t,j,p},\quad \alpha_p\ge 0,\quad \sum_{p=1}^P \alpha_p =1,\quad w_{t,j,p} \ge 0,\quad \sum_{j=0}^{t-1}w_{t,j,p}=1,
$$
where $\alpha_{1:P}$ scales each component \citep{Granger(80),BarndorffNielsenShephard(01jrssb)}.   

An e.g. is 
$w_{t,j,p} = \lambda_p^j/\sum_{i=0}^{t-1} \lambda_p^i,$ with $\lambda_p \in (0,1).$
The $\alpha_{1:P}$ \& $\lambda_{1:P}$ are parameters. 
Mimicking the  \cite{BollerslevHoodHussPedersen(18)} {\tt HExp} model, the $\alpha_{1:P}$ could be taken as free parameters, presetting $\lambda_{1:P}$ to hit various frequencies of interest to the applied researcher (e.g. so they have weights with half lives of, say, 2, 4, 8, 16, 32, 64, etc periods).    Another e.g. holds if each component has uniform weights having  different $\ell$s, which are set a priori, e.g. at 0, 2 and 8, giving blocks of 1, 3 and 9 lags. 
This is inspired by \cite{Corsi(09)}. 
\end{example}

\subsection{The representation}\label{sect:stationary}

Proposition \ref{thm:F} shows that $Q_t$, $Q_{t|t-1}$ \& $Q_{t|T}$ have a common structure.   

\begin{proposition}\label{thm:F} Maintain Assumption \ref{assum:start1}.  Then additionally assume either (i) $\alpha=1$, or (ii) the time series $\{Y_t\}_{t=1}^T$ is first-order strictly stationary. For the prediction case additionally assume $\alpha w_{t,0}<1$. Then the representations 
$$
Q_{t}(\theta) = \int_{\mathcal{Y}} L(\theta,y) {\mathrm d}F_{t}(y),\quad 
Q_{t|t-1}(\theta) = (1-\alpha w_{t,0})  \int_{\mathcal{Y}} L(\theta,y) {\mathrm d}F_{t|t-1}(y),\quad
Q_{t|T}(\theta) = \int_{\mathcal{Y}} L(\theta,y) {\mathrm d}F_{t|T}(y),
$$
hold, writing $\widetilde{n}_t = \sum_{j=1}^{t-1} w_{t,j}=1-w_{t,0}$, where 
\begin{align*}
F_{t}(y) &= (1-\alpha)F_{Y_1}(y) + \alpha \hat{F}_{t}(y),\quad \hat{F}_{t}(y) = \sum_{j=0}^{t-1} w_{t,j} 1(Y_{t-j} \le y), 
\end{align*} and 
\begin{align*}
F_{t|t-1}(y) &= \frac{1}{1-\alpha w_{t,0}}\left\{(1-\alpha)F_{Y_1}(y) + \alpha \widetilde{n}_t \widetilde{F}_{t-1}(y) \right\},\quad \widetilde{F}_{t-1}(y) = \sum_{j=1}^{t-1} \frac{w_{t,j}}{\widetilde{n}_t} 1(Y_{t-j} \le y),   \\
F_{t|T}(y) &= (1-\alpha)F_{Y_1}(y) + \alpha \hat{F}_{t|T}(y),\quad \hat{F}_{t|T}(y) = \sum_{j=t-T}^{t-1} w_{t|T,j} 1(Y_{t-j} \le y).
\end{align*}
If $\widetilde{n}_t=0$ then $F_{t|t-1}=F_{Y_1}$.
\end{proposition}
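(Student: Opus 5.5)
The plan is to write each criterion as a finite mixture of integrals of $L(\theta,\cdot)$ against simple distribution functions, using linearity of the Lebesgue–Stieltjes integral in the integrating measure. Two identities are used throughout. First, for any realized $Y_{t-j}$, $L(\theta,Y_{t-j})=\int_{\mathcal{Y}} L(\theta,y)\,\mathrm{d}1(Y_{t-j}\le y)$, since the point mass at $Y_{t-j}$ has distribution function $y\mapsto 1(Y_{t-j}\le y)$. Second, $\mathbb{E}[L(\theta,Y_{t-j})]=\int_{\mathcal{Y}} L(\theta,y)\,\mathrm{d}F_{Y_{t-j}}(y)$, which is finite by Assumption \ref{assum:start1}(c).

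The case split is made at the outset. Under (i), $\alpha=1$, every term multiplying $(1-\alpha)$ vanishes, so the model expectations never enter and no stationarity is needed. Any distribution function may then be written for $F_{Y_1}$, since it carries zero weight. Under (ii), first-order strict stationarity gives $F_{Y_{t-j}}=F_{Y_1}$ for all $j$, so every expected loss equals $\int L\,\mathrm{d}F_{Y_1}$ and can be pulled out of the weighted sum.

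For the filter, I expand
$$Q_t(\theta)=(1-\alpha)\Big(\sum_{j=0}^{t-1}w_{t,j}\Big)\int L\,\mathrm{d}F_{Y_1}+\alpha\sum_{j=0}^{t-1}w_{t,j}\int L\,\mathrm{d}1(Y_{t-j}\le\cdot).$$
Using $\sum_j w_{t,j}=1$ and linearity, this becomes $\int L\,\mathrm{d}\{(1-\alpha)F_{Y_1}+\alpha\hat F_t\}$. The smoother is handled identically, with the weights $w_{t|T,j}$ summing to one over $j=t-T,\dots,t-1$.

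For the predictor, the expected-loss coefficients again sum to $(1-\alpha)\sum_{j=0}^{t-1}w_{t,j}=1-\alpha$, and the observed part is $\alpha\sum_{j\ge1}w_{t,j}1(Y_{t-j}\le y)=\alpha\widetilde n_t\widetilde F_{t-1}(y)$. The total mass of the resulting signed combination is $1-\alpha+\alpha\widetilde n_t=1-\alpha w_{t,0}$. Because $\alpha w_{t,0}<1$ this is positive, so I can factor it out and obtain the stated $F_{t|t-1}$.

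The remaining step, and the only real obstacle, is checking that each $F$ is a genuine distribution function and that the degenerate cases are handled. Each $F$ is a convex combination of distribution functions: it is nonnegative, nondecreasing, right-continuous and has total mass one. This follows because the coefficients are nonnegative and sum to one after normalization. When $\widetilde n_t=0$, the empirical piece $\widetilde F_{t-1}$ is undefined (it is $0/0$), but it enters with weight $\widetilde n_t=0$. The natural convention is to drop it, and then $F_{t|t-1}=(1-\alpha)F_{Y_1}/(1-\alpha)=F_{Y_1}$, which needs $\alpha<1$. That condition is implied by $\alpha w_{t,0}<1$ together with $w_{t,0}=1$.

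Integrability of $L(\theta,\cdot)$ against the point masses is automatic, and against $F_{Y_1}$ it is Assumption \ref{assum:start1}(c). Hence splitting the integrals by linearity is legitimate, and no $\infty-\infty$ issues arise.
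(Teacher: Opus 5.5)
Your proof is correct and follows the paper's argument. You collapse the expected losses onto $F_{Y_1}$ via stationarity (or drop them when $\alpha=1$), use linearity of the integral in the measure, and for the predictor factor out the total mass $1-\alpha+\alpha\widetilde n_t=1-\alpha w_{t,0}$. You also spell out what the paper leaves implicit: the filter and smoother cases, the check that each $F$ is a proper distribution function, and the $\widetilde n_t=0$ convention.
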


\begin{proof} Only the prediction case is not immediate.  Under (ii), $\mathbb{E}[L(\theta,Y_{t-j})] = \int_{\mathcal{Y}} L(\theta,y) {\mathrm d}F_{Y_1}(y)$ for every $j$, so, using $w_{t,0}+\widetilde{n}_t=1$,
\begin{align*}
Q_{t|t-1}(\theta) &= (1-\alpha)\int_{\mathcal{Y}} L(\theta,y) {\mathrm d}F_{Y_1}(y) + \alpha \widetilde{n}_t \int_{\mathcal{Y}} L(\theta,y) {\mathrm d}\widetilde{F}_{t-1}(y) = (1-\alpha w_{t,0}) \int_{\mathcal{Y}} L(\theta,y) {\mathrm d}F_{t|t-1}(y).
\end{align*}
Under (i) the same display holds, with the first term vanishing.
\end{proof}

\begin{remark}\label{rem:Fprop}

(a) In all 3 cases the $F_{t}$, $F_{t|t-1}$ and $F_{t|T}$ are convex combinations of $F_{Y_1}$ and a weighted empirical CDF: $\hat{F}_{t}$, $\widetilde{F}_{t-1}$ and $\hat{F}_{t|T}$.

(b) Notice $F_{t}$, $F_{t|t-1}$ \& $F_{t|T}$ do not depend upon the loss function.  

(c) Focus on filtering to illustrate. 
The $\int_{\mathcal{Y}} L(\theta,y) {\mathrm d}F_{Y_1}$ does not involve $t$, unlike 
$$
\int_{\mathcal{Y}} L(\theta,y) {\mathrm d}\hat{F}_{t} = \sum_{j=0}^{t-1} w_{t,j} L(\theta,Y_{t-j}).
$$
More broadly, if the loss depends upon $t$, then 
$
\mathbb{E}_j[L_{t-j}(\theta,Y_{t-j})] = \sum_{j=0}^{t-1} w_{t,j} L_{t-j}(\theta,Y_{t-j}),
$
holds, where the expectation treats $j$ as random, while $Y_{1:t}$ is fixed with  $Pr(j=k)=w_{t,k}$.  All the methods developed later also cover this case, as each sampled lag $j_b$ evaluates its own $L_{t-j_b}$, and carry over the prediction and smoothing.

(d) There are two leading cases: (i) when $\alpha=1$ (no anchoring), (ii) ``first-order strict stationarity'' (when the distribution of $Y_t$ is time invariant).  Assuming first order stationarity, an approach to approximating $F_{Y_1}$ is to use the ECDF of $Y_{1:T}$, i.e. $\hat{F}_{Y_1}(y) = \frac{1}{T} \sum_{j=1}^T 1(Y_j \le y).$  Researchers often use historic data to estimate unknown parameters, in a 2 step estimation procedure \citep[e.g.][]{NeweyMcFadden(94),EngleMezrich(96),FrancqHorvathZakoian(13)}. In our context the variant is to use the entire ECDF. 

(e) When the loss is squared \& $\theta_t=\alpha\sum_{j}w_{t,j}Y_{t-j}+(1-\alpha)\mathbb{E}[Y_1]$, the filter with exponential weights is the steady-state Kalman filter of a Gaussian AR(1) signal with root $\phi$ \& white noise, with $\lambda=\phi(1-K)$, $\alpha=K/(1-\lambda)$ \& Kalman gain $K$.  The anchor is the mean reversion.

(f) Focus on the exponential weight with infinite past, future  and $\lambda \in (0,1).$  Then 
$$
\hat{F}_{t}(y) = (1-\lambda) \sum_{j=0}^{\infty} \lambda^{j} 1(Y_{t-j} \le y),\quad \hat{F}_{t|T}(y) = \frac{1-\lambda}{1+\lambda} \sum_{j=-\infty}^{\infty} \lambda^{|j|} 1(Y_{t-j} \le y),
$$
while 
$
\widetilde{F}_{t-1}(y) = (1-\lambda) \sum_{j=0}^{\infty} \lambda^{j} 1(Y_{t-1-j} \le y) = \hat{F}_{t-1}(y).
$

(f) Prediction weights sum to 1, as $\widetilde{n}_t = 1-w_{t,0}$ so 
$
\{(1-\alpha) + \alpha\widetilde{n}_t\}/(1-\alpha w_{t,0}) =1.
$

(g) The condition $\alpha w_{t,0}<1$ excludes only the combination $\alpha=1$ \& $w_{t,0}=1$, where $Q_{t|t-1}$ is $0$ and $\theta_{t|t-1}$ is undefined. Since $w_{1,0}=1$, at $t=1$ the prediction case requires $\alpha<1$.

\end{remark}

\begin{example}[Weighted median] Think of $\alpha=1$ and the loss being $|y-\theta|-|y|$, then  
$$
\int_{\mathcal{Y}} L(\theta,y) {\mathrm d}\hat{F}_{t} = \sum_{j=0}^{t-1} w_{t,j} (|Y_{t-j}-\theta| - |Y_{t-j}|),
$$
implying $\theta_t$ is a weighted median.
Weighted medians have a long history in statistics (\citealp{Edgeworth(1888)}; \citealp{Tukey(71)}; {\tt spatstat} in {\tt R}) as does median filtering which uses uniform weights \citep{King(1924),Arce(05),FriedEinbeckGather(07),AriasCastroDonoho(09)}. It is simple to state, but for general weight functions $\theta_t$ costs $O(t)$ to compute using {\tt QuickSelect} \citep{RauhArce(12),tibshirani2008fast} or $O(T^2)$ for a full sweep. For large $T$ this is unacceptable.  Various approximations have been proposed, such as averaging medians of subsamples \citep{Tukey(78)} or the remedian of \cite{RousseeuwBassett(90)}.    
\end{example}

That the full sweep of the filter costs $O(T^2)$ is an endemic problem.  

\subsection{Example: rolling median}\label{sect:uniNonstat}

To set up the next section, focus on the uniform weight filtering case, with $\ell$ lags \& $L(\theta,y)=|y-\theta|-|y|$.  Thus $\theta_t$ is the median of $Y_{t-\ell:t}$ --- a rolling median.  

\begin{example}[Signal plus Cauchy noise]\label{ex:running}
Generate a signal, $\mu_{1:T}$, and data $Y_{1:T}$, as
$$
Y_t = \mu_t + \epsilon_t,\quad \epsilon_t \overset{iid}{\sim } Cauchy(0,1),\quad \eta_t \overset{iid}{\sim } N(0,1),\quad \epsilon_{1:T} \ind \eta_{1:T},\quad t=1,...,T,
$$
where the signal is either (a) nonstationary: $\mu_{t+1} = \mu_t + 0.2\, \eta_t$, with $\mu_1=0$, or (b) stationary: $\mu_{t+1} = \phi \mu_t + \sqrt{1-\phi^2}\, \eta_t$, $\phi=0.95$ \& $\mu_1 \sim N(0,1)$, so $\mu_t \sim N(0,1)$ for each $t$.  The $Cauchy(0,1)$ has location $0$ \& scale $1$.  $Y_{1:T}$ has no moments, implying linear filters are useless.  
\end{example}

\subsubsection{Nonstationary case}

Focus on Ex.\ref{ex:running}(a) \& set $\alpha=1$. 
For $t\ge \ell+1$ \& $\ell$ even, then $\theta_t$ is unique.  But if $\ell$ is odd then the minimizer lives in $[\underline{\theta}_t,\overline{\theta}_t]$, where 
$
\underline{\theta}_t = Y_{[t,(\ell+1)/2 ]}$,  
$\overline{\theta}_t = Y_{[t,(\ell+3)/2 ]}
$ 
\& $Y_{[t,1]}\le ...\le Y_{[t,\ell+1]}$ are ranked $Y_{t-\ell:t}$. 

\begin{figure}[htbp]
    \centering
    \begin{tabular}{@{}c@{\hspace{-0.25cm}}c@{\hspace{-0.25cm}}c@{\hspace{-0.25cm}}c@{}}
        & \textbf{$Y_t,\theta_t$} & \textbf{$\mu_t,\theta_t$} & \text{Average loss or error} \\ \vspace{-5mm}
        \raisebox{2.0cm}[0pt][0pt]{{\small $\ell=1$\hspace{0.3cm}}} &
        \includegraphics[width=0.25\linewidth]{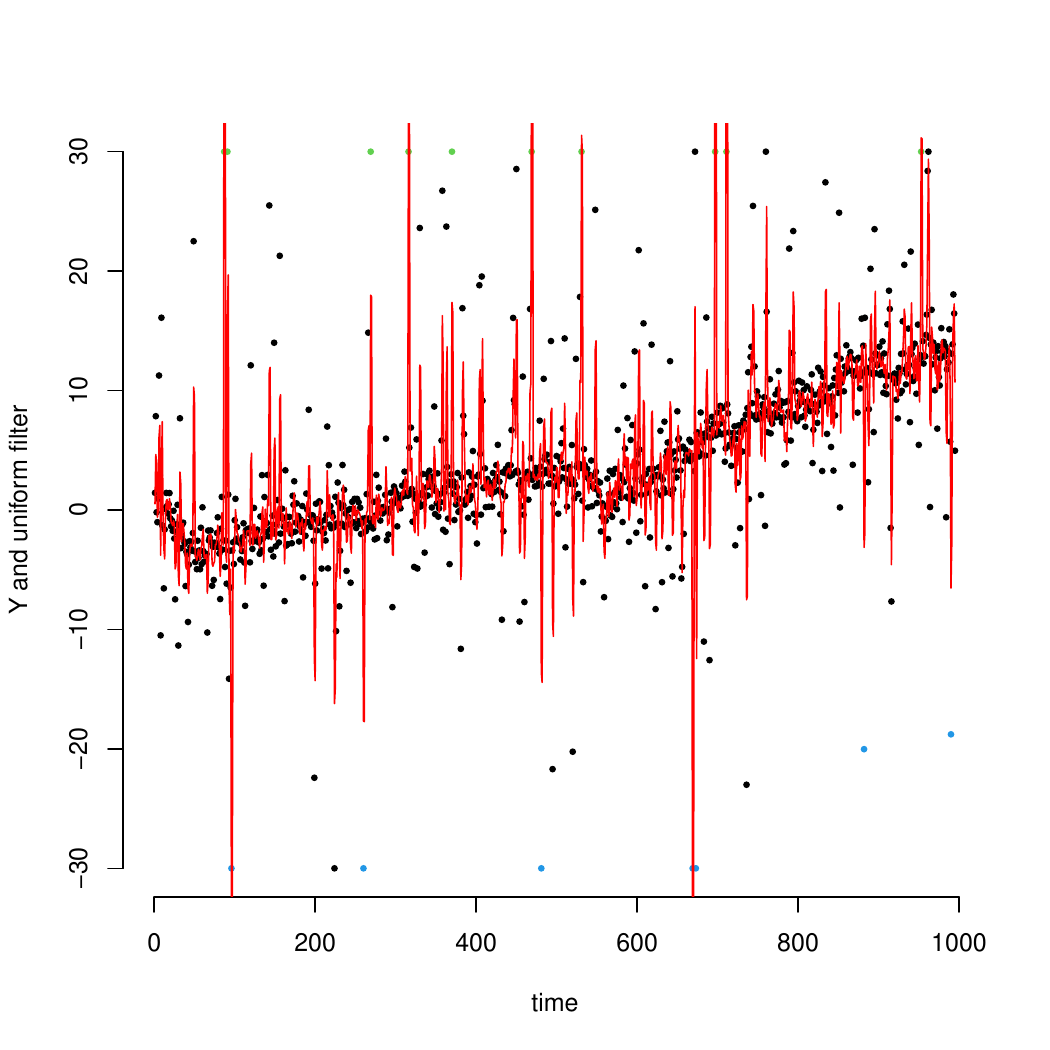} &
        \includegraphics[width=0.25\linewidth]{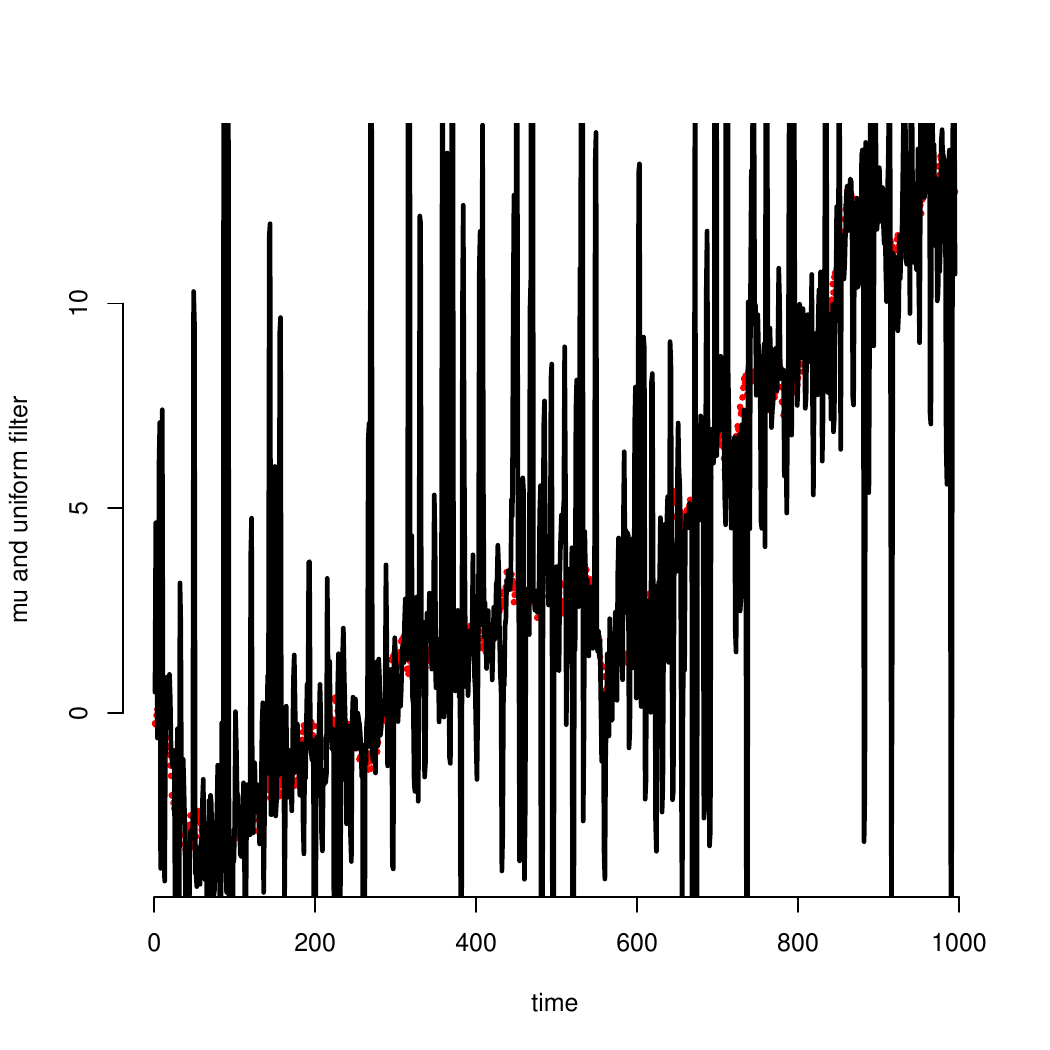} &
        \includegraphics[width=0.25\linewidth]{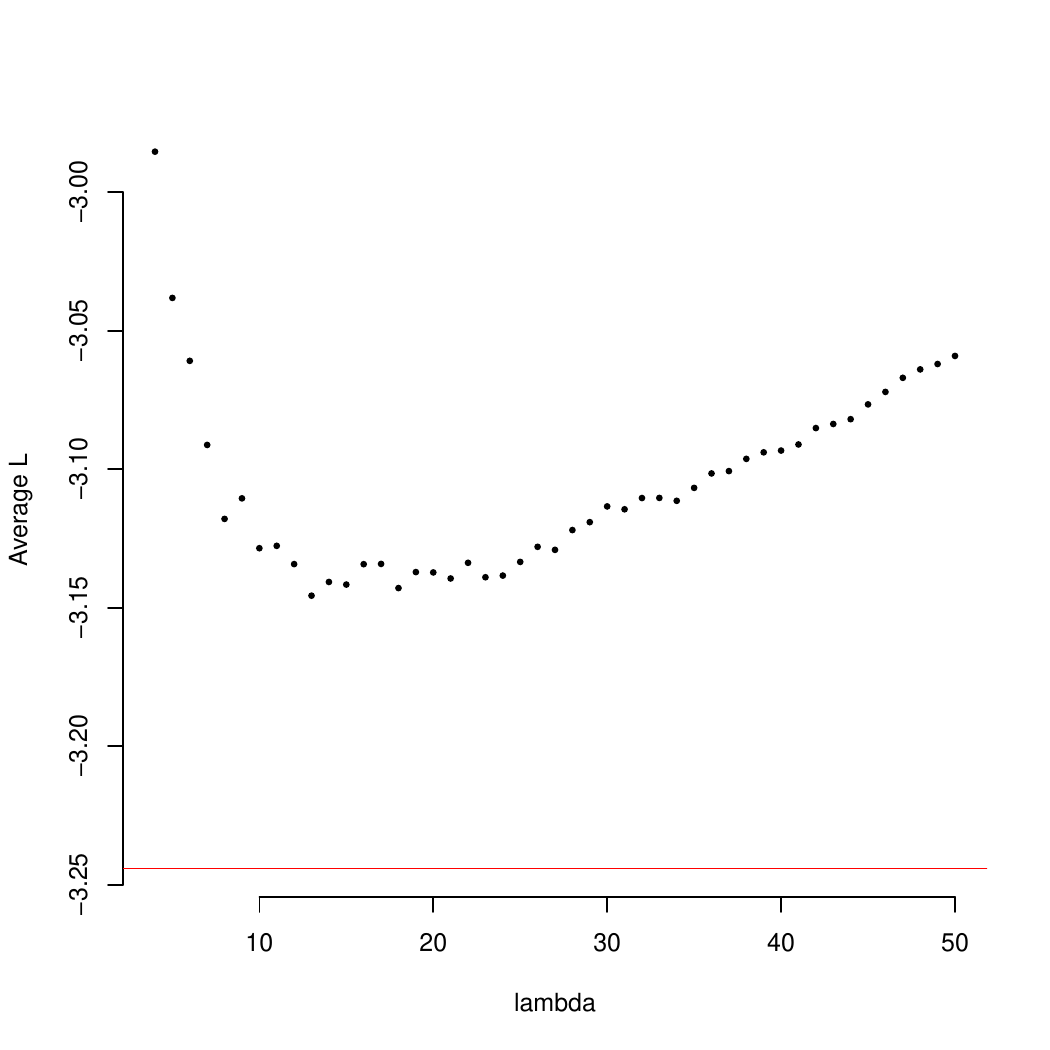} \\ \vspace{-5mm}
        \raisebox{2.0cm}[0pt][0pt]{{\small $\ell=7$\hspace{0.3cm}}} &
        \includegraphics[width=0.25\linewidth]{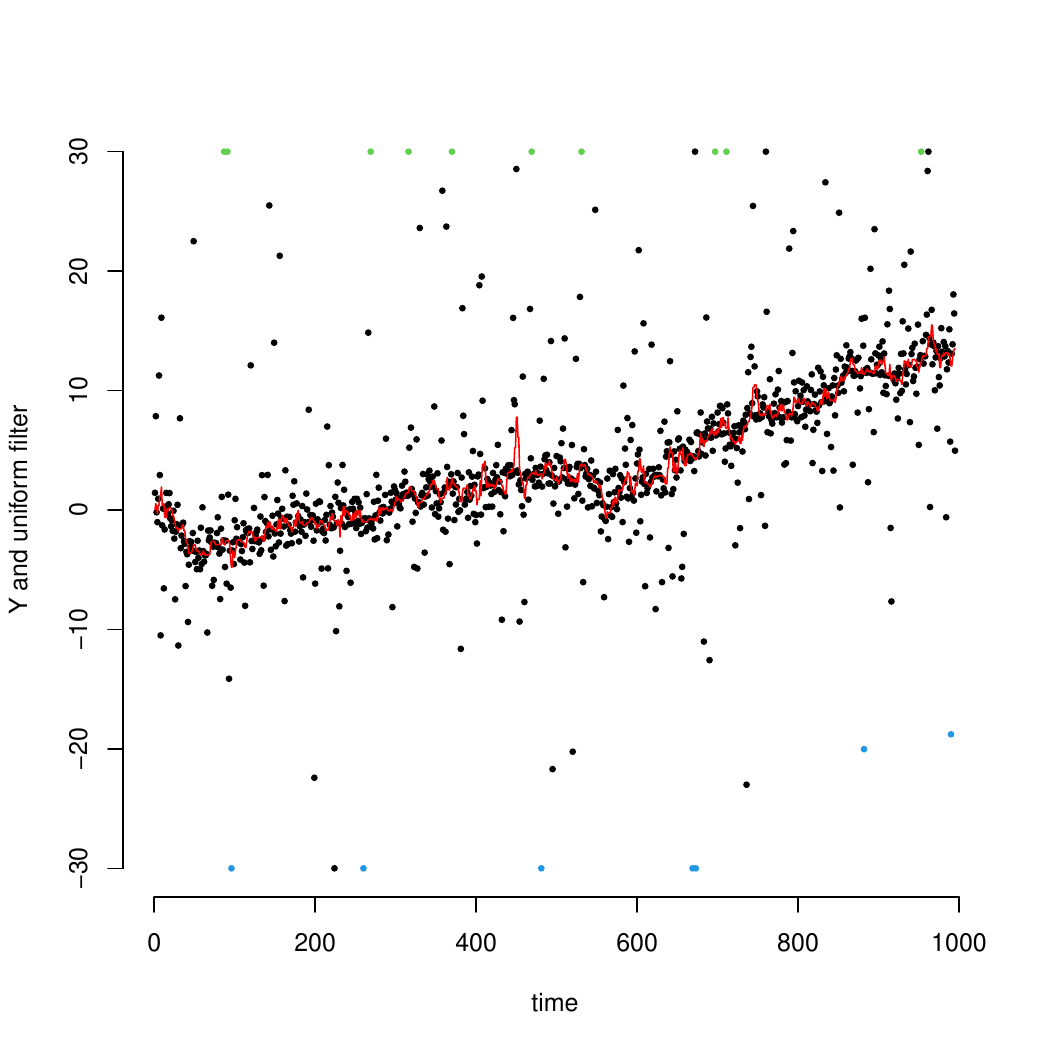} &
        \includegraphics[width=0.25\linewidth]{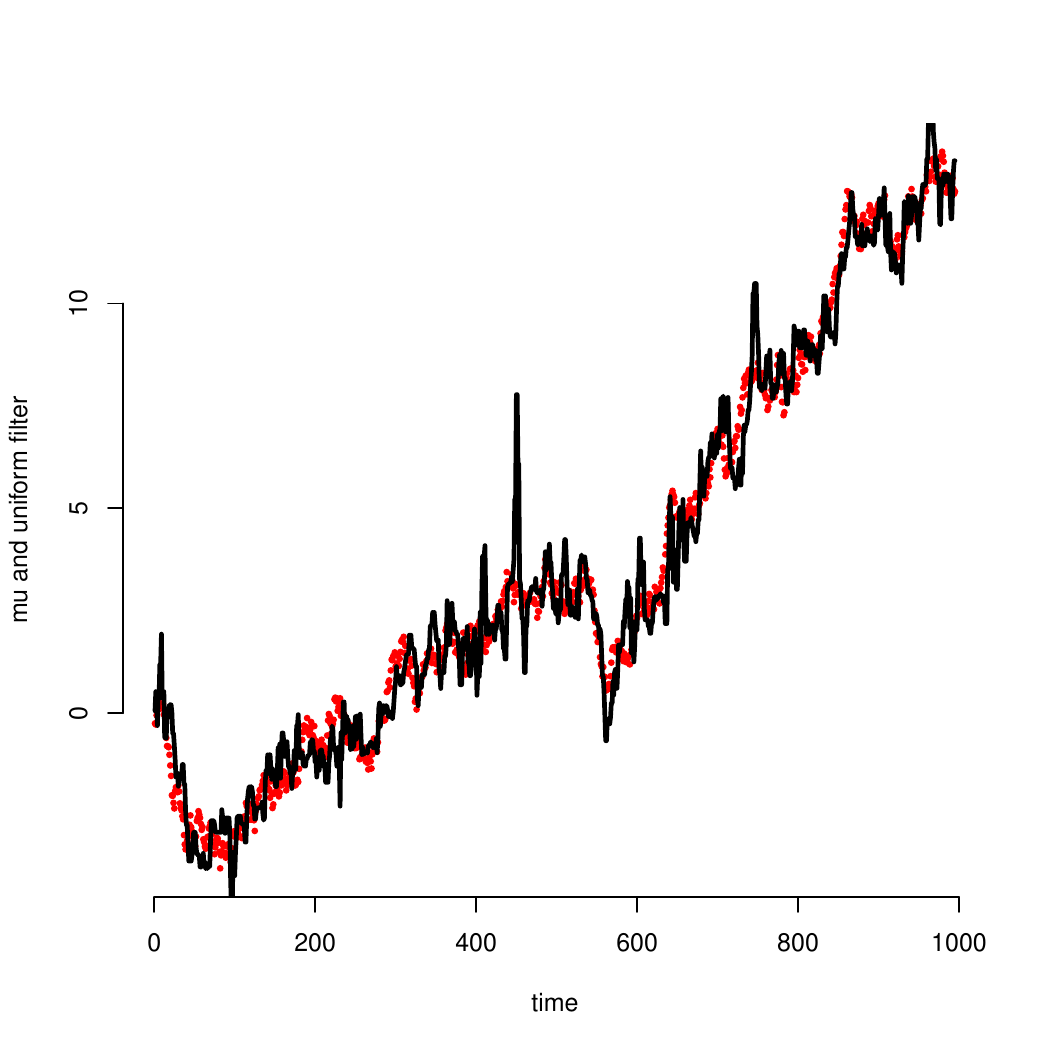} &
        \includegraphics[width=0.25\linewidth]{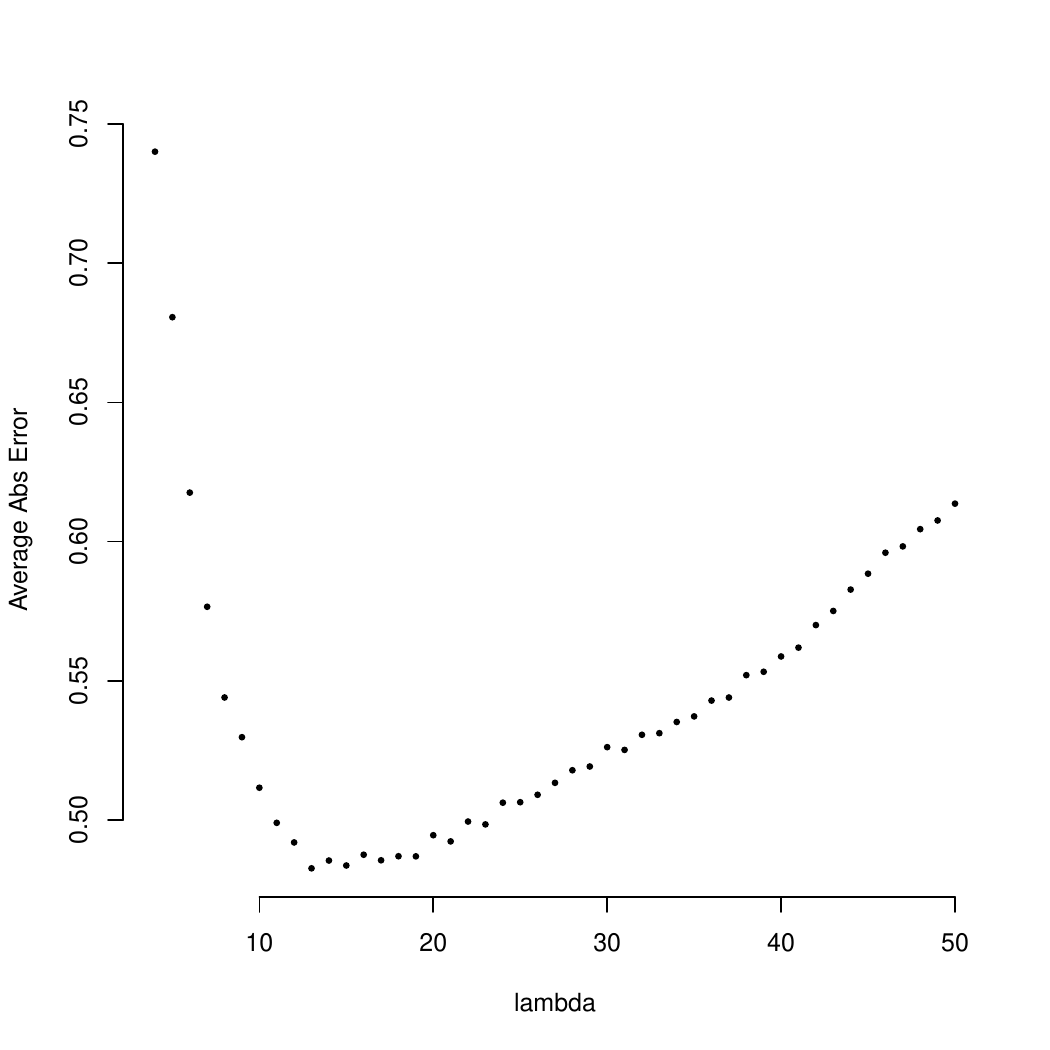}
        \\ 
        \raisebox{2.0cm}[0pt][0pt]{{\small $\ell=15$\hspace{0.25cm}}} &
        \includegraphics[width=0.25\linewidth]{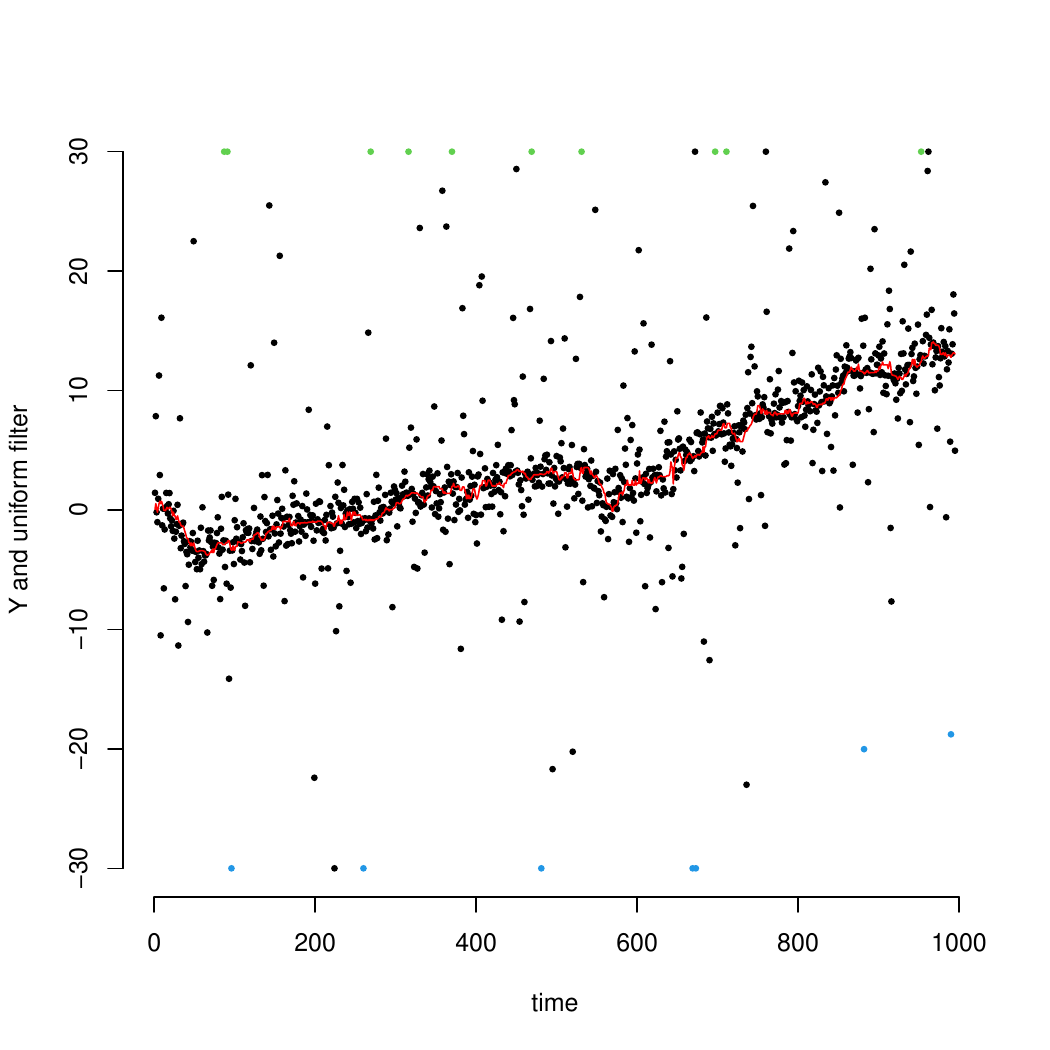} &
        \includegraphics[width=0.25\linewidth]{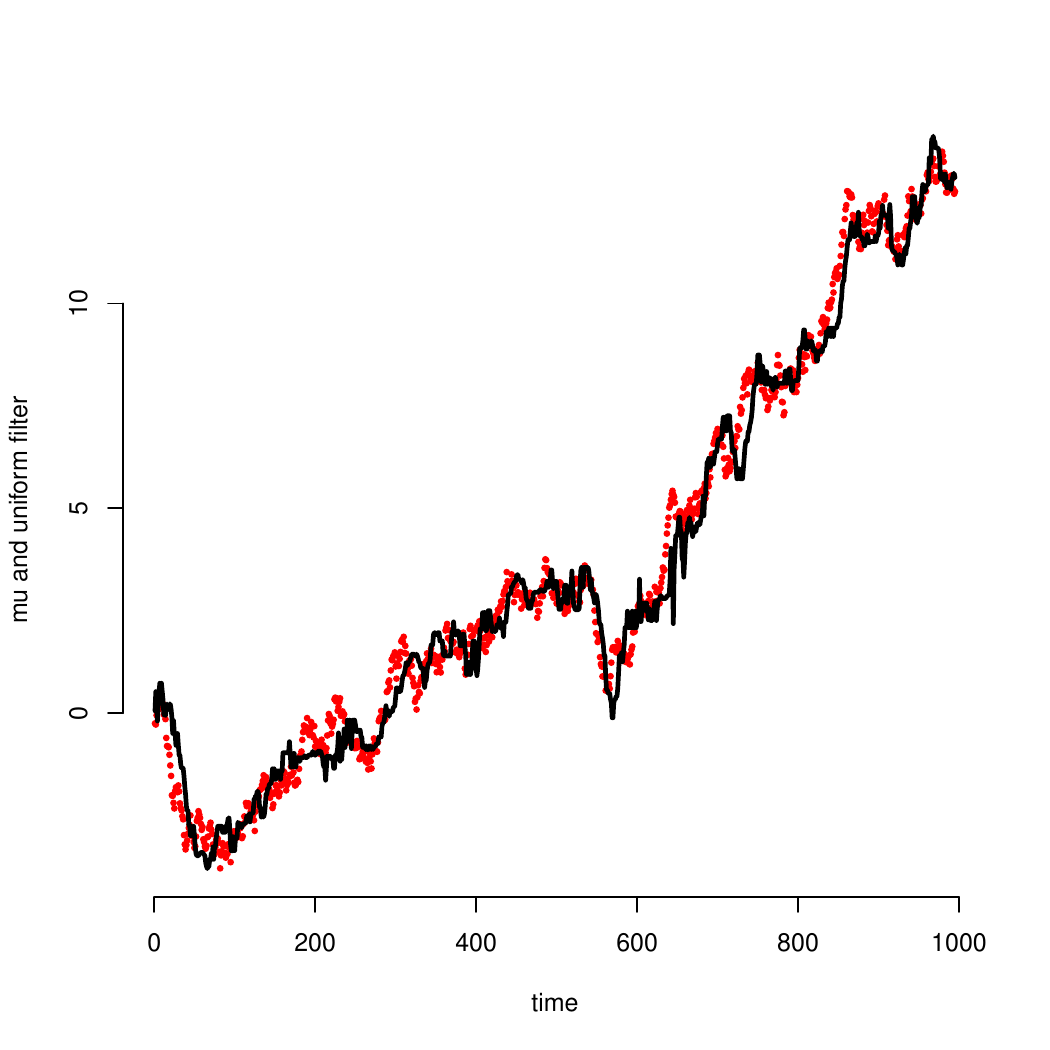} &
    \end{tabular}
    \vspace{-5mm}
    \caption{Nonstationary case.  LHS:  data $Y_t$ (dots) \& rolling filter $\theta_t$ (red line) based on $Y_{t-\ell:t}$ plotted against time.  Middle: signal $\mu_t$ (red dots) \& rolling filter $\theta_t$.  RHS: top shows average loss $|Y_t-\theta_{t-1}|-|Y_t|$ against $\ell\ge 4$ \& middle shows average $|\mu_t-\theta_t|$ error. } 
    \label{fig:medianOld}
\end{figure}  

\begin{remark}\label{remark:median}  \cite{HyndmanFan(96)} document 9 different medians.  We use the \cite{Hazen(1914)} approach, the type 5 median of \cite{HyndmanFan(96)}.  It extends to the weighted version and is stable with respect to small changes in weights.  For the filter it is the interpolated 
$$
\theta_t = \frac{\overline{F}_t-0.5}{\overline{F}_t-\underline{F}_t} \underline{\theta}_t + \frac{0.5-\underline{F}_t}{\overline{F}_t-\underline{F}_t}\overline{\theta}_t,\quad \text{where} \quad \overline{F}_t = \frac{1}{2}\left\{F_t(\overline{\theta}_t)+F_t(\overline{\theta}_t-)\right\},\ 
\underline{F}_t = \frac{1}{2}\left\{F_t(\underline{\theta}_t)+F_t(\underline{\theta}_t-)\right\}.
$$
If $F_t$ is the ECDF of $Y_{t-\ell:t}$ \& $\ell$ is odd then $\theta_t = (\underline{\theta}_t+\overline{\theta}_t)/2.$  We use the type 5 {\tt weighted.median} function from the {\tt R} package {\tt spatstat} of Adrian Baddeley. We knock out any 0 weight datapoints \& implement the ``collapse'' option (for repeat data points we sum the weights \& replace the repeats by a single upweighted data point) as these changes have no impact on $F_t$.     
\end{remark}

\begin{figure}[htbp]
    \centering
    \begin{tabular}{@{}c@{\hspace{-0.25cm}}c@{\hspace{-0.25cm}}c@{\hspace{-0.25cm}}c@{}}
        & \textbf{$Y_t,\theta_t$} & \textbf{$\mu_t,\theta_t$} & \text{Average loss or error} \\ \vspace{-5mm}
        \raisebox{2.0cm}[0pt][0pt]{{\small $\ell=1$\hspace{0.3cm}}} &
        \includegraphics[width=0.25\linewidth]{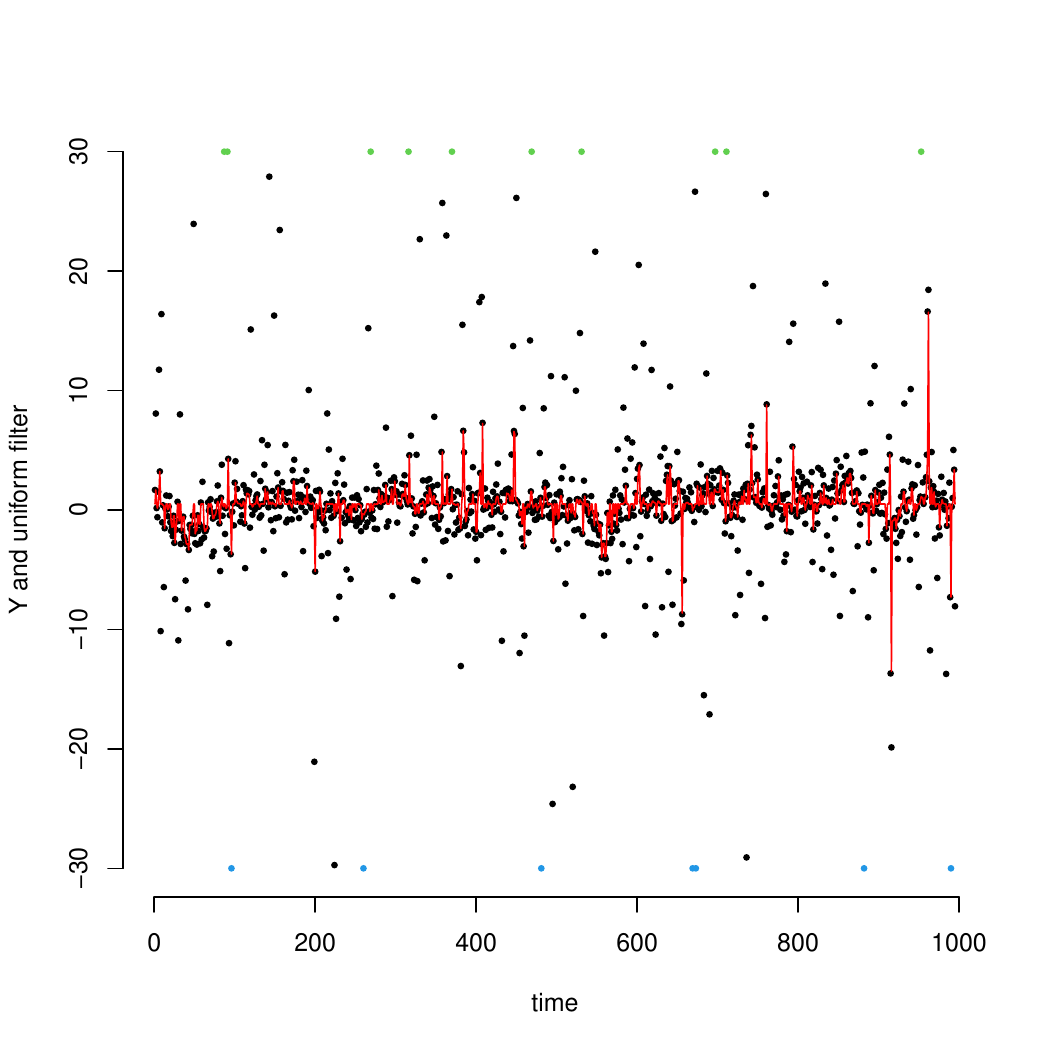} &
        \includegraphics[width=0.25\linewidth]{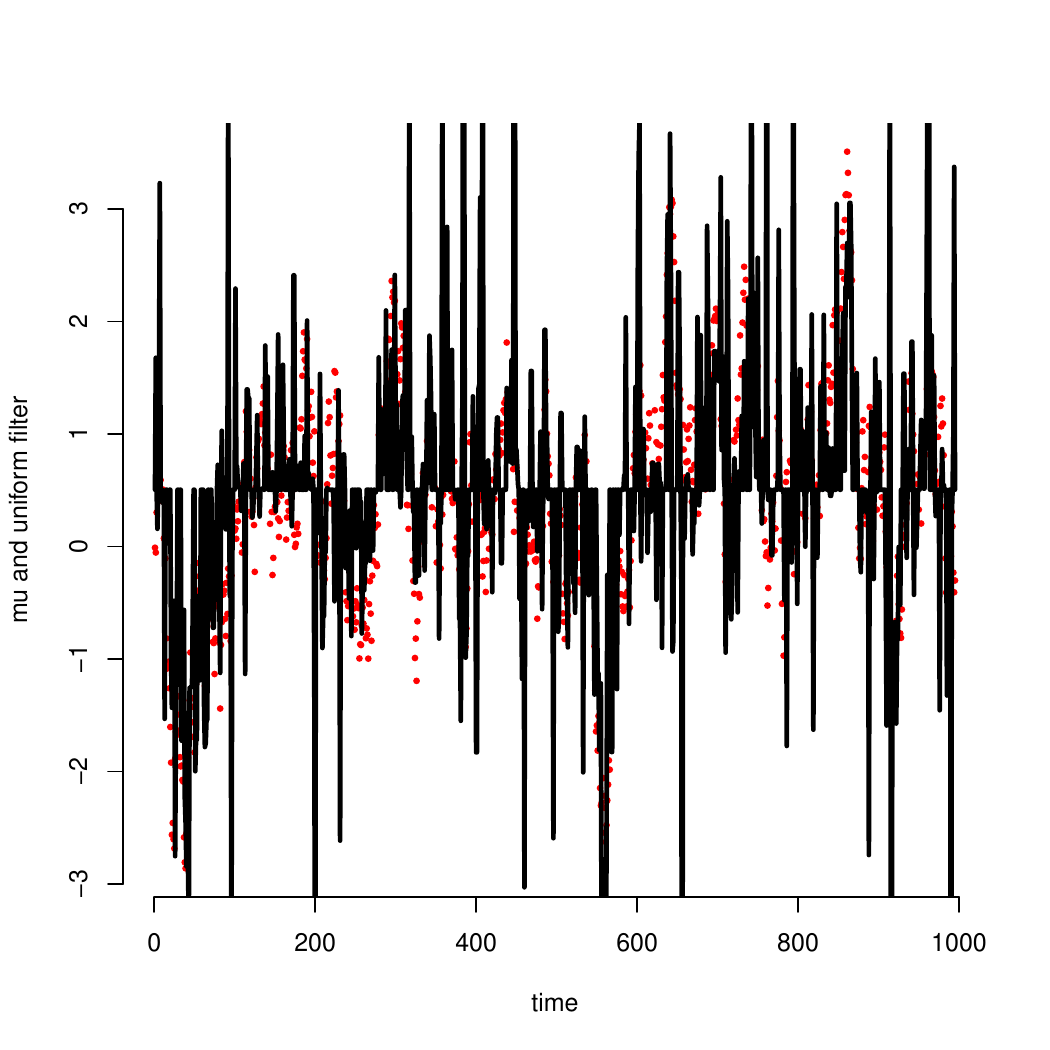} &
        \includegraphics[width=0.25\linewidth]{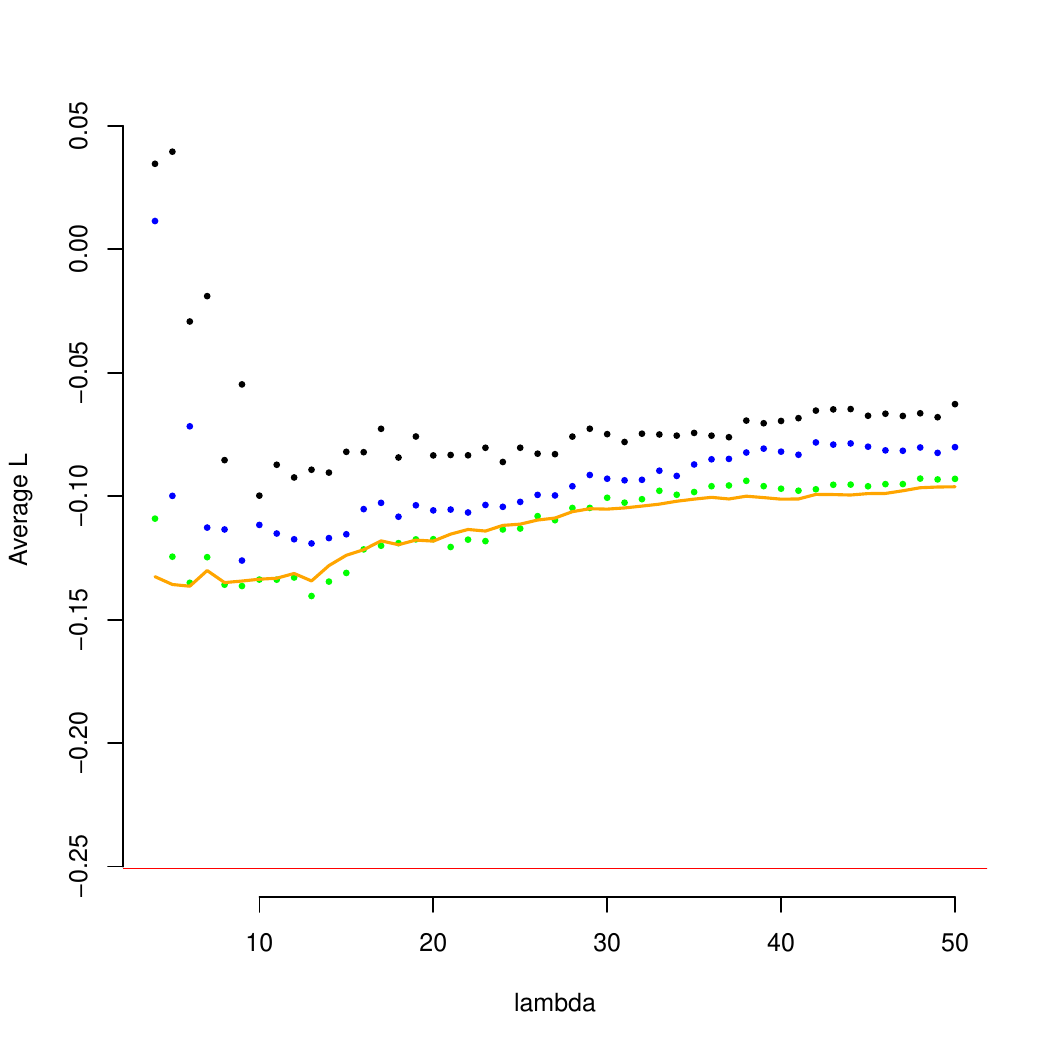} \\ \vspace{-5mm}
        \raisebox{2.0cm}[0pt][0pt]{{\small $\ell=7$\hspace{0.3cm}}} &
        \includegraphics[width=0.25\linewidth]{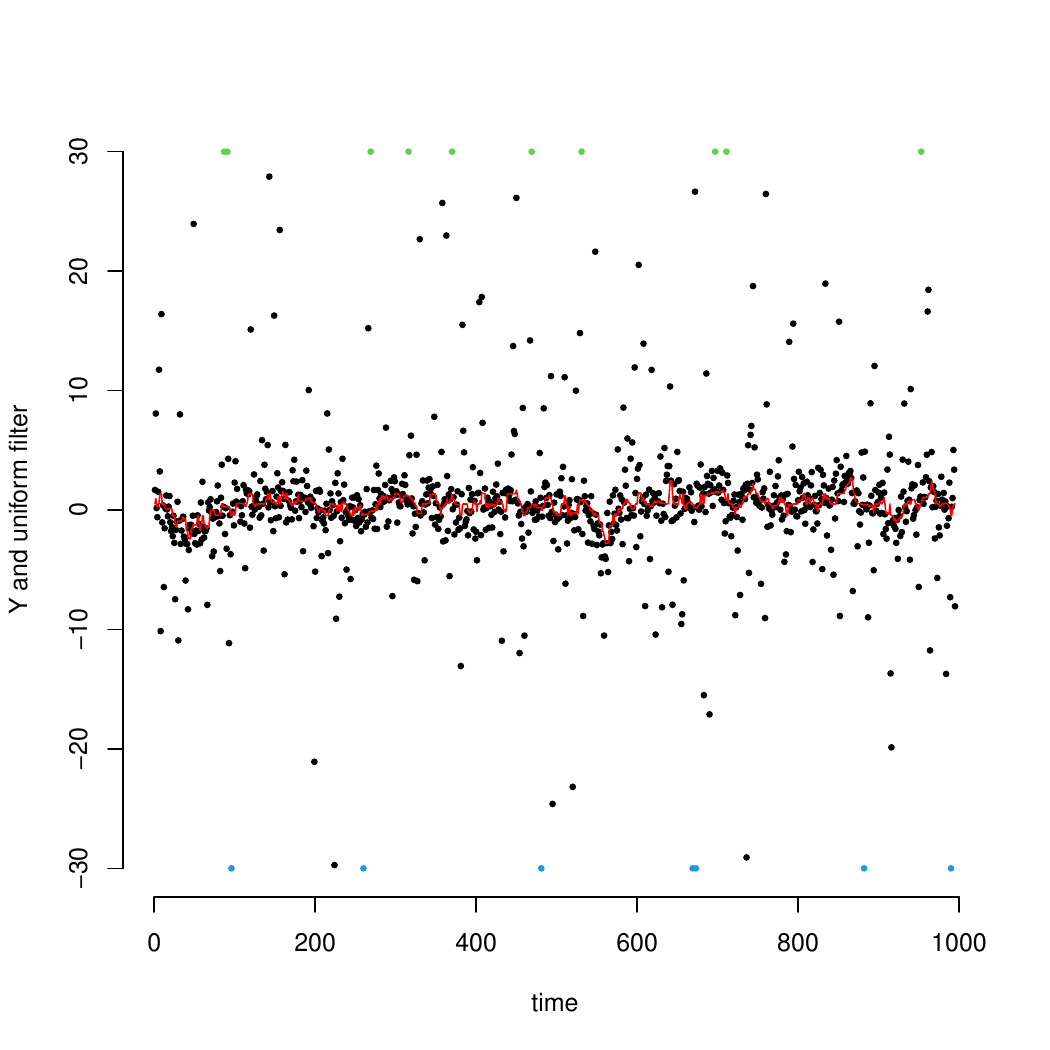} &
        \includegraphics[width=0.25\linewidth]{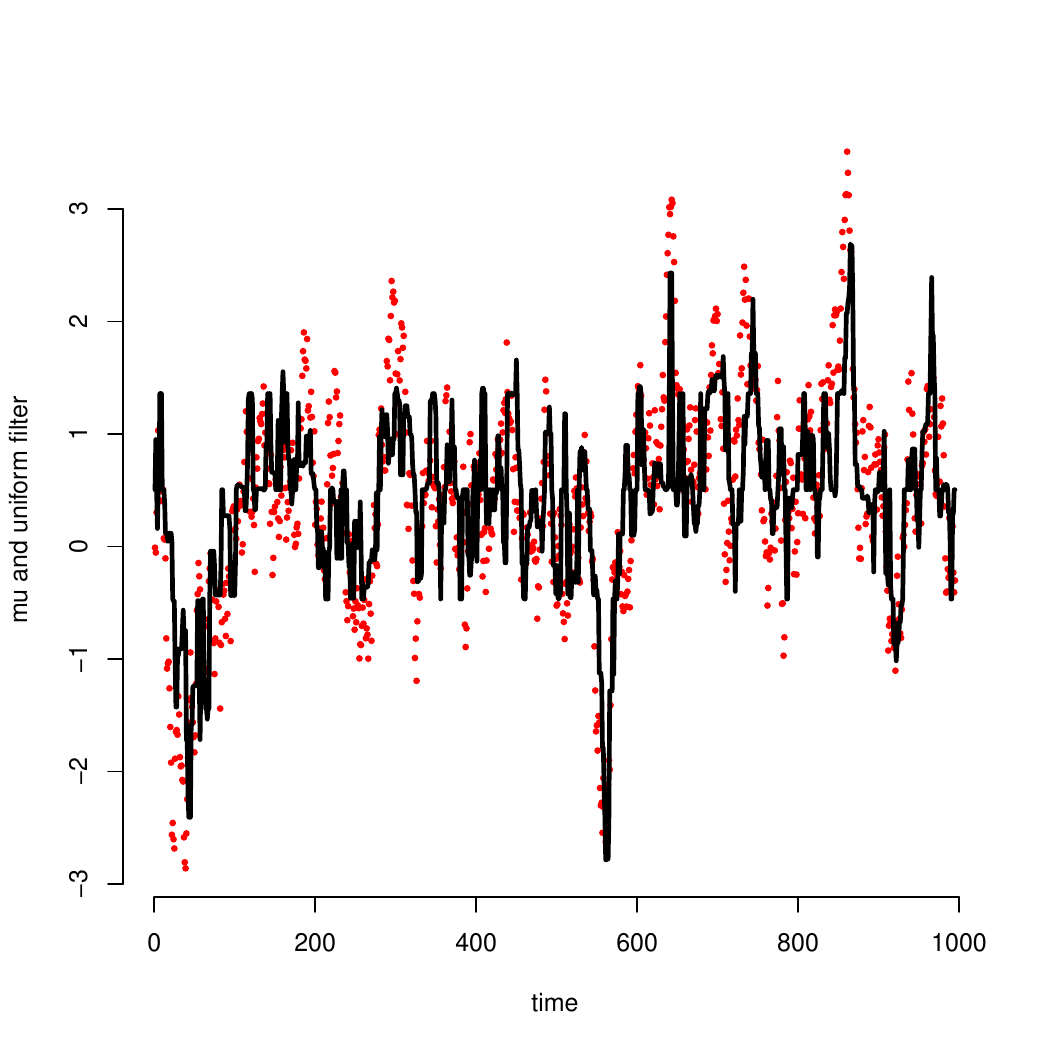} &
        \includegraphics[width=0.25\linewidth]{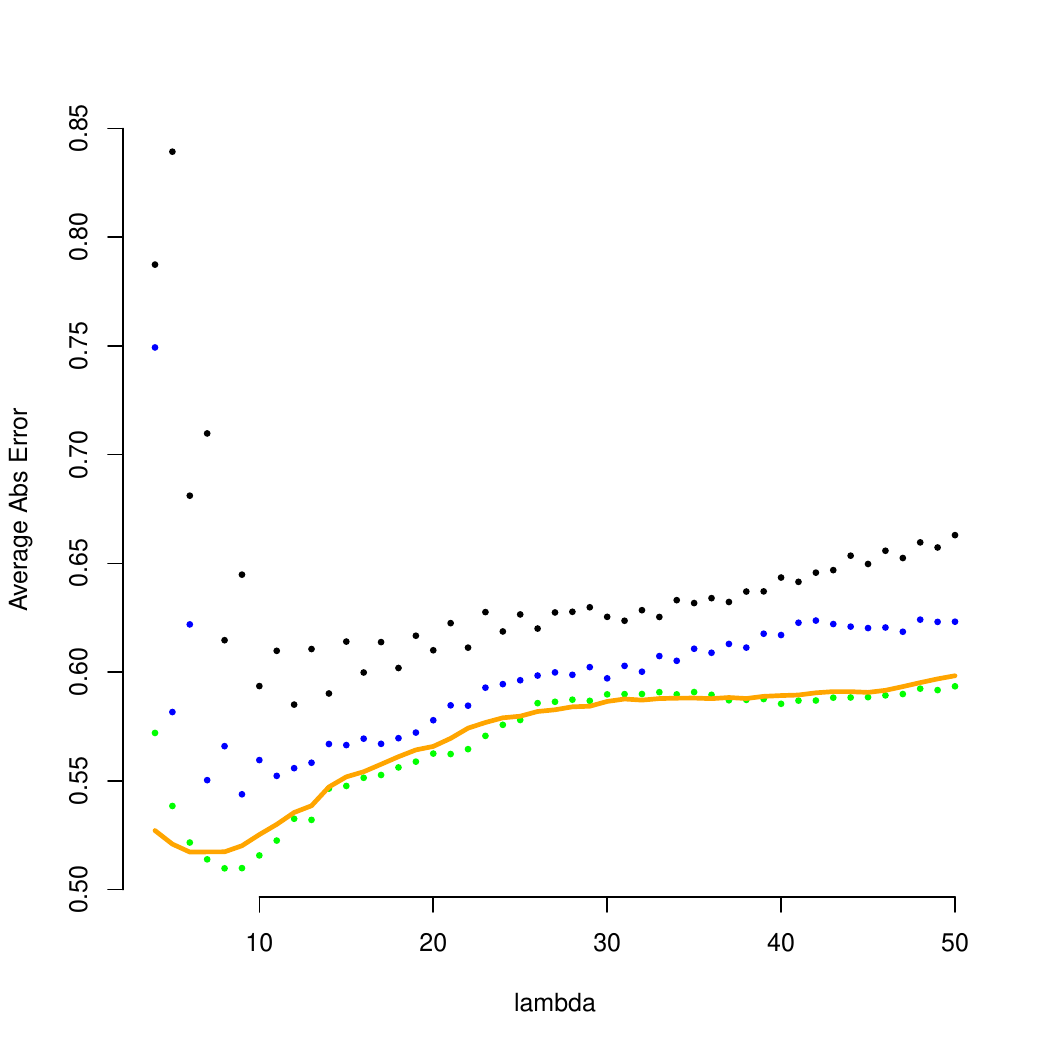}
        \\
        \raisebox{2.0cm}[0pt][0pt]{{\small $\ell=15$\hspace{0.3cm}}} &
        \includegraphics[width=0.25\linewidth]{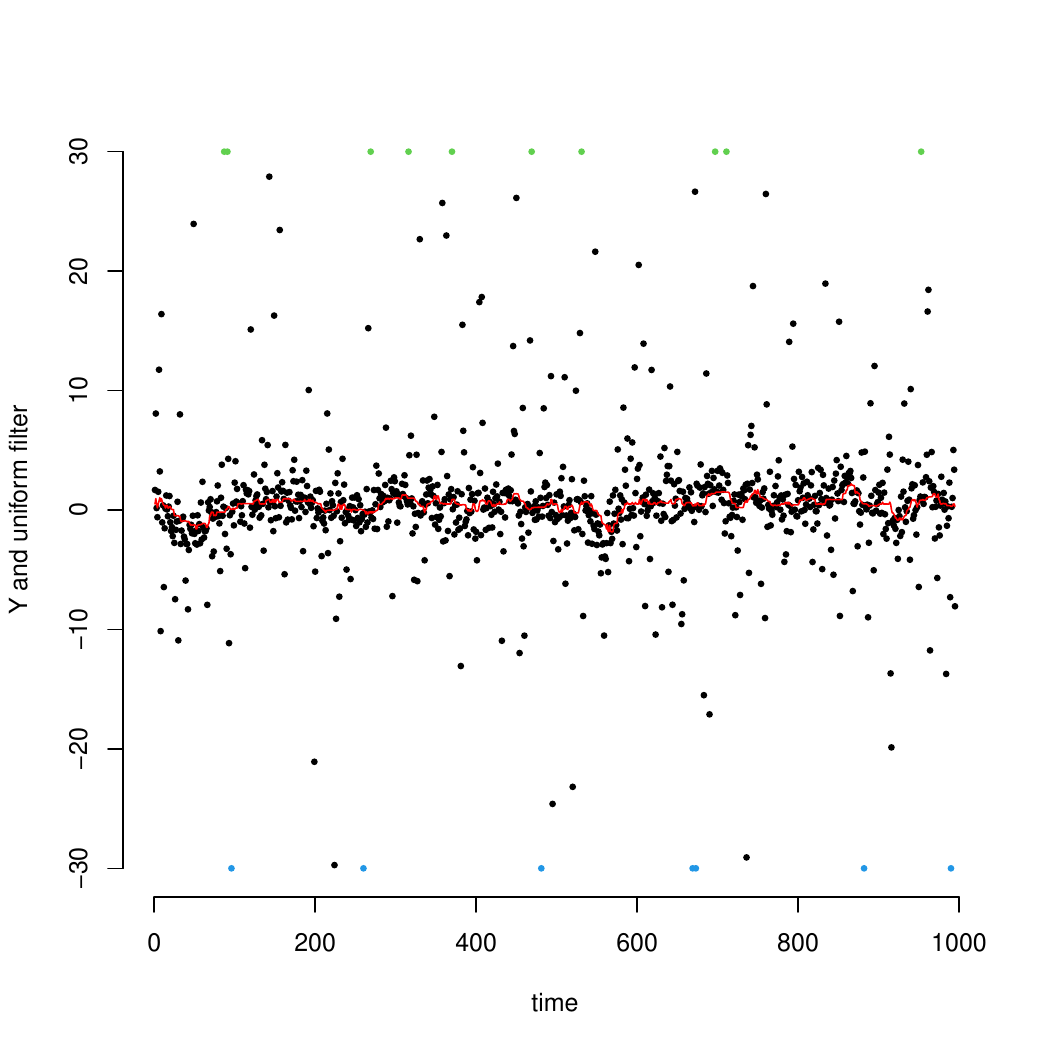} &
        \includegraphics[width=0.25\linewidth]{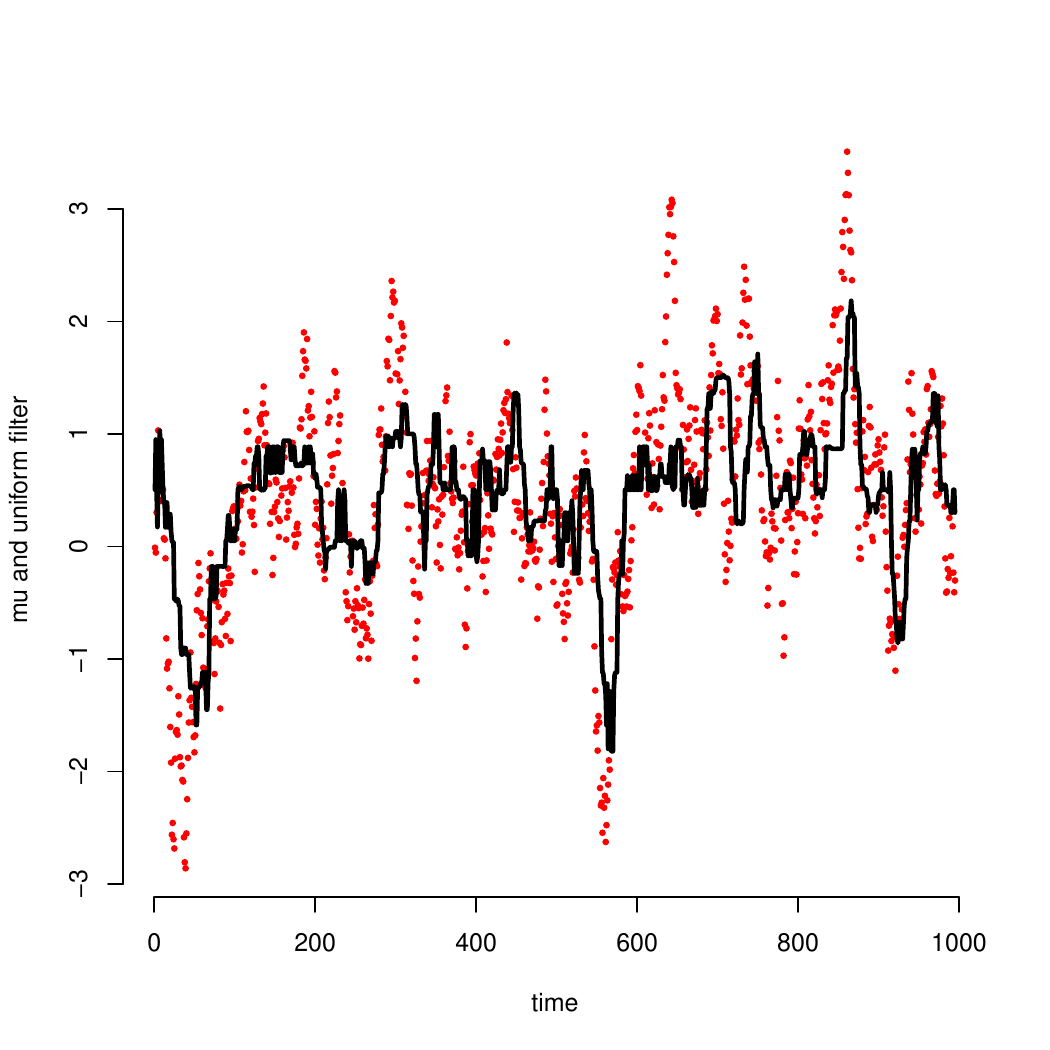} &
    \end{tabular}
    \vspace{-5mm}
    \caption{Stationary case.  LHS: $Y_t$ (dots) \& rolling filter $\theta_t$ (red line) based on $Y_{t-\ell:t}$ plotted against time.  Middle: signal $\mu_t$ (red dots) \& rolling filter $\theta_t$.  RHS: top shows average loss $|Y_t-\theta_{t-1}|-|Y_t|$ against window length $\ell\ge 4$ and middle shows average $|\mu_t-\theta_t|$ estimation error. $\alpha\in \{0.4,0.6,0.8,1.0\}$ shown using black, blue, green (dots) \& orange (line). } 
    \label{fig:medianOldStat}
\end{figure}

To illustrate all of this, the left of Figure \ref{fig:medianOld} shows the sample path of the data and filter $Y_{1:T},\theta_{1:T}$.  Data is truncated at $\pm30$ in each plot (shown in green \& blue, respectively).    The middle graph shows the path of the signal and filter $\mu_{1:T},\theta_{1:T}$.  All of these figures are computed using $T=1{,}000$ and with 3 different values of $\ell \in \{1,7,15\}$. When $\ell=1$, then the $\theta_t = (Y_{t}+Y_{t-1})/2$ and so is entirely non-robust to single data outliers. The performance, consequently, is terrible.  Robustness can only happen when $\ell>1$. The top right of Figure \ref{fig:medianOld} shows $\{1/(T-1)\} \sum_{t=2}^T \{|Y_t-\theta_{t-1}| - |Y_t|\}$ plotted against the hyperparameter $\ell\in \{4,5,...,50\}$ (the results for tiny $\ell$ are terrible and are not plotted). The red line is the oracle version, $\{1/(T-1)\} \sum_{t=2}^T \{|Y_t-\mu_{t}| - |Y_t|\}$, which replaces the feasible predictor $\theta_{t-1}$ by the true signal $\mu_t$. The middle right of Figure \ref{fig:medianOld} shows $\{1/(T-2)\} \sum_{t=3}^T \{|\mu_t-\theta_{t}|\}$ plotted against the hyperparameter $\ell$.  An $\ell \in \{8,...,20\}$ yields reasonable results.

\subsubsection{Stationarity case}\label{sect:uniStat}

Here we use Ex.\ref{ex:running}(b) and set $\alpha=0.6$ in our filter, again with $T=1{,}000$. As the process is stationary the ECDF of $Y_{1:T}$ approximates $F_{Y_1}$.  Hence $\theta_t$ is a weighted median with weights:
$$
w_{t,j} = \frac{1-\alpha}{T} + \alpha \frac{1(0 \le j \le \ell)}{\min(t,1+\ell)},\quad j\in \{t-T,...,t-1\}.
$$
This looks like a smoother, but is a 2-step filter.  The results are in Figure \ref{fig:medianOldStat}.  

The stationary results are very different.  When $\ell=1$ the filter is somewhat robust to noise as the weighted median will involve $Y_{t-1:t}$ with individual weights $0.6/2 + 0.4/T$ while all other data points in $Y_{1:T}$ have weights $0.4/T$. To get an outlier to be the median of this group of data we need multiple outliers close to one another, with one of them occurring at time $t$ or time $t-1$.  Of course, the results for $\ell$ being 7 and 15 are more robust.

The middle pictures suggest having $\ell=7$ seems to yield a filter which best tracks the signal out of these different choices.  The pictures on the right of Figure \ref{fig:medianOldStat} address this issue more systematically, plotting the average loss and the average absolute estimation error against $\ell$, here for $\alpha \in \{0.4,0.6,0.8,1\}$.  The results favor the smaller values of $\alpha$ in both cases.  The results are less sensitive to $\ell\in \{4,...,50\}$ when $\alpha$ is smaller.

\section{Estimator of the filter via simulation}\label{sect:simFilter}    

We use a simulation approach to estimate $\theta_t$ up to user controlled error in $O(1)$ flops, for each $t$.  The entire sweep of the filter can be calculated to arbitrary accuracy in $O(T)$ flops (or even $O(1)$ time on $T$ parallel processors).  The same idea applies to prediction \& smoothing.  

\subsection{Estimator}
     
Recall the estimand $\theta_t$ minimizes weighted lagged losses.   

\begin{definition} Resample $Y_1^*,...,Y_B^*$, where each $Y_b^* \sim F_{t}$. Define the estimator of $\theta_t$ as 
$$
\theta^*_t = \underset{\theta \in \Theta}{\arg }\min \ \widetilde{L}^*(\theta),\quad \text{where} \quad \widetilde{L}^*(\theta) = \frac{1}{B} \sum_{b=1}^B L(\theta,Y_b^*).
$$ 
\end{definition}

Both $\widetilde{L}^*$ and $\theta^*_t$ depend on $B$, yet we suppress this for notational convenience. The $\widetilde{L}^*$ is a simulation based unbiased estimator of $\int_{\mathcal{Y}} L(\theta,y) \mathrm{d}F_{t}(y)$. In the case of simulation samples from $F_{t}$ (that are i.i.d.\ across $b$ conditional on $Y_{1:t}$; Remark \ref{rem:strat}(b)), if $\theta_t$ uniquely minimizes $\int_{\mathcal{Y}} L(\theta,y) \mathrm{d}F_{t}(y)$, which strict convexity delivers for $\alpha \in [0,1)$, then as $B \rightarrow \infty$ the $\theta^*_t -\theta_t \rightarrow 0$ almost surely using standard i.i.d. theory for minimizing average losses based on data from $F_{t}$ \citep[e.g.][]{vanderVaart(98),ChafaiConcordet(07)}. If instead the minimizers form an interval, as can happen for the weighted median when $\alpha=1$, then $\theta^*_t$ converges almost surely to that interval. In practice values of $B$ might be 1,000, or 10,000.  Finally, for a strictly stationary time series the setup of Remark \ref{rem:Fprop}(e) could be the basis of the study of uniform convergence as $B \rightarrow \infty$ for an infinite $T$, but that setup is beyond the scope of this paper.   

\begin{example}
If $L$ is the check function then  $\theta^*_t$ is the empirical quantile of $Y_{1:B}^*$ \citep[e.g.][]{Koenker(05)}.  
If $L$ is $\rho$-type, then $\theta^*_t$ is the M-estimate \citep{Huber(64)}.
\end{example}

\subsection{Sampling from $F_t$} 

Algorithm \ref{alg:F all} is a generic sampler of $Y_1^*,...,Y_B^*$ from $F_t$, regarding $Y_{1:t}$ as fixed, recalling  $F_{t} = (1-\alpha) F_{Y_1} + \alpha \hat{F}_{t},$ and $\hat{F}_{t}(y) = \sum_{j=0}^{t-1} w_{t,j} 1(Y_{t-j} \le y).
$ The proof is by the method of composition.
\begin{algorithm}[Generate $B$ samples from $F_{t}$]\label{alg:F all}  Draw $B_1 \sim {\tt Binomial}(B,\alpha)$ and $B_2=B-B_1$. Draw $Y_{B_1+1}^*,...,Y_{B}^*$ from $F_{Y_1}$.  Then draw $(Y_1^*=Y_{t-j_1}),...,(Y_{B_1}^*=Y_{t-j_{B_1}})$, a weighted resample of $Y_{1:t}$ of size $B_1$ from $\hat{F}_t$, where the $b$-th lag length $j_b$ has 
$$
P(j_b=k)=w_{t,k},\quad b=1,...,B_1,\quad k\in \{0,1,...,t-1\}.
$$
\end{algorithm}

\begin{remark}\label{rem:strat}
    
(a) So long as $j_b$ can be sampled in $O(1)$, that is not depending upon $t$, then Algorithm \ref{alg:F all} costs $O(B)$ and so its cost is not impacted by $t$.  

(b) To force the draws from Algorithm \ref{alg:F all} to be i.i.d. the $Y_1^*,...,Y_B^*$ can be shuffled.  Instead, Algorithm \ref{alg:F all} can be altered, replacing the single binomial draw by a series of i.i.d. Bernoulli draws, each deciding if we draw from $F_{Y_1}$ or $\hat{F}_t$.  

(c) Some researchers will model or know $F_{Y_1}$. The $F_{Y_1}$ will often be estimated by the ECDF for $Y_{1:T}$ --- which can be sampled with replacement from $Y_{1:T}$, satisfying Algorithm \ref{alg:F all}.   

(d) With access to weighted M-estimator software, it may be more efficient to stratify, replacing many tiny weights with a few samples instead. For simplicity take $\alpha=1$, then define   
$$
\grave{L}^*(\theta) = \sum_{j=0}^h w_{t,j} L(\theta,Y_{t-j}) 
+ \left(1-\sum_{j=0}^h w_{t,j} \right)\frac{1}{B} \sum_{b=1}^B L(\theta,Y_b^*),\quad \theta \in \Theta,
$$
where $(Y_1^*=Y_{t-j_1}),...,(Y_{B}^*=Y_{t-j_{B}}),$ where $P(j_b=k\,|\,j_b>h)\propto w_{t,k}$ for $k=h+1,...,t-1$. This structure is used in our applied work, each time taking $h=W_t^{-1}(0.999)$ \& $B=100$.

(e) Sampling from $\hat{F}_t$ is identical to sampling the lag length $j_b^*$ with probability $P(j_b^*=k) = w_{t,k}$ and returning $Y_b^* = Y_{t-j_b^*}$. It was this version which was presented in the introduction.  
\end{remark}

\subsection{Properties of $\theta_t^*$}

The $\theta_t^*$ can be thought of as minimizing empirical loss from a sample $Y_1^*,...,Y_B^*$ whose marginal distribution is from $F_t$.  The unusual aspect of the setup is that there are jumps in $F_t$, corresponding to the distinct elements in $Y_{1:t}$, so long as $\alpha>0$.  

Focus on the special cases: the mean \& median. The mean has a differentiable loss function \& so is smooth in the data \& the weights. The median has a continuous loss function, but with points that are not differentiable.  Hence the median is not smooth in the data or weights.    

\subsubsection{Mean case}

Theorem \ref{thm:mean} characterizes the difference between $\theta^*_{t}$ and $\theta_{t}$ in the mean case with $\alpha=1$.

\begin{theorem}\label{thm:mean} In the quadratic loss case with $\alpha=1$, then 
$
\theta_{t} = \sum_{j=0}^{t-1} w_{t,j} Y_{t-j}.
$
The $\theta^*_t$ is conditionally unbiased for $\theta_t$, given $Y_{1:t}$.  Further, as $B \rightarrow \infty$, then 
\begin{align*}
\sqrt{B}(\theta^*_t-\theta_{t})|Y_{1:t} & \xrightarrow{D} N(0,\sigma_{t}^2),\quad
\text{where}\quad \sigma_{t}^2 = \sum_{j=0}^{t-1} w_{t,j} (Y_{t-j} - \theta_{t})^2.
\end{align*}
\end{theorem}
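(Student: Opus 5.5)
The plan is to reduce everything to the sample mean of i.i.d.\ draws from a discrete distribution and then invoke the classical central limit theorem. First I would identify $\theta_t$. With $L(\theta,y)=(y-\theta)^2$ (or the centred version $(y-\theta)^2-y^2$, which changes nothing in $\theta$) and $\alpha=1$, Proposition \ref{thm:F} gives $Q_t(\theta)=\sum_{j=0}^{t-1} w_{t,j}(Y_{t-j}-\theta)^2$. This is a strictly convex quadratic in $\theta$, because the weights sum to one. Setting the derivative to zero gives the unique minimizer $\theta_t=\sum_{j=0}^{t-1} w_{t,j}Y_{t-j}$, which is the mean of the discrete distribution $\hat F_t=F_t$.

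Next I would identify $\theta^*_t$. The same first-order condition applied to $\widetilde L^*(\theta)=B^{-1}\sum_b (Y_b^*-\theta)^2$ gives $\theta^*_t=\bar Y^*=B^{-1}\sum_{b=1}^B Y_b^*$. When $\alpha=1$, Algorithm \ref{alg:F all} has $B_1=B$ almost surely. Hence, conditional on $Y_{1:t}$, the $Y_b^*=Y_{t-j_b}$ are i.i.d.\ draws with $P(j_b=k)=w_{t,k}$; by Remark \ref{rem:strat}(b) the draws are i.i.d., and shuffling is not even needed here.

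Conditional on $Y_{1:t}$ each draw satisfies $\mathbb{E}[Y_b^*\mid Y_{1:t}]=\sum_k w_{t,k}Y_{t-k}=\theta_t$. Linearity of expectation then gives conditional unbiasedness, $\mathbb{E}[\theta^*_t\mid Y_{1:t}]=\theta_t$. The conditional variance of a single draw is $\mathrm{Var}(Y_b^*\mid Y_{1:t})=\sum_k w_{t,k}(Y_{t-k}-\theta_t)^2=\sigma_t^2$. This is finite, since $Y_{1:t}$ is fixed and the support is the finite set $\{Y_1,\dots,Y_t\}$; so no moment assumption on the underlying time series is needed, which is the one point that deserves care given the paper's heavy-tailed (e.g.\ Cauchy) motivation.

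The Lindeberg--L\'evy CLT, applied conditionally on $Y_{1:t}$ to these i.i.d.\ bounded variables, then yields $\sqrt B(\theta^*_t-\theta_t)\mid Y_{1:t}\xrightarrow{D}N(0,\sigma_t^2)$. If $\sigma_t^2=0$ (all weighted points are equal), the draws are degenerate and the limit is the point mass $N(0,0)$, so that case is consistent with the statement. I expect no genuine obstacle. The only subtle step is to state clearly that all probability statements are with respect to the resampling randomness with $Y_{1:t}$ held fixed, so that the finiteness of $\sigma_t^2$ comes from the finite support rather than from any property of the distribution of $Y_t$.
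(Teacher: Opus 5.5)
Your proof is correct. It reaches the result by a slightly different route from the paper.

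The paper does not work with the draws $Y_b^*$ directly. It proceeds in three steps:
\begin{enumerate}
\item It rewrites $\theta^*_t=\sum_{j=0}^{t-1} w^*_{t,j}Y_{t-j}$ with randomized weights $w^*_{t,j}=B^{-1}\sum_{b=1}^B 1(j_b=j)$, and notes that the counts $(Bw^*_{t,0},\dots,Bw^*_{t,t-1})$ are multinomial given $Y_{1:t}$.
\item It applies the multivariate CLT for multinomial proportions, with variances $w_{t,j}(1-w_{t,j})$ and covariances $-w_{t,j}w_{t,i}$.
\item It obtains $\sigma_t^2$ for the linear combination by expanding $\sum_j w_{t,j}Y_{t-j}^2-\sum_i\sum_j w_{t,j}w_{t,i}Y_{t-j}Y_{t-i}$.
\end{enumerate}

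Your argument is the one-dimensional projection of that. You apply a scalar Lindeberg--L\'evy CLT to the i.i.d.\ draws and read the variance off directly as that of the discrete distribution $\hat F_t$.

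What each route buys:
\begin{itemize}
\item Yours is shorter. It avoids the joint weight limit, which is singular because the $w^*_{t,j}$ sum to one, and it avoids the implicit continuous-mapping step.
\item The paper's gives a loss-free description of the resampler as random multinomial weights on the data. That object is reused in the median case (Theorem~\ref{thm:dist of theta}), where $B\sum_j w^*_{t,j}1(Y_{t-j}\le c)$ is binomial. It would also feed a delta-method argument for estimators that are smooth but nonlinear functions of the weights, whereas your argument relies on $\theta^*_t$ being a sample mean.
\end{itemize}
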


\begin{proof} The 
\begin{align*}
\theta^*_t &= \frac{1}{B} \sum_{b=1}^B Y_b^*  = \sum_{j=0}^{t-1} w^*_{t,j} Y_{t-j},\quad w^*_{t,j}=  \frac{1}{B} \sum_{b=1}^B 1(j_b = j),
\end{align*}
which has randomized weights.  The counts $(Bw^*_{t,0},...,Bw^*_{t,t-1})|Y_{1:t}$ are multinomial with $B$ trials \& probabilities $(w_{t,0},...,w_{t,t-1})$.  Hence the weights are conditionally unbiased \&, by the multivariate central limit theorem for multinomial proportions, jointly in $j$ as $B\rightarrow \infty$,
$$
\sqrt{B}(w^*_{t,j} - w_{t,j}) | Y_{1:t}  \xrightarrow{D} N(0, w_{t,j}(1- w_{t,j})),\quad B\,{\mathrm{Cov}}(w^*_{t,j},w^*_{t,i}|Y_{1:t}) = -w_{t,j}w_{t,i},\quad i\ne j.
$$
Further, $\theta^*_t$ is conditionally unbiased for $\theta_{t}$ with 
\begin{align*}
\sqrt{B}(\theta^*_t-\theta_{t}) & = \sum_{j=0}^{t-1} \sqrt{B}(w^*_{t,j} - w_{t,j}) Y_{t-j} \xrightarrow{D} N(0,\sigma_{t}^2),
\end{align*}
where, conditioning throughout on $Y_{1:t}$, 
\begin{align*}
\sigma_{t}^2 &= \sum_{j=0}^{t-1} w_{t,j}Y_{t-j}^2 - \sum_{j=0}^{t-1} \sum_{i=0}^{t-1} w_{t,j} w_{t,i} Y_{t-j} Y_{t-i} 
= \sum_{j=0}^{t-1} w_{t,j} (Y_{t-j} - \theta_{t})^2.
\end{align*} \end{proof}

Under the stratification of Remark \ref{rem:strat}(d) only the tail is sampled, so the same limit holds with $\sigma_t^2$ replaced by $\{1-W_t(h)\}\sum_{j>h} w_{t,j}(Y_{t-j}-m_{t,h})^2$, $m_{t,h}=\sum_{j>h} w_{t,j}Y_{t-j}/\{1-W_t(h)\}$.

\subsubsection{Median case}\label{sect:medianCase}

Focus on the $\alpha=1$ with odd $B$ and $Y_{1:t}$ being distinct.  Take a median to be the smallest value at which the cumulative weight reaches 0.5.  Then $\theta^*_{t}\in Y_{1:t}$ is the median of $Y_{1:B}^*$ \& $\theta_{t}\in Y_{1:t}$ is the weighted median of $Y_{1:t}$. 
Theorem \ref{thm:dist of theta} gives the CDF of $\theta_{t}^{\ast }|Y_{1:t}$.  It implies: 
\[
P(\theta_{t}^{\ast }=\theta_{t}|Y_{1:t})=F_{\mathtt{Bin%
}(B,\widehat{F}_{t}(\theta_{t}-))}\left( \lceil B/2\rceil-1\right) -F_{%
\mathtt{Bin}(B,\widehat{F}_{t}(\theta_{t}))}\left( \lceil B/2\rceil-1\right),
\]%
where {\tt Bin} denotes a binomial.  
Now 
$
\widehat{F}_{t}(\theta_{t})-\widehat{F}_{t}(\theta%
_{t}-)=w_{t,\widetilde{k}},$
where
$\widetilde{k}$ is such that 
$Y_{t-\widetilde{k}}=\theta_{t}.$ (With ties the jump is the summed weight of the tied values.)
This could be large if $\widetilde{k}$ is small. 

\begin{figure}[htbp]
    \centering
\includegraphics[width=0.32\linewidth]{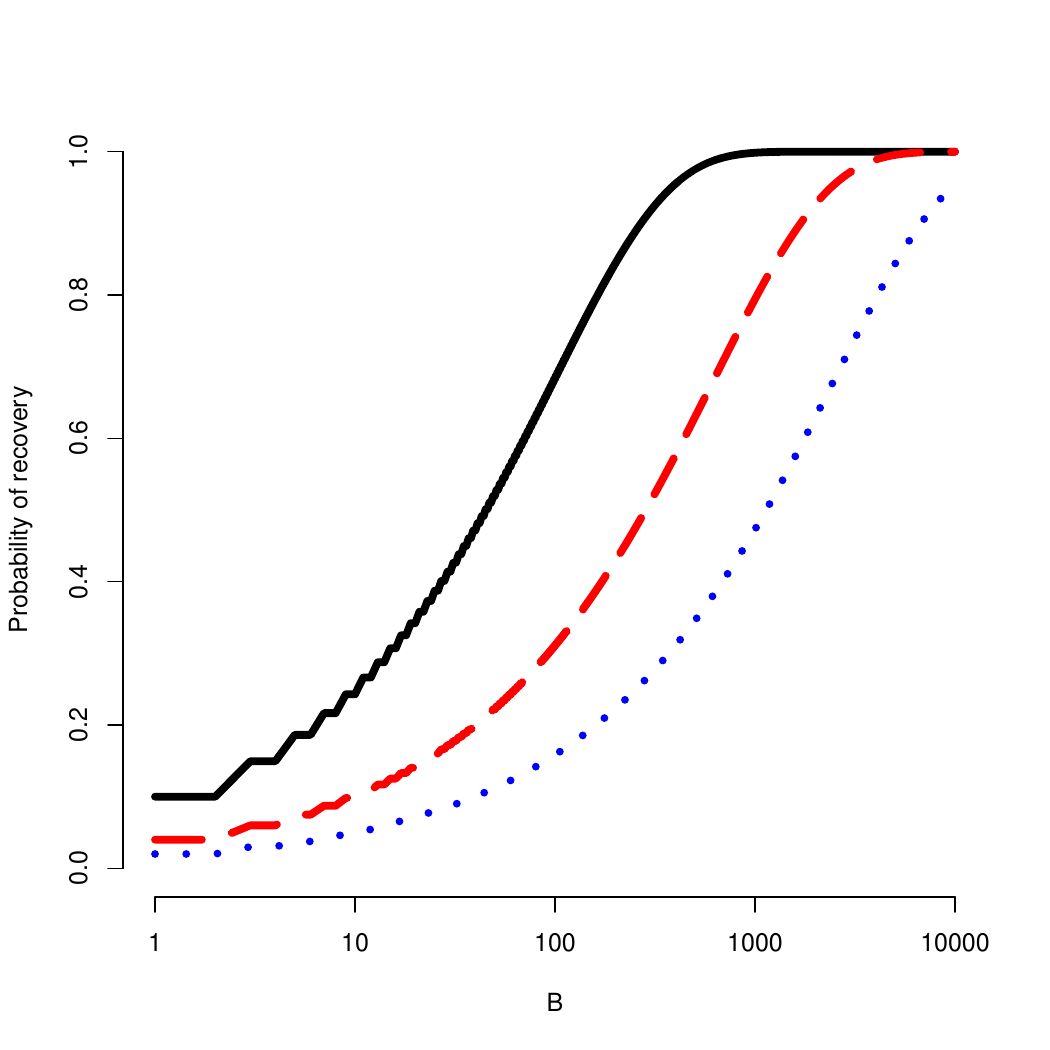} 
    \caption{$P(\theta_{t}^{\ast }=\theta_{t}|Y_{1:t})$, as a function of $B$, on the log scale.  Black full line:  $\epsilon =0.05$, taking $\widehat{F}_{t}(\theta_{t}-)=0.5-\epsilon$ \& $\widehat{F}_{t}(\theta_{t})=0.5 + \epsilon$.  Red dashed line: $\epsilon=0.02$.  Blue dotted line: $\epsilon= 0.01$.  Roughly $2\epsilon$ corresponds to the size of the weight at $\theta_{t}$.} 
    \label{fig:recover}
\end{figure}

Figure \ref{fig:recover} shows a stylized version of this taking $\widehat{F}_{t}(\theta_{t}-)=0.5 -\epsilon$ and $\widehat{F}_{t}(\theta_{t})=0.5+\epsilon$, where $\epsilon \in \{0.05,0.02,0.01\}$.  The resulting $P(\theta_{t}^{\ast }=\theta_{t}|Y_{1:t})$ is plotted against $B$.  This shows the probability drives upwards towards 1 quickly as $B$ increases, even when $\epsilon$ is quite small. 
\begin{theorem}
\label{thm:dist of theta}For the median filter with $\alpha=1$, writing $\lceil x \rceil$ as the ceiling of $x$, then  
\[
P(\theta_{t}^{\ast }\leq c|Y_{1:t})=1-F_{\mathtt{Bin}(B,\widehat{F}_{t}(c))}\left( \lceil B/2\rceil-1\right) ,\quad \widehat{F}_{t}(c)=\sum_{j=0}^{t-1}w_{t,j}1(Y_{t-j}\leq c),\quad c\in \mathbb{R}.
\]
\end{theorem}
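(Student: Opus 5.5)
The approach is the classical order-statistic argument. Conditional on $Y_{1:t}$, with $\alpha=1$, the draws $Y_1^*,\dots,Y_B^*$ are i.i.d.\ from $\widehat{F}_t$ (Algorithm \ref{alg:F all} with $B_1=B$, or after the shuffle of Remark \ref{rem:strat}(b)). Fix $c\in\mathbb{R}$ and define the count
$$N(c)=\sum_{b=1}^B 1(Y_b^*\le c).$$
Each indicator is Bernoulli with success probability $P(Y_b^*\le c\,|\,Y_{1:t})=\sum_{j=0}^{t-1}w_{t,j}1(Y_{t-j}\le c)=\widehat{F}_t(c)$. This holds because $Y_b^*=Y_{t-j_b}$ with $P(j_b=k)=w_{t,k}$. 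By conditional independence, $N(c)\,|\,Y_{1:t}\sim\mathtt{Bin}(B,\widehat{F}_t(c))$.

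The key step is the event identity linking $\theta_t^*$ to $N(c)$. Under the convention of Section \ref{sect:medianCase}, $\theta_t^*$ is the smallest value at which the cumulative empirical weight of $Y^*_{1:B}$ reaches $1/2$. That is the order statistic $Y^*_{(m)}$ with $m=\lceil B/2\rceil$: the smallest $k$ with $k/B\ge 1/2$ is $k=\lceil B/2\rceil$, which works for odd or even $B$ and with ties among the $Y^*_b$. Then
$$\{\theta_t^*\le c\}=\{Y^*_{(m)}\le c\}=\{N(c)\ge m\}.$$
If at least $m$ draws are $\le c$, the $m$-th smallest is $\le c$, and conversely.

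The result then follows directly:
$$P(\theta_t^*\le c\,|\,Y_{1:t})=1-P(N(c)\le m-1)=1-F_{\mathtt{Bin}(B,\widehat{F}_t(c))}(\lceil B/2\rceil-1).$$

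The only delicate point is this event identity, specifically that the paper's definition of the median as the smallest value reaching cumulative weight $1/2$ coincides with $Y^*_{(\lceil B/2\rceil)}$ under ties. Ties occur with positive probability here, since the draws resample a finite set. I would handle this by writing the sorted sample explicitly and checking that the cumulative weight $k/B$ first reaches $1/2$ at index $k=\lceil B/2\rceil$. Tied values share the same numerical value, so the identity $\{Y^*_{(m)}\le c\}=\{N(c)\ge m\}$ is unaffected.

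As a check, the displayed formula for $P(\theta_t^*=\theta_t\,|\,Y_{1:t})$ follows by differencing the CDF at $c=\theta_t$ and at $c\uparrow\theta_t$, using $\widehat{F}_t(\theta_t-)$ for the left limit.
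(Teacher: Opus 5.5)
Your proof is correct and is essentially the paper's argument. Both reduce the claim to the event identity $\{\theta_t^*\le c\}=\{B_t^*(c)\ge \lceil B/2\rceil\}$, where the count satisfies $B_t^*(c)\,|\,Y_{1:t}\sim\mathtt{Bin}(B,\widehat{F}_t(c))$. The paper justifies that identity through the sign of the nondecreasing subgradient $\widehat{s}_t(c)=B_t^*(c)/B-1/2$ of the resampled absolute loss, whereas you use the order statistic $Y^*_{(\lceil B/2\rceil)}$; your handling of ties and even $B$ is, if anything, slightly more explicit.
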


\begin{proof}
Under this median convention $\theta_{t}^{\ast },\theta_{t}\in Y_{1:t}$ are unique. Write  
\[
B_{t}^{\ast }(c)=B\sum_{j=0}^{t-1}w^{\ast}_{t,j}1(Y_{t-j}\leq c),\quad
B_{t}^{\ast }(c)|Y_{1:t}\sim \mathtt{Bin}(B,\widehat{F}_{t}(c)).
\]%
We will work with $c$'s non-decreasing subgradient of $\frac{1}{2}\sum_{j=0}^{t-1} w^{\ast}_{t,j}|Y_{t-j}-c|$: 
\begin{eqnarray*}
\widehat{s}_{t}(c) &=&\sum_{j=0}^{t-1}w^{\ast}_{t,j}g(c,Y_{t-j})=\frac{1}{B}B_{t}^{\ast }(c)-\frac{1}{2},\quad
g(c,y)=1(y\leq c)-\frac{1}{2}\in \left\{ 1/2,-1/2\right\}.  
\end{eqnarray*}%
Then $\theta_{t}^{\ast }>c$ is the same event as $\widehat{s}_{t}(c)<0$. Thus 
\begin{eqnarray*}
P(\theta_{t}^{\ast } &\leq &c|Y_{1:t})=1-P(\theta_{t}^{\ast }>c|Y_{1:t})=1-P(\widehat{s}_{t}(c)<0|Y_{1:t})=1-P\left(
B_{t}^{\ast }(c)<\frac{B}{2}|Y_{1:t}\right)  \\
&=&1-P\left( B_{t}^{\ast }(c)\leq \lceil B/2\rceil-1|Y_{1:t}\right) .
\end{eqnarray*}%
Rewriting, this is the main stated result in the Theorem. \end{proof}

\begin{example} Use the nonstationary case of Ex.\ref{ex:running}(a) \& set $\alpha=1$.  Focus on the uniform weight case, with $\ell$ lags, \& on the rolling median.  In this case there is no need to use simulation.  We do it anyway: it teaches some important practical lessons.  The $Y_{1:B}^*$ samples with replacement from $Y_{t-\ell:t}$, after which we compute $\theta^*_t$ based on $Y_{1:B}^*$. The $Y_{1:B}^*$ have at most $\ell+1$ points of support.  Whether $B$ is even or odd is not so important now as $B$ will be very large.  What matters is $\ell$. If $\theta_t^*$ is taken as the median convention of Section \ref{sect:medianCase}, then as $B\rightarrow \infty$, 
$
P(\theta_t^* = \underline{\theta}_t|Y_{1:t}) \rightarrow 1/2$ and 
$P(\theta_t^* = \overline{\theta}_t|Y_{1:t}) \rightarrow 1/2,
$
if $\ell$ is odd, while when $\ell$ is even then $P(\theta_t^* = \theta_t|Y_{1:t})\rightarrow 1.$ The odd case is not ideal.         

Instead, inspired by the Hazen interpolative version of a median, first compute 
$
w^*_{t,j} = \frac{1}{B}\sum_{b=1}^B 1(Y_b^* = Y_{t-j}),$ where $j=0,...,\ell$ 
and then compute the weighted median of $(Y_t,w^*_{t,0}),...,$ $(Y_{t-\ell},w^*_{t,\ell}),$ using interpolation by the method detailed in Remark \ref{remark:median}.        
\end{example}

\subsection{Fast sampling of the lag length}\label{sect:lag}

We need to sample the lag $j$ with probability 
$
P(j=k) = w_{t,k}$ for $k=0,1,2,...,t-1,$
in a cost which does not depend upon $t$. Notice this task is free of data and loss function.        

Sampling from discrete random variables is classical in Monte Carlo, e.g. \cite{Ripley(87)} and Ch. 3 of \cite{Devroye(86)}.  In the uniform, exponential \& hyperbolic cases the quantile function of the weights is available and so by inverse transform sampling $j \sim W_t^{-1}(U)$, where $U\sim U(0,1)$.  The superposition case can be carried out using the composition method.

\subsubsection{Example: exponential weights and the median filter}\label{ex:expweights}

Let  
$
L(\theta,y) = |y - \theta|-|y|.
$
Then $\theta^*_t$ is the median of $Y_{1:B}^*$. 
We use the nonstationary case Ex.\ref{ex:running}(a), with $\alpha=1$, and exponential weights. 

\begin{figure}[htbp]
    \centering
    \begin{tabular}{@{}c@{\hspace{-0.25cm}}c@{\hspace{-0.25cm}}c@{\hspace{-0.25cm}}c@{}}
        & \textbf{$Y_t,\theta_t^*$} & \textbf{$\mu_t,\theta_t^*$} & \textbf{Average loss} \\
        \raisebox{2.0cm}[0pt][0pt]{{\small $\lambda=0.9$\hspace{0.3cm}}} &
        \includegraphics[width=0.25\linewidth]{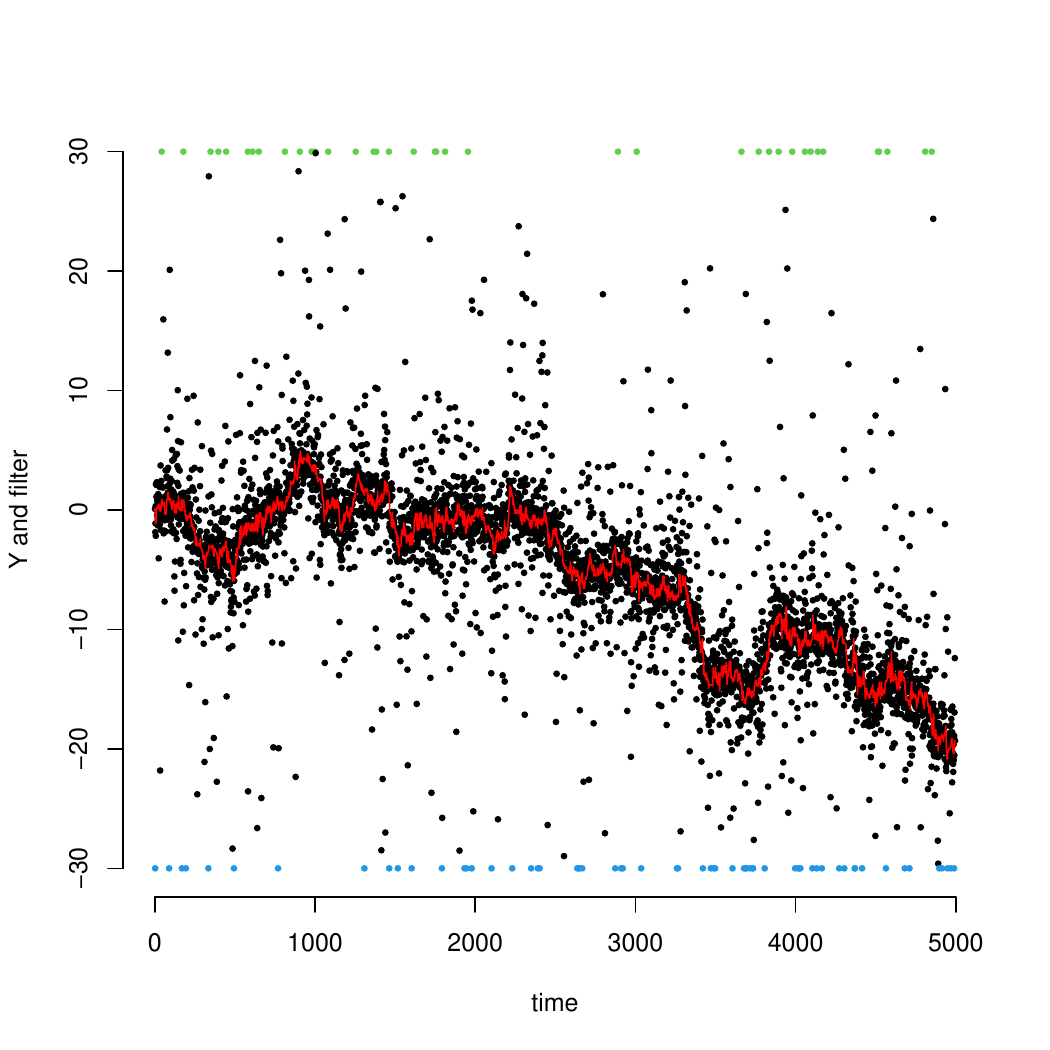} &
        \includegraphics[width=0.25\linewidth]{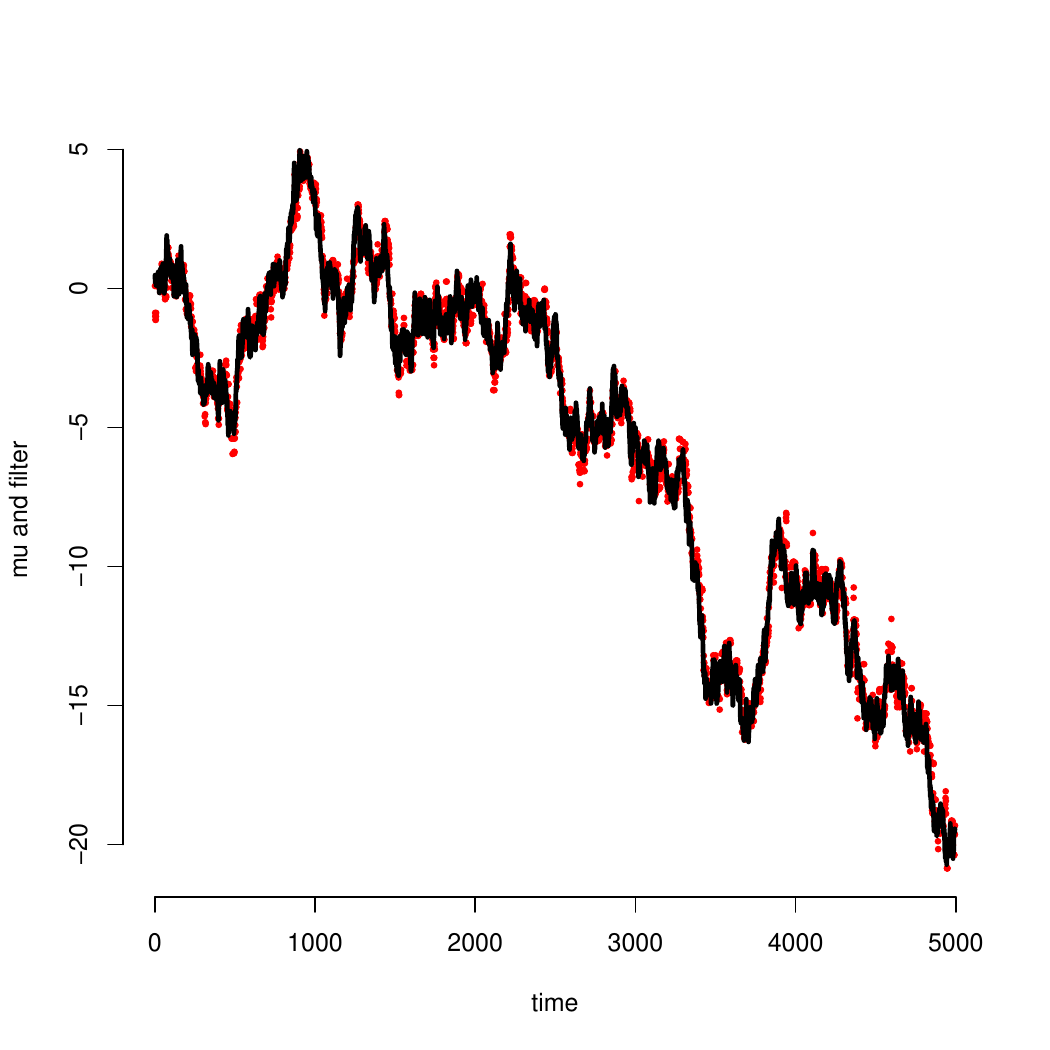} &
        \includegraphics[width=0.25\linewidth]{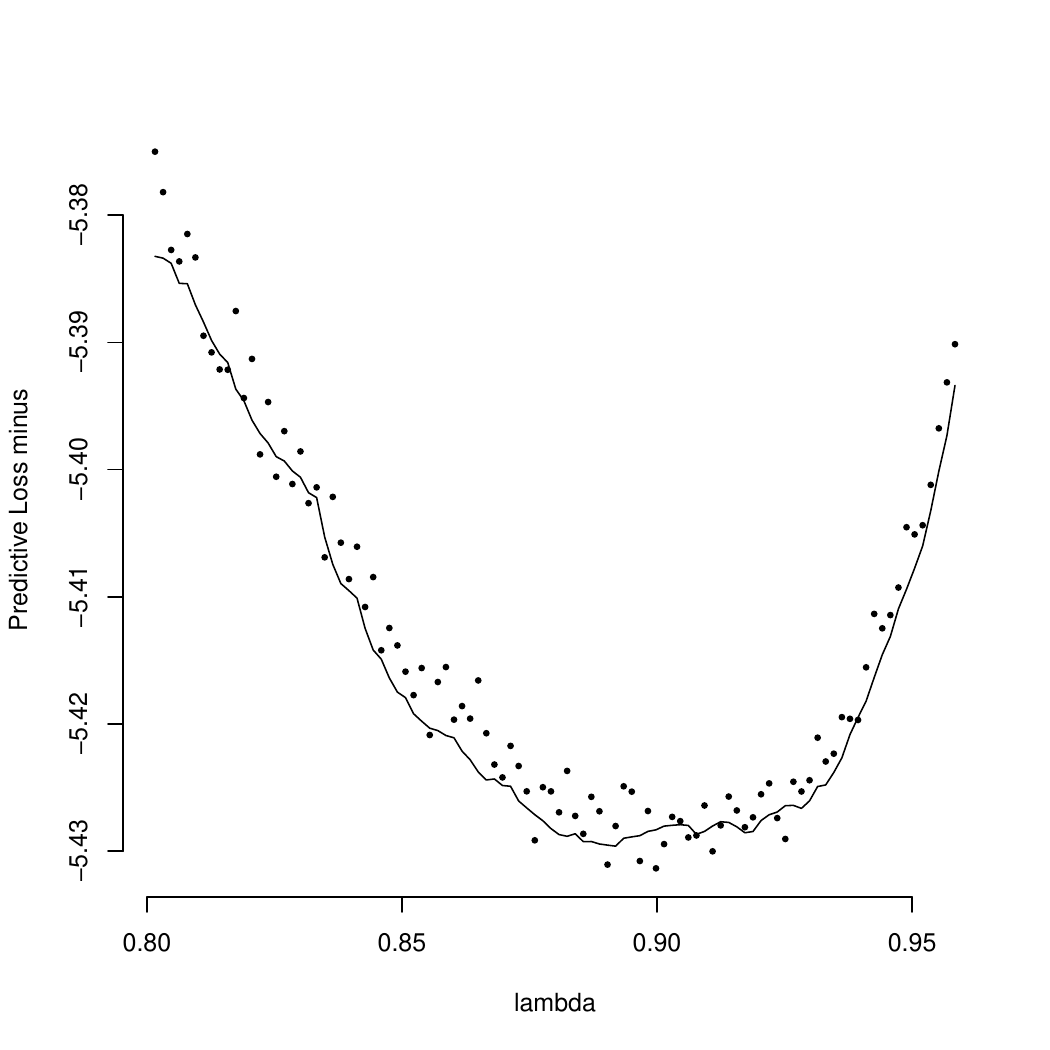} \\
    \end{tabular}
    \vspace{-5mm}
    \caption{Median filter using $B=500$.  Left hand side: the data $Y_t$ \& the median filter $\theta_t^*$ plotted against time, run with $\lambda=0.9$.  In the plot the data is truncated at $\pm30$ (shown in green and blue, respectively).    Middle: the signal $\mu_t$ and the median filter $\theta_t^*$ run with $\lambda=0.9$.  RHS: average loss plotted against $\lambda$ using $\theta_t^*$ (dotted line) \& $\theta_t$ (solid line). } 
    \label{fig:median}
\end{figure}

The left of Figure \ref{fig:median} shows the data and $\theta_t^*$ based on $B=500$ and $\lambda=0.9$.  The heavy tailed data is truncated in the picture (only) at $\pm30$ and represented by green and blue dots. The result is as expected, the filter is entirely robust to the heavily tailed noise.  The middle graph shows the signal $\mu_t$ and $\theta_t^*$, indicating close alignment.

The choice of $\lambda=0.9$ is ad hoc.  A reasonable way to select a value of $\lambda$ from data is to minimize an average predictive loss, compared to the loss with a zero predictor.    The right of Figure \ref{fig:median} shows $\lambda=0.9$ is not poor by plotting$
\frac{1}{T-1} \sum_{t=2}^T \{|Y_t-\theta_{t-1}^*| - |Y_t|\}$ as well as $
\frac{1}{T-1} \sum_{t=2}^T \{|Y_t-\theta_{t-1}| - |Y_t|\}.$
Neither is continuous in $\lambda$ as $\theta_{t-1}$ is not continuous in $\lambda$.  As $\lambda$ moves, $\theta_{t-1}$ jumps through $Y_{1:t}$. 

\begin{figure}[htbp]
    \centering
    \begin{tabular}{@{}c@{\hspace{-0.25cm}}c@{\hspace{-0.25cm}}c@{\hspace{-0.25cm}}c@{}}
        & \textbf{$B=500$} & \textbf{$B=5{,}000$} & \textbf{$B=50{,}000$} \\
        \raisebox{2.0cm}[0pt][0pt]{{\small $\lambda=0.9$\hspace{0.3cm}}} &
        \includegraphics[width=0.25\linewidth]{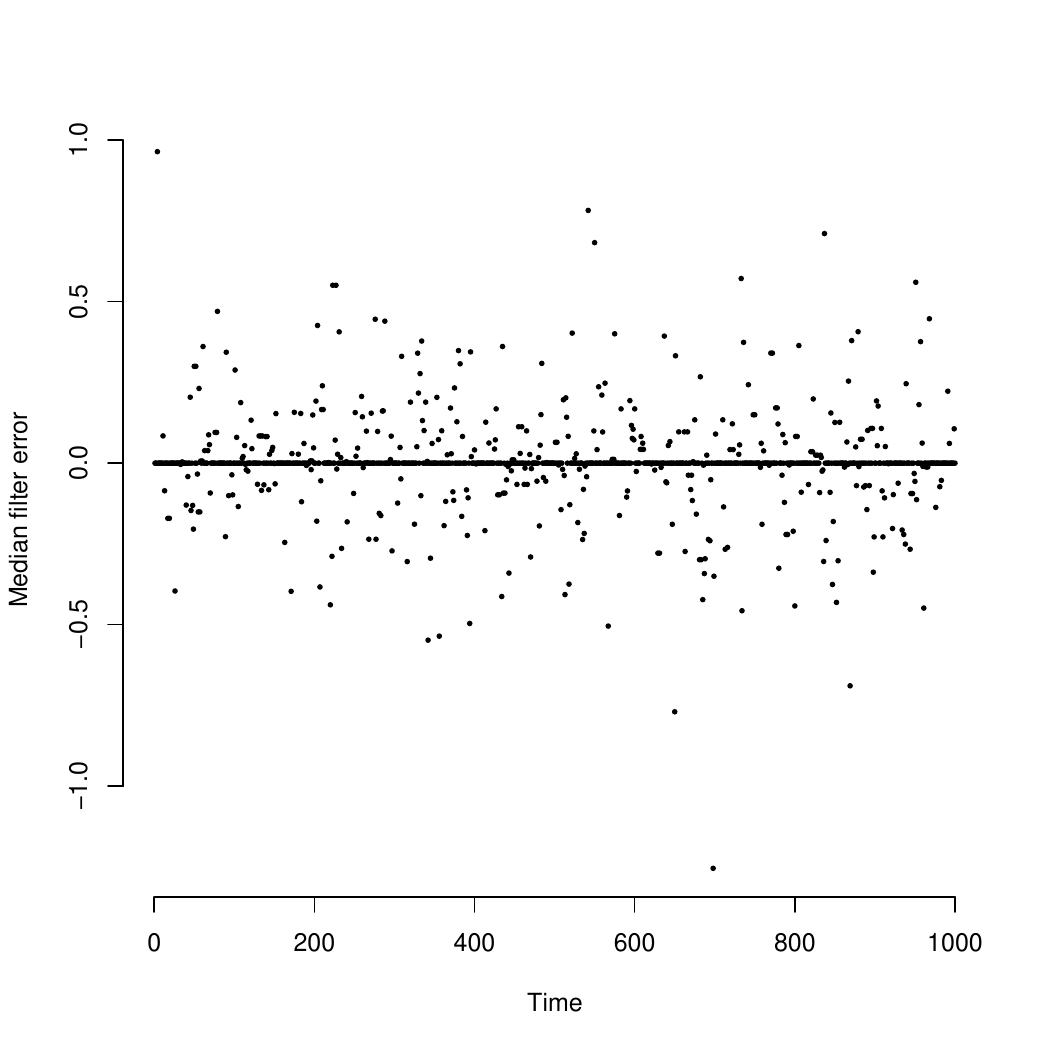} &
        \includegraphics[width=0.25\linewidth]{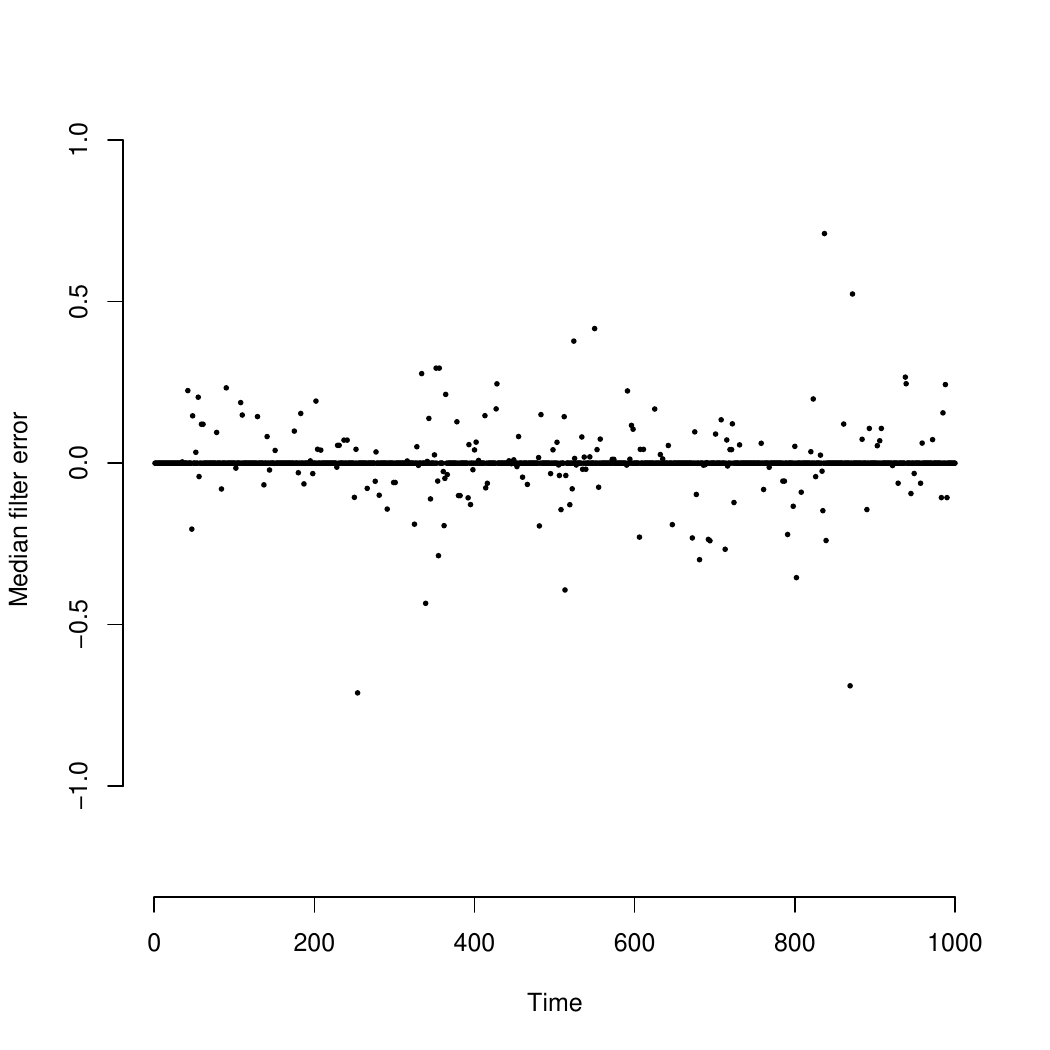} &
        \includegraphics[width=0.25\linewidth]{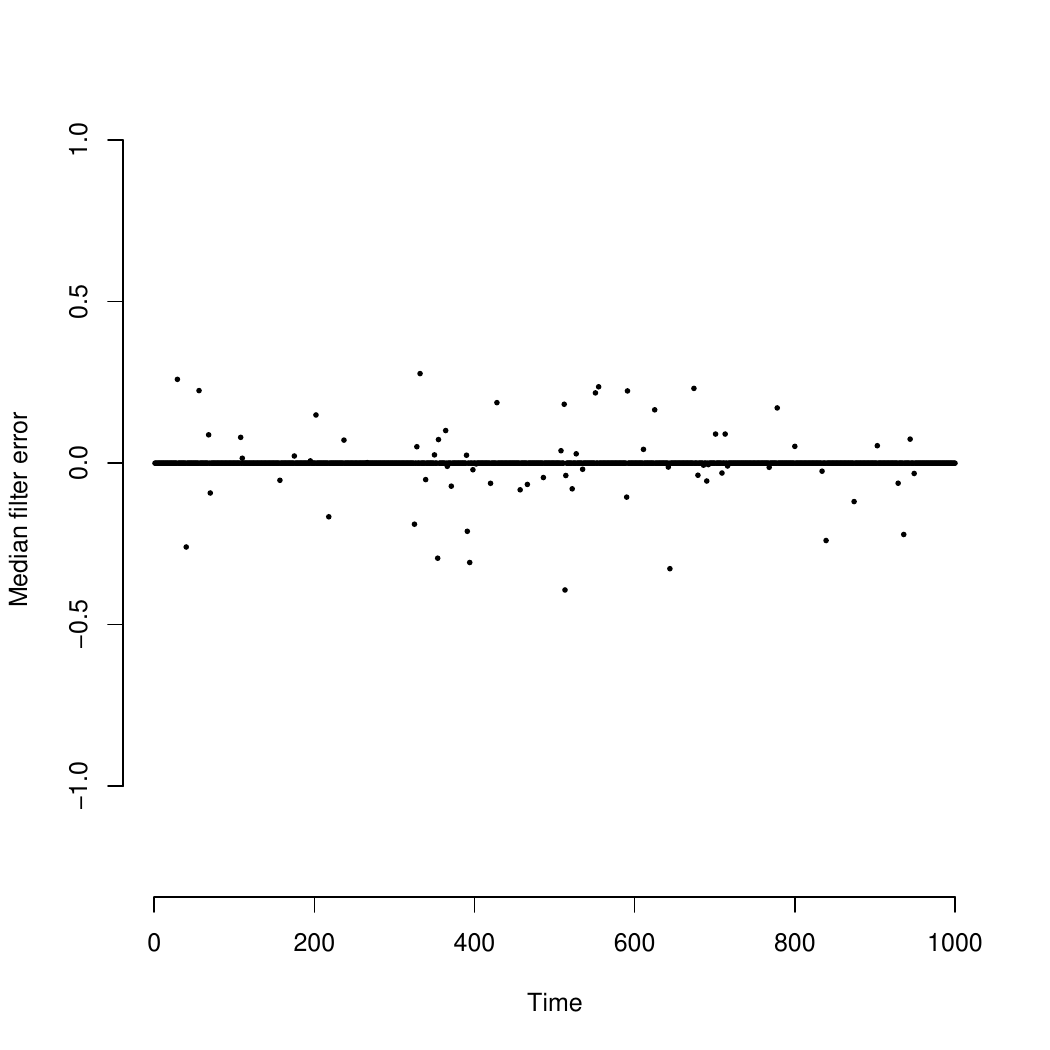} \\
    \end{tabular}
    \vspace{-5mm}
    \caption{Median filter error  $\theta^*_t - \theta_t$ implemented using the lag sampler with $B=500$, $B=5{,}000$ and $B=50{,}000$, compared to the slow analytic median filter $\theta_t$. All with $T=1{,}000$. } 
    \label{fig:medianErr}
\end{figure}

Figure \ref{fig:medianErr} plots $\theta^*_t - \theta_t$ for different values of $B$, using data up to time $1{,}000$ (the scatter is harder to see when $T=5{,}000$).  As $B\rightarrow \infty$ the errors get smaller in magnitude (as expected), with the probability of the error being 0 going to 1, as was formalized in Theorem \ref{thm:dist of theta}.     

\begin{remark} 

(a) The important innovation is simulating the lag $j$, with a truncated geometric distribution.  This algorithm resamples past data, with the weights determined by $\lambda$.  

(b) If $\lambda \in [0,1)$, then as $t\rightarrow \infty$ the $j \xrightarrow{D} {\tt Geometric}(\lambda),$ a geometric random variable.

(c) The $\theta^*_{1:T}$ can be implemented in parallel.  It  has no recursive feature.  This contrasts with Kalman filtering \citep[e.g.][]{DurbinKoopman(12)}, sequential Monte Carlo \citep[e.g.][]{ChopinPapasphiliopoulos(20)} as well as the EWMAs \citep{Brown(56)}.

(d) The cost of an entire sweep of the filter for $t=1,...,T$ ignores the cost of going from knowing $j$ to retrieving $Y_{t-j}$ out of computer memory, taking it from the stored $Y_{1:T}$. This address lookup cost is $O(1)$ with respect to $T$.  
\end{remark}

\section{Some uses of robust filtering in financial economics}\label{sect:preav}

\subsection{Scientific background}

We illustrate the filtering methods in the context of nonparametrically estimating the volatility of a heavily traded financial asset using the sequence of transactions.  The background for this type of problem starts with the foundational work on realized volatility \citep{AndersenBollerslevDieboldLabys(01),BarndorffNielsenShephard(02realised)} and connects to the stochastic analysis of Brownian motion, quadratic variation, stochastic differential equations and discretization \citep[e.g.][]{Levy(40),JacodProtter(12)}.  Here we are interested in the branch of the literature which nonparametrically deals with so-called market microstructure noise.  The main 3 approaches in this literature are the two-scale estimator \citep{ZhangMyklandAitSahalia(05)}, the realized kernel \citep{BarndorffNielsenHansenLundeShephard(08realised)} \& the preaverage estimator \citep{JacodLiMyklandPodolskijVetter(07)}.  A recent paper is \cite{LiLinton(26)}.  Alternative rich data sources for volatility estimation include  candlesticks \citep{BollerslevLiLiLi(26)} and implied volatilities \citep{AndersenArchakov(21)}.

Our starting point is preaveraging.  It has 3 stages.  (1), it sample averages recent trades to produce a continuous time estimate of the most recent underlying price --- we think of this as a linear filtering (the averaging could be over a moving window or use a decay function such as exponential).  (2), it computes the squared first difference of the filtered price over a fixed time interval, e.g. 1 minute.  (3), it square roots weighted sums of these squared differences to form a spot volatility filter. If the sum is over longer periods, such as a day, and the weights are uniform, it is an ``integrated volatility estimator.''   

Inspired by the theory of \cite{MyklandZhang(16)}, our approach replaces the first stage linear filter of recent trades with a robust filter.  The second and third stages are largely unchanged, although we employ hyperbolic weights in our volatility filter.  The robust preaveraging approach is compelling. We will show the trade-by-trade data is full of extremely heavy tailed outliers --- in our empirical work the variance of the noise is likely infinite, although this is not necessarily true for other assets.  In theory the existence of this variance is a necessary condition for the validity of two-scale, realized kernels \& preaverage estimators.  We will show that using robust methods reduces the fragility of the resulting volatility sequence.  

Our methods relate to a step of the high frequency data cleaning procedure pioneered by \cite{BrownlessGallo(06)} and \cite{BarndorffNielsenHansenLundeShephard(09)}.  Both use smoothers (not filters) with uniform weights over a moving window to identify ``outliers'' in the trade data which are then removed.  \cite{BrownlessGallo(06)} and \cite{cipollini2026volare} (i.e. the VOLARE archive) employ a trimmed mean smoother.  \cite{BarndorffNielsenHansenLundeShephard(09)} use a median smoother.  The use of smoothers makes their approach hard to use in real time, but they show it is helpful for academic research.  

Carrying out robust filtering on massive trade-by-trade datasets (which can have more than 1 million datapoints in a day) is computationally non-trivial except in the case where a moving block of data is used in the filtering.  But using moving blocks of data is very constraining.  It typically generates inefficient filters and automatically leads to (highly predictable) jumps in the filtered estimate as new datapoints enter and leave the block.  

Relatedly, \cite{hansen2026exact} look at a variety of robust score driven filters for a time series of realized volatilities, while \cite{FanKim(18)} use Huber loss on the square of preaveraged returns. 

\subsection{Framework}

We model a high-frequency stock price as the efficient price observed with microstructure noise,
\begin{equation*}
    Y_t = X_t + \epsilon_t,\quad t \in \{1,...,T\}, 
\end{equation*}
where $Y_t$ is the logarithm of the traded price (we will sometimes casually refer to it as prices), $X_t$ is the latent efficient log-price at trade $t$ and $\epsilon_t$ is the ``noise.'' We will generically write $\hat{X}_t$ as an estimator of $X_t$ \& $\hat{\epsilon}_t = Y_t - \hat{X}_t$.  Here $T$ is the number of trades on a specific day, $T_s$ is the number of trades in the first $s$ seconds after that day's market opens, $\tau_t$ is the seconds after the open of the $t$-th trade and $[X,X](u)$ is the quadratic variation of the $X$ process up to time $u$-seconds after the open. We will analyze each day's data separately, although sometimes we will select parameters indexing weights \& diurnal features using data across days.        

Pre-averaging estimates $X$ by linear filtering: averaging recent observed prices to remove the microstructure noise. The linear filter is sadly not robust to outlying trades.  This prompted \citet{MyklandZhang(16)} to replace it by the block $M$-estimator of their equation~(9). The day is divided into non-overlapping blocks of trades and the efficient price is estimated once per block. Their estimator is the $\alpha=1$ case of the weighted-loss smoother of Definition~\ref{def:ewm}, with uniform weights over the block's trades. For us, at trade $t$ the filter minimizes a weighted loss over the recent trades, indexed by their lag $j$,
\begin{equation*}
    \theta_t = \underset{\theta \in \Theta}{\arg }\min \  \sum_{j=0}^{t-1} w_{t,j}\, L(\theta, Y_{t-j}),\quad w_{t,j} \ge 0, \quad \sum_{j=0}^{t-1} w_{t,j}=1,\quad t=1,...,T,
\end{equation*}
We build on their design in 3 directions, allowing general weights $w_{t,j}$ and losses, computing the filter at \emph{every trade} $t$ from the past trades alone, and letting the robustness threshold move with the noise scale through the day (Section~\ref{sect:preav-level}). The researcher chooses $L$.  

Under squared loss $L(\theta,y)=\tfrac12(y-\theta)^2$, $\theta_t$ is the local average of the prices, driving pre-averaging \citep{JacodLiMyklandPodolskijVetter(07)}. The Huber loss $L(\theta,y)=H_\delta(y-\theta)$, is, for a deviation $u$,
\begin{equation*}
    H_\delta(u) = \begin{cases} \tfrac{1}{2}u^2, & |u|\le \delta,\\[2pt] \delta\,|u| - \tfrac{1}{2}\delta^2, & |u| > \delta, \end{cases}
\end{equation*}
bounds the influence of outlying trades \&, unlike the absolute loss, is differentiable everywhere \& offers a potentially efficient robust estimator. Efficiency is vital here.  Although our datasets are massive, our estimation is local and so the half life of the weights at trade $t$ can be modest.  

\citet{MyklandZhang(16)} use the Huber loss. The asymptotic theory of their block case, its consistency, rate and treatment of jumps, is given in the same paper, building on the linear pre-averaging theory of \citet{JacodLiMyklandPodolskijVetter(07)}.

Our empirical work will be based on trades on Nvidia (ticker \texttt{NVDA}), from 1, 2, 3, 4, 7, 8, 9, 10, 11, 14, 15 \& 16 October 2024, with the 10 trading days before them as a presample for the start-up and diurnal rules of Section~\ref{sect:preav-ends}.  The data is downloaded from the NYSE Trade and Quote (TAQ) database, accessed through Wharton Research Data Services.  A typical day is open from 9.30am to 4.00pm, Eastern Time, \& has around 1.5 million trades over these 23,400 seconds, averaging roughly 70 trades a second. We use trades only. Quotes move on a 1-cent grid but trades need not. About 2/3 of the trades are at whole cents, 1 in 10 at the 1/2 cent, and the rest at other sub-penny prices inside the spread. No trade shares the same nanosecond time stamp.  Appendix~\ref{app:xsec} repeats the calculation on 53 other major U.S. stocks \& 1 ETF for those 12 days; our Web Appendix gives the full results for each.     

For the \texttt{NVDA} trades we run five filters of the type we developed above. Later filters are built from the output of earlier ones. The discount rate $\lambda$ differs across the filters.  Each filter treats the trade times \& the output of the earlier filters as given, as Definition~\ref{def:ewm} requires.
\begin{enumerate}
    \item A median filter, $m_t$, gives a rough estimate of the price level (Section~\ref{sect:preav-level}).  
    \item The noise scale, $b_t$, is a slowly-varying weighted median filter of the nonzero absolute residuals $|Y_t-m_t|$.  It sets the Huber threshold $\delta_t$ (Section~\ref{sect:preav-level}).
    \item A Huber filter, $\theta_t$, gives the estimated efficient price at trade $t$ (Section~\ref{sect:preav-path}). 
    \item A filtered volatility $\hat{\sigma}_t$ calculated as the square root of the sums of weighted squared filtered returns  (Section~\ref{sect:preav-vol}), as well as the corresponding estimator of the integrated volatility.  The results are compared across 4 different types of filtered returns: 
    \begin{enumerate}
    \item the raw returns \citep{AndersenBollerslevDieboldLabys(01),BarndorffNielsenShephard(02realised)}, 
    \item linear filter \citep{JacodLiMyklandPodolskijVetter(07)}, 
    \item median filter,
    \item Huber filter \citep{MyklandZhang(16)}. 
    \end{enumerate}
    \item An outlier direction, $p^{+}_t$, gives the filtered probability that $Y_t-\theta_t > \delta_{t-1}$, that is, an outlier lies above the estimated efficient price (Section~\ref{sect:preav-direction}).  The same filter yields $p^{-}_t$, the probability that $Y_t-\theta_t < -\delta_{t-1}$, that an outlier lies below the estimated efficient price.  $p^{+}_t/(p^{+}_t+p^{-}_t)$, the conditional probability of a positive outlier, is time-varying.   
\end{enumerate}

The dataset refines the rules of \citet{BarndorffNielsenHansenLundeShephard(09)} so that it would be available in real time for market participants.  In particular, we keep trades during regular trading hours (9.30am to 4.00pm, Eastern Time) at strictly positive prices, dropping those the exchange records as ``corrected''. We also include the price from the closing auction and record its time exactly at 4pm in our database.\footnote{Trades whose sale condition marks them as extended-hours reports are dropped. On 16 Oct this keeps $1{,}521{,}749$ of $1{,}616{,}806$ records. $94{,}891$ fall outside regular hours, $152$ carry extended-hours sale conditions, $14$ are recorded as corrected, \& none have non-positive prices. Taking the opening auction as trade $1$ then drops the $411$ trades recorded before it, leaving $1{,}521{,}338$; appending the closing auction gives $T=1{,}521{,}339$.} Beyond this ``cleaning'' we keep every trade, including the outliers, since resisting them in real time is the point. In particular, we keep ``off-exchange trades'', which come from dark pools, negotiated blocks, or through the filling of retail orders by wholesalers. These trades appear in the database through a FINRA Trade Reporting Facility \& represent around 51\% of the trades on 16 October. There is a reporting requirement that such trades are reported within 10 seconds, \& we use the time of the report. Around 91\% of what we classify as outliers to our Huber filter come from these off-exchange trades.  

We highlight the five filters on 16 October 2024, a day with $T=1{,}521{,}339$ trades (the median size of each trade is around 50 shares). The filters are computed on the log-prices $Y_t$, so residuals, scales and thresholds are log increments inside every objective. We report them at the prevailing price level, in dollars or cents, for readability.

Finally, we take 
$
Y_1 = X_1,
$
as the single price from the 9.30am opening auction (viewing the auction as a form of preaveraging).  The trade in the opening auction on NASDAQ is indicated in the TAQ database by the trade condition code ``O''. For our 12 days of data, the result of the auction is recorded between 9.30:00.1am and 9.30:01.4am (so $\tau_1$ is between 0.1 and 1.4 seconds over these days), for 1.1 to 2.7 million shares. On 16 October the auction settled 1,857,756 shares at a price of \$134.01 per share.  We take $Y_{T} = X_{T},$ as the price of the closing auction on NASDAQ, indicated in the TAQ database by the trade condition code ``6''. This auction happens algorithmically exactly at 4pm and the result is broadcast immediately to market participants.  We include this datapoint in the database and label the associated time exactly at 4pm whatever the time print on the TAQ tape so all our data falls in the interval 9.30am to 4.00pm.  On 16 October the closing auction settled 15,146,434 shares at a price of \$135.72 per share.  That the closing auction has much bigger volume than the opening auction is typical across time \& assets \& is important in terms of market impact \citep{GoyalJegadeeshWu(26)}.

\subsection{Initial estimators: median filter and noise scale}\label{sect:preav-level}

The Huber filter needs help: a reasonable choice for the threshold $\delta_t$. To implement this we will first run a median filter on the prices and compute a time-varying noise scale.        

The first filter, $m_t$, is an exponentially weighted median of the recent trades (Ex.\ref{ex:exponential}) having a half-life of 
3 trades (the half-life determines $\lambda$; Section~\ref{sect:preav-path} explains the choice), giving
$$m_t = \underset{\theta \in \mathbb{R}_{\ge 0}}{\arg }\min \  \sum_{j=0}^{t-1} w_{t,j}\, |Y_{t-j}-\theta|,\quad w_{t,j}\propto \lambda^j,\quad t \in \{2,...,T-1\},\quad m_1=Y_1,\quad m_T=Y_T.
$$ Most trade prices lie on the 1-cent quote grid and bounce between the bid and the ask, so the median lands on a grid value, while the efficient price $X_t$ presumably lies between the bid and the ask. The median $m_t$ is a robust local level, not a great estimator of the efficient price, and serves only as input to the noise scale $b_t$ (\cite{BarndorffNielsenHansenLundeShephard(09)} use a uniform weighted median smoother on quote data with a window length of 50 quotes.)  We compute $m_t$ by the stratified estimator of Remark~\ref{rem:strat}(d) (with $h = W_t^{-1}(0.999)$ and $B=100$), at $O(1)$ flops a trade, taking the weighted median as the smallest value reaching half the weight.

The 2nd filter, a noise scale $b_t$, is also a weighted median, but now of the nonzero absolute versions of the median filter residuals 
$
\tilde{\epsilon}_t = Y_t - m_t,$ with $t\in \{1,...,T\},$ with $\lambda$ selected (the details of this will be given in a moment) so the weights have a 4-minute clock-time half-life (roughly $16{,}800$ trades at the average trading rate). Recalling $\tau_t$, the seconds after the open of $t$-th trade, the weights decay in clock time, yielding the definition of the noise scale at trade $t$:  
$$b_t = \underset{\theta \ge 0}{\arg }\min \  \sum_{j=0}^{t-1} w_{t,j}\,\mathbf{1}\{|\tilde{\epsilon}_{t-j}|>0\}\, \bigl||\tilde{\epsilon}_{t-j}|-\theta\bigr|,\quad w_{t,j}\propto \lambda ^{\tau_t-\tau_{t-j}},\quad t \in \{T_{60}+1,...,T\}.$$
More than half of the residuals, $\tilde{\epsilon}_t$, are exactly zero, since prices concentrate on the 1-cent quote grid and most trades fall on the level $m_t$; these zeros are uninformative about the scale, so we take $b_t$ over the nonzero residuals. The median-absolute-deviation scaling is the standard robust choice \citep{Huber(81)}.
For the end effects $t\le T_{60}$, the trades before 9.31am, \& at $t=0$, we set $b_t$ as in Section~\ref{sect:preav-ends}. For our 12 different days, $T_{60}$ ranges from roughly 16,200 to 34,900 trades. The scale filter is computed using a weighted median through a running order-statistic \citep{Fenwick(94)} binary indexed tree.  This is computationally convenient as the residuals from the median filter have a relatively small number of points of support. 
\begin{figure}[h!]
    \centering
    \includegraphics[width=0.85\linewidth]{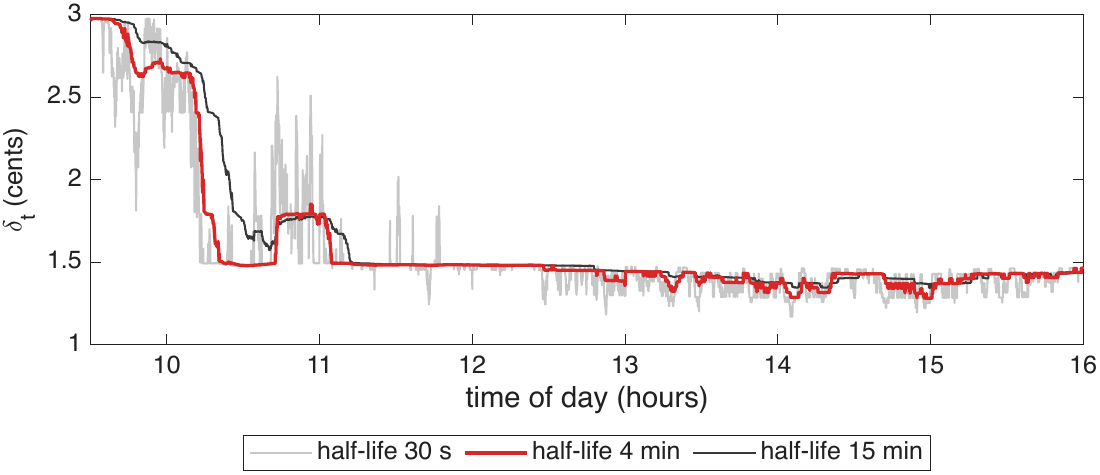}
    \caption{The Huber threshold $\delta_t$ in cents for \texttt{NVDA} on 16 Oct 2024, with half-lives of 30 seconds, 4 \& 15 minutes, each starting $b_t$ from the previous 5 openings \& swapping in the day's own trades for first minute, so all 3 start at $2.97$ cents. The 4-minute half-life, in red, is used.}
    \label{fig:preav_delta_hl}
\end{figure}

A slightly simpler alternative would be the weighted sample mean of the absolute residuals, but the residuals turn out to have such heavy tails (Section~\ref{sect:preav-path}) that the variance of the absolute residuals is infinite, implying such a sample mean would be fragile.   

We set the Huber threshold
$\delta_t = 2 \times 1.4826 \times b_t,$ for $t \in \{0,1,...,T\},$ 
following the scale of the noise through the day. $b_t$ \& so $\delta_t$, are about twice as large in the first minutes of trading as during the rest of the day, while \citet{MyklandZhang(16)} hold the threshold fixed.\footnote{The factor $1.4826=1/\Phi^{-1}(3/4)$ makes the median absolute deviation a consistent estimator of a Gaussian standard deviation (s.d.); with the zeros dropped, $1.4826\,b_t$ is a scale calibration rather than a s.d. estimator. Under Gaussian noise, a threshold of two s.d.s gives the Huber estimator of location an asymptotic variance about $1\%$ above the sample mean's, $99\%$ efficiency relative to the mean \citep{Huber(81)}.}

The 4-minute half-life, which determines $\lambda$, is set by prediction, using the check loss of $b_{t,60}$, the scale at the most recent trade at least a minute before trade $t$, for the nonzero absolute residual $|\tilde{\epsilon}_t|$. Pooled across days, the loss is smallest at about 2 minutes \& is almost flat around that, within 1\% of its smallest value from 30 seconds to 15 minutes. Within that flat range the choice is about how much $\delta_t$ moves through the day. A 30-second scale moves about nine times as much as the 4-minute one, while a 15-minute scale is slow to follow changes in the level of the noise. Figure~\ref{fig:preav_delta_hl} shows $\delta_t$ at 3 half-lives.

Figure~\ref{fig:preav_delta} shows $\delta_t$ across the 12 trading days. On each day, $\delta_t$ is approximately 1.5 cents for most of the day but starts off higher, at around $3$ cents near the open where trading is most intense. $\delta_{1:T}$ is computed before the Huber filter runs, from the median-filter residuals alone. 

\begin{figure}[h!]
    \centering
    \includegraphics[width=0.85\linewidth]{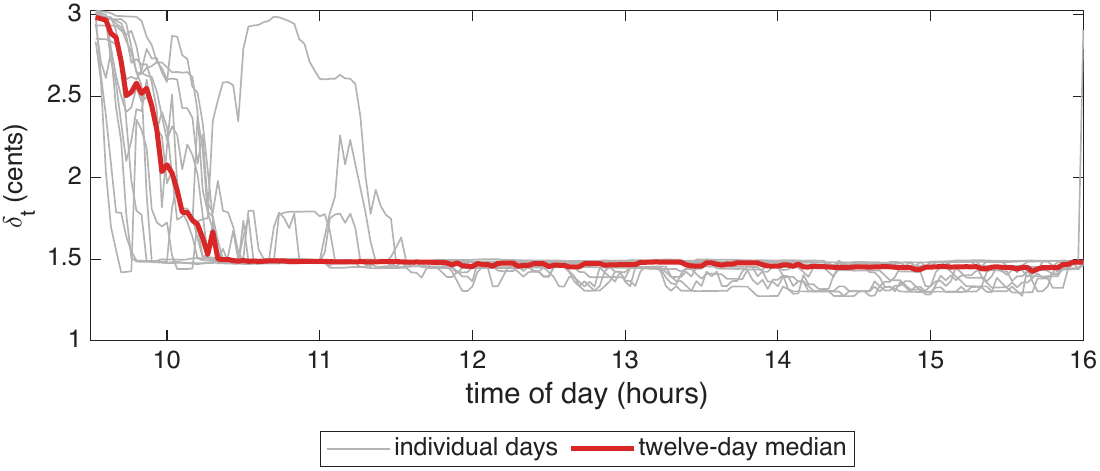}
    \caption{Huber threshold $\delta_t$ in cents through the trading day, for 12 \texttt{NVDA} trading days in Oct 2024 (grey lines) and their median (red line).}
    \label{fig:preav_delta}
\end{figure}

\subsection{Huber filter applied to prices}\label{sect:preav-path}

The third filter, $\theta_t$, is the Huber filter applied to the prices  
$$\theta_t = \underset{\theta \in \Theta}{\arg }\min \  \sum_{j=0}^{t-1} w_{t,j}\, H_{\delta_{t-j-1}}(Y_{t-j}-\theta),\quad w_{t,j} \propto \lambda ^j,\quad t\in \{2,...,T-1\},\quad \theta_1=Y_1, \quad \theta_T=Y_T,
$$ 
using it to estimate $X_t$. We write $\hat{X}_t$ as any filter of $X_t$, so sometimes we will use that instead of $\theta_t$ in the text. We compute $\theta_t$ by iteratively reweighted least squares, which here is just a weighted mean recomputed until it moves by less than $10^{-12}$, starting from the median filter $m_t$. A trade within the threshold $\delta_{t-j-1}$ enters with its weight $w_{t,j}$, and one beyond it has that weight scaled down by $\delta_{t-j-1}/|Y_{t-j}-\theta|$. The Huber filter is computed using the stratified simulation estimator of Remark~\ref{rem:strat}(d) ($h = W_t^{-1}(0.999)$ and $B=100$), where each tail draw at lag $j=j_b$ carries its own trade's threshold $\delta_{t-j-1}$.

$X_t$ is close to a martingale, so we expect a filter with a short half-life. We choose the half-life by minimizing the Huber loss of the $k$-trade-ahead forecast of the price, $H_{\delta_{t-k}}(Y_t-\theta_{t-k})$, pooled across days. The minimizing half-life is slightly more than 2 trades for every $k$ from 2 to 100, and the loss changes by less than $2.5\%$ between half-lives of 2 and 3. We use 3 trades, as this gives the larger effective sample $(1+\lambda)/(1-\lambda)$ of about 9 trades rather than 6.

Figure~\ref{fig:preav_path} plots the price of trades together with the linear \& Huber filters.  The top display shows the result over the full day. There are two extraordinary outlier trades, one just after 11am and one at 12:43 --- there are many other outlier trades, but they are not visible in this picture.  The other 3 plots are three $5$-second snapshots.  

The 2nd picture shows a typical stretch of trading. Away from outliers the paths track each other closely. The linear filter briefly darts downwards, hit by modest outliers. 

The 3rd \& 4th pictures are the most interesting.  The third uses a 4-cent scale \& adds the median filter, showing results at the tick scale, with most traded prices sitting at the cents \& the half-cent values. The linear \& the Huber filter coincide, running between those levels, while the median filter steps across them (which makes it a poor estimator of the efficient price). This is the case where there are no impactful outliers which challenge the linear filter.

The 4th picture shows the impact of the extreme \$118.85 trade, an order of $85$ shares, about $\$10{,}000$ and a typical size that day (recall the median-sized trade is around $50$ shares), $11.5\%$ below the local price level of about \$134, the linear filter is pulled down by about \$3.4 and recovers as the trade's weight in the average decays, within about half a second. The Huber is barely affected by the trade.  This shows that the linear filter is simply unacceptable for use with high frequency financial data of this kind.    

\begin{figure}[h!]
    \centering
    \includegraphics[width=0.90\linewidth]{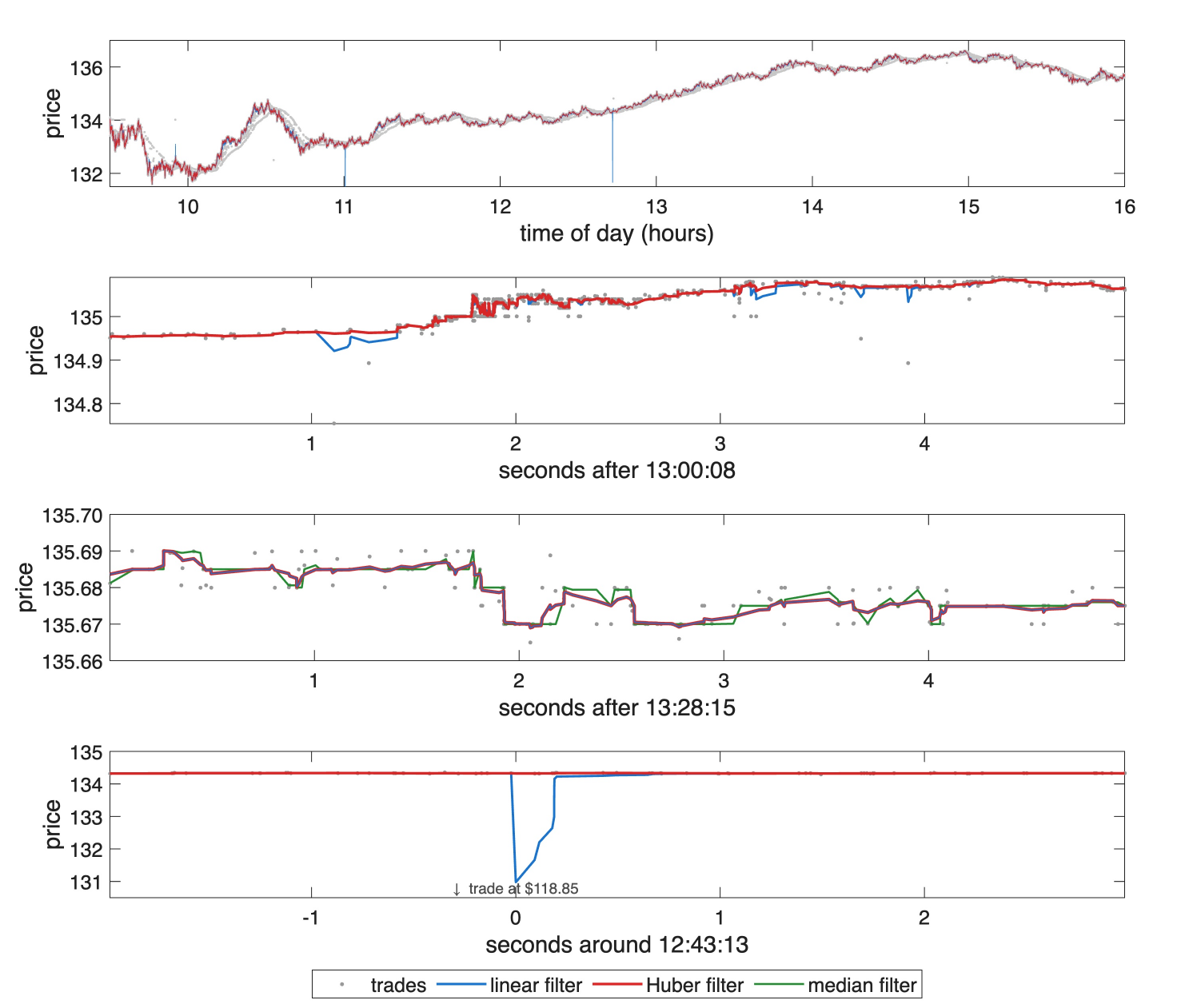}
    \vspace{-1mm}
     \caption{\texttt{NVDA} trades (grey points), the linear filter and the Huber filter, 16 Oct 2024. Top: full trading day. 2nd: $5$-second window at 13:00:08. 3rd: $5$-second window at 13:28:15 with the median filter $m_t$ added; the linear filter is drawn thin and appears on top of the Huber filter. Bottom: $5$-second window around the \$118.85 trade.}
    \label{fig:preav_path}
\end{figure}  

Table~\ref{tab:preav_diffs} reports summary statistics of the $t$-th trade $k$-th periodic difference of the estimates
$$\hat{X}_t-\hat{X}_{t-k},\quad k\in \{1,...,t-1\},\quad t \in \{2,3,...,T\}
$$ 
for the 3 filtered paths and the trade prices for $k$ up to 25. The filters move far less than the trades. One trade apart, the root mean squared difference is $3.00$ cents for the trades, and $0.43$ for the Huber path. The largest moves dominate these, and they differ across the columns. The \$118.85 trade is $30\%$ of the summed squared 1-trade differences of the linear filter and next to nothing of the Huber filter's, whose largest move, a fast repricing (a jump in the price level) at 09:45, is $3.8\%$. The first-order autocorrelation of the 1-trade differences is $-0.48$ for the trades, the bid-ask bounce. The linear filter removes it, $-0.02$, the median and the Huber keep $-0.24$ and $-0.18$ from the penny grid, and past a few lags are all near zero. The 3 filters share the same weights: it is the loss, not weights, driving these results. 

The Hill estimator of the tail exponent $\zeta$ of the $k$-trade returns is reported.  $\zeta$ controls how the tail probability decays; small values mean heavy tails, and below two the variance is infinite. Every Hill estimate in the table is below two, except for the trades at $k=25$; the Huber filter is the lowest. Between repricings the Huber filter barely moves, while at a repricing it moves in full, so its large differences are the biggest relative to its typical tiny ones.

\begin{table}[h!]
\centering
\begin{tabular}{@{}crrrrrrrr}
& \multicolumn{4}{l}{Mean of $\frac{|\hat{X}_t-\hat{X}_{t-k}|}{\sqrt{k}}$ in cents} & 
\multicolumn{4}{l}{Sqrt of Mean of $\frac{(\hat{X}_t-\hat{X}_{t-k})^2}{k}$ in cents} \\
\cmidrule(r){2-5} \cmidrule(r){6-9}
$k$ & median & Huber & linear & trades & median & Huber & linear & trades \\
\cmidrule(r){1-1} \cmidrule(r){2-2}
\cmidrule(r){3-3} \cmidrule(r){4-4} \cmidrule(r){5-5} 
\cmidrule(r){6-6}
\cmidrule(r){7-7} \cmidrule(r){8-8} \cmidrule(r){9-9}\\
1 & 0.115 & 0.090 & 0.106 & 0.442 & 0.537 & 0.431 & 0.493 & 2.995\\
2 & 0.114 & 0.103 & 0.125 & 0.362 & 0.467 & 0.389 & 0.487 & 2.159\\
5 & 0.136 & 0.123 & 0.150 & 0.281 & 0.418 & 0.366 & 0.460 & 1.405\\
10 & 0.151 & 0.139 & 0.161 & 0.241 & 0.382 & 0.348 & 0.418 & 1.018\\
25 & 0.162 & 0.153 & 0.166 & 0.207 & 0.333 & 0.315 & 0.351 & 0.682\\
[4pt]
& \multicolumn{4}{l}{ACF of $\hat{X}_t-\hat{X}_{t-1}$ at lag $k$} & \multicolumn{4}{l}{Tail index of $|\hat{X}_t-\hat{X}_{t-k}|$} \\
\cmidrule(r){2-5} \cmidrule(r){6-9} 
1 & $-$0.24 & $-$0.18 & $-$0.02 & $-$0.48 & 1.73 & 1.64 & 1.74 & 1.77\\
2 & 0.01 & 0.01 & $-$0.04 & $-$0.01 & 1.60 & 1.48 & 1.91 & 1.80\\
5 & $-$0.02 & $-$0.02 & $-$0.03 & 0.00 & 1.51 & 1.36 & 1.94 & 1.83\\
10 & $-$0.01 & $-$0.01 & $-$0.01 & 0.00 & 1.41 & 1.31 & 1.79 & 1.89\\
25 & $-$0.01 & $-$0.01 & 0.00 & 0.00 & 1.55 & 1.48 & 1.67 & 2.10\\
\end{tabular}
\caption{Summary of $\hat{X}_t-\hat{X}_{t-k}$ over $k$ trades of 4 price paths $\hat{X}_t$ for \texttt{NVDA} on 16 Oct 2024. $\hat{X}_t$ is the output from the filters for trade $t$, the median filter $m_t$, the Huber filter $\theta_t$ or the linear filter, each at the 3-trade half-life, or the raw trade price $Y_t$. 1st panel is divided by $\sqrt{k}$ and the second by $k$. The ACFs are the lag-$k$ sample ACFs of the 1-trade difference $\hat{X}_t-\hat{X}_{t-1}$. 
The Hill estimate for $|\hat{X}_t-\hat{X}_{t-k}|$ averages the Hill estimates computed over the largest $0.1\%$ to $1\%$ of observations.}
\label{tab:preav_diffs}
\end{table}

Table~\ref{tab:preav_diffs_time} repeats the exercise with the lag in time rather than trades, comparing the filtered price $\hat{X}_t$ with its own value $\hat{X}_{t,s}$ which happens at time $\tau_{t,s}=\max\{\tau_j: \tau_t-\tau_j \ge s\}$ at least $s$ seconds earlier. The resulting returns over $\tau_t - \tau_{t,s}\ge s$ seconds are 
$$
\hat{X}_t-\hat{X}_{t,s},\quad s>0,\quad t \in \{2,3,...,T\}.
$$

The first panel divides by $\sqrt{\tau_t-\tau_{t,s}}$. The second reports the subsampled, preaveraged \& bias correction annualized volatility measure 
\begin{align}\label{eqn:annVol}
\hat{\sigma}_{\text{Annual};s} = \sqrt{\frac{23{,}400}{23{,}400-\tau_1}} 100 \sqrt{252} \sqrt{\sum_{t=2}^T (\tau_t - \tau_{t-1}) c_{t,s} \frac{(\hat{X}_t-\hat{X}_{t,s})^2}{\tau_t - \tau_{t,s}}}
\end{align}
which may vary with $s$. The latter should be, on average, invariant to $s$, $T$ and $\tau_{1:T}$ under scaled Brownian motion prices. 
The $c_{t,s}=\mathbb{E}[(B_t-B_{t,s})^2]/{\mathbb{E}[(\hat{B}_t-\hat{B}_{t,s})^2]}$ is a finite $T$ correction for the damping of the variation of the signal caused by filtering.  Brownian motion $B\ind \tau_{1:T}$ is evaluated at the observed times $\tau_{1:T}$, yielding  $B(\tau_1),...,B(\tau_T)$ as the data.  The subsequent filtered price $\hat{B}_{1:T}$ is outputted by the Huber filter at the half-life and the threshold path $\delta_{1:T}$ of the day. Section~\ref{app:c} shows how to compute $c_{t,s}$ using simulation.

The correction is helpful because the filtered price is a robust average of the last few trades, so if only a few trades lie between the two return ends  $\hat{X}_{t,s},\hat{X}_t$, the two averages are built from mostly the same trades \& vary less than $X_t-X_{t,s}$ does, \& $c_{t,s}>1$. If $\hat{B}=B$, then $c_{t,s}=1$, while as $T$ goes to infinity $c_{t,s}\rightarrow 1$ for fixed $s$ for well behaved sequences of trade times. On 16 October the correction raises the Huber filter's estimate by $4.9\%$ in volatility at $s=1$ second and by $0.5\%$ at $s=10$ seconds.  

The differences $\tau_t-\tau_{t-1}$, telescope to $23{,}400-\tau_1$ (recall $\tau_1$ is the seconds into the day the market opening price $X_1$ is posted --- typically $\tau_1$ is around 1), so put the volatility on the scale of a trading day. Multiplying by $100\sqrt{252}$ makes this the annualized volatility in percent. The differences in equation (\ref{eqn:annVol}) are taken in the log of the filtered price, so $\hat{\sigma}_{\text{Annual};s}$ is in percent, while the first panel of Table~\ref{tab:preav_diffs_time} uses the price itself, in cents. The square rooted first ratio (\ref{eqn:annVol}) is so close to 1 it  could be dropped to avoid clutter.  

In both panels the trades column lies well above the filtered columns at small $s$. At this trading rate a lag of $0.1$s already spans about $7$ trades, and far more at the start of the day. At that $0.1$s lag the annualized volatility of the trades is roughly five times the Huber filter's. The difference decreases as the lag grows, and at $15$ minutes the 4 paths agree to within a fraction of a percent, so for price changes beyond a few minutes it hardly matters if one filters at all.  This result was apparent to the early realized volatility researchers \citep{AndersenBollerslevDieboldLabys(01),BarndorffNielsenShephard(02realised)}, who used returns measured over minutes and justifiably ignored the impact of noise on their estimators. 

The $0.1$s returns have negative correlations across all 4 methods, very large for the raw trades and quite large for the linear filter, $-0.40$; the Huber filter's is $-0.24$.  By 1 second the correlation between adjacent returns is near zero for all the filtered paths while still $-0.37$ for the raw trades, so the negative dependence in Table~\ref{tab:preav_diffs} lives at the trade scale.

For the filtered paths the Hill index rises from about 2 at 0.1 second to about 8 at 10 seconds. By then the fat tail is driven by the efficient price move rather than the noise. At 15 minutes the largest $1\%$   Huber difference is \$2 repeated 1000s of times, so beyond 10 seconds there is too little data in the tail to estimate it just using a day's data.      

\begin{table}[h!]
\centering
\begin{tabular}{@{}crrrrrrrr}
& \multicolumn{4}{l}{Mean of $\frac{|\hat{X}_t-\hat{X}_{t,s}|}{\sqrt{\tau_t-\tau_{t,s}}}$ in cents} & 
\multicolumn{4}{l}{Annualized volatility: $\hat{\sigma}_{\text{Annual};s}$} \\
\cmidrule(r){2-5} \cmidrule(r){6-9}
$s$ & median & Huber & linear & trades & median & Huber & linear & trades \\
\cmidrule(r){1-1} \cmidrule(r){2-2}
\cmidrule(r){3-3} \cmidrule(r){4-4} \cmidrule(r){5-5} 
\cmidrule(r){6-6}
\cmidrule(r){7-7} \cmidrule(r){8-8} \cmidrule(r){9-9}\\
0.1 sec & 2.50 & 2.34 & 2.58 & 3.54 & -- & 53.3 & 117.5 & 245.2\\
1 sec & 1.95 & 1.91 & 1.98 & 2.20 & 34.8 & 34.3 & 43.1 & 77.9\\
5 sec & 1.95 & 1.94 & 1.96 & 2.03 & 34.4 & 34.3 & 35.9 & 44.5\\
10 sec & 1.98 & 1.97 & 1.98 & 2.02 & 34.2 & 34.1 & 35.1 & 46.6\\
1 min & 1.93 & 1.93 & 1.92 & 1.94 & 33.4 & 33.3 & 33.4 & 35.7\\
15 min & 2.06 & 2.06 & 2.06 & 2.06 & 37.1 & 37.1 & 37.1 & 37.2\\
[4pt]
 & \multicolumn{4}{l}{$\mathrm{Cor}(\hat{X}_t-\hat{X}_{t,s}, \hat{X}_{t,s}-\hat{X}_{t,2s})$} &
 \multicolumn{4}{l}{Hill tail index of $|\hat{X}_t-\hat{X}_{t,s}|$}\\
 \cmidrule(r){2-5} \cmidrule(r){6-9}
0.1 sec & $-$0.26 & $-$0.24 & $-$0.40 & $-$0.81 & 1.83 & 1.81 & 2.05 & 2.31\\
1 sec & 0.05 & 0.07 & 0.00 & $-$0.37 & 3.32 & 3.38 & 3.81 & 2.64\\
5 sec & 0.08 & 0.09 & 0.06 & $-$0.03 & 7.57 & 7.63 & 6.62 & 4.87\\
10 sec & 0.04 & 0.04 & 0.03 & $-$0.12 & 8.42 & 8.44 & 8.16 & 6.19\\
1 min & 0.06 & 0.06 & 0.06 & $-$0.02 &&&&\\
\end{tabular}
\caption{Summary of $\hat{X}_t-\hat{X}_{t,s}$ over time lag $s$ of 4 price paths for \texttt{NVDA} on 16 Oct 2024. $\hat{X}_t$ is 1 of the 4 paths of Table~\ref{tab:preav_diffs} \& $\hat{X}_{t,s}$ is its value at the most recent trade at least $s$ seconds before trade $t$; when there is no trade at least $s$ seconds earlier that day, the reference is $\hat{X}_1$, the opening price. 1st panel is divided by $\sqrt{\tau_t-\tau_{t,s}}$. The median filter almost does not move at $s=0.1$, which makes the denominator of its correction $c_{t,s}$ close to 0. 3rd panel is the correlation of the return over the last $s$ seconds, $\hat{X}_t-\hat{X}_{t,s}$, with the return over the adjacent earlier window, $\hat{X}_{t,s}-\hat{X}_{t,2s}$, where $\hat{X}_{t,2s}$. The Hill estimate for $|\hat{X}_t-\hat{X}_{t,s}|$ averages the Hill estimates computed over the largest $0.1\%$ to $1\%$ of observations.}
\label{tab:preav_diffs_time}
\end{table}

\subsubsection{The noise can be very heavy tailed: evidence and implications}\label{sect:preav-tails}

Importantly from a theoretical perspective, the residual 
$\hat{\epsilon}_t = Y_t-\theta_t,$ for $ t\in \{1,...,T\},$ from the Huber filter, our proxy for the noise $\epsilon_t = Y_t-X_t$, has an infinite variance for the \texttt{NVDA} data. The Hill estimate of the tail index of the residual is about $1.4$, below the value $2$ above which the variance is finite.  The Appendix \ref{sect:tailindex} shows this conclusion holds up for \texttt{NVDA} data against various robustness checks, while Appendix \ref{app:xsec} shows it commonly holds for the most frequently traded U.S. stocks, but not for stocks with lower trading intensity.     

For a traditional model of the extreme tail $P(|\hat{\epsilon}_t|>x)\propto x^{-\zeta},$ with $ t\in \{1,...,T\}$ the variance is finite only if $\zeta>2$. The Hill estimator \citep{Hill(75)} of $\zeta$ from the $k$ largest residuals $|\hat{\epsilon}|_{(1)}\ge\cdots\ge|\hat{\epsilon}|_{(k+1)}$ is $\hat\zeta(k)=\bigl(\tfrac{1}{k}\sum_{i=1}^{k}\log(|\hat{\epsilon}|_{(i)}/|\hat{\epsilon}|_{(k+1)})\bigr)^{-1}$. Over the far tail, the largest $0.1\%$ to $1\%$ of residuals, $\hat\zeta$ is between roughly $1.0$ and $2.0$ across that range and averages $1.43$, holding under removal of the two largest residuals ($1.43$) and on 12 random subsamples of $25{,}000$ trades, where the index ranges from $1.33$ to $1.49$. 

The residual sums point the same way. The largest residual holds $35\%$ of $\sum_t \hat{\epsilon}_t^2$ but $0.3\%$ of $\sum_t |\hat{\epsilon}_t|$, and the largest $1{,}000$ ($0.07\%$ of the trades) hold $85\%$ against $7\%$, so a few trades dominate the sum of squares while the sum of absolute values stays spread, characteristic of an infinite variance with a finite mean \citep{EmbrechtsKluppelbergMikosch(97)}.

Robust preaveraging mitigates the microstructure noise by averaging, then subtracts a noise-bias term from the weighted sum of squared filtered returns. For a score $\psi=L'$, the derivative of the loss, the term is (e.g.~equation~(8) of \citet{MyklandZhang(16)}),
\begin{equation}\label{eqn:psi*}
\frac{\operatorname{Var}\!\left(\psi(\epsilon_t)\right)}{\bigl(\mathbb{E}\!\left[\psi'(\epsilon_t)\right]\bigr)^2},\quad t\in \{1,...,T\},
\end{equation}
under their assumption of independent noise; for stationary mixing noise they replace the numerator by the long-run variance of $\psi(\epsilon_t)$. For the linear filter, equation (\ref{eqn:psi*}) is $\operatorname{Var}(\epsilon_t)$, the noise variance. When that variance does not exist, the noise-bias correction for standard preaveraging does not exist either (formally invalidating that approach), and the sample variance of the residuals that estimates it grows without bound as the sample size increases. Because the Huber score $\psi_\delta(u)=\max\{-\delta,\min(\delta,u)\}$ is bounded by $\delta$, the numerator $\operatorname{Var}(\psi_\delta(\epsilon_t))\le\delta^2$ is finite and the denominator is positive, so (\ref{eqn:psi*}) is finite.

Table \ref{tab:errors} provides summaries of residuals based on the median, Huber \& linear filters of prices.  By construction, the median \& Huber cases have an infinite tail index.  

\begin{table}[h!]
\centering
\begin{tabular}{@{}crrrrrrrrrr}
& \multicolumn{3}{l}{Mean of $|\hat{\epsilon}_t|$ in cents} & & 
\multicolumn{3}{l}{s.d. of $\hat{\epsilon}_t$ in cents} & 
\multicolumn{3}{l}{Tail index of $\psi_{\delta_{t-1}}(\hat{\epsilon}_t)$}\\
\cmidrule(r){2-4} \cmidrule(r){6-8} \cmidrule(r){9-11}
& median & Huber & linear & & median & Huber & linear  & median & Huber & linear\\
\cmidrule(r){2-2}
\cmidrule(r){3-3} \cmidrule(r){4-4} 
\cmidrule(r){6-6}
\cmidrule(r){7-7} 
\cmidrule(r){8-8} 
\cmidrule(r){9-9}
\cmidrule(r){10-10} \cmidrule(r){11-11}
  & .30 & .38 & .41   & & 2.18 & 2.16 & 1.84   &  $\infty$ & $\infty$ & 1.74\\
  &&&&&&&&&&\\
[4pt]
$j$ & \multicolumn{3}{l}{ACF of $\psi_{\delta_{t-1}}(\hat{\epsilon}_t)$} & $\alpha$ & 
 \multicolumn{3}{l}{$100(1-\alpha)$\% C.I. for $\theta_t-X_t$} &
 \multicolumn{3}{l}{$\psi^*$: se$(\theta_t-X_t)$ in cents} \\
\cmidrule(r){1-1} \cmidrule(r){2-4} \cmidrule(r){5-5}
\cmidrule(r){6-8} \cmidrule(r){9-11} 

1  & .20 & .28 & $-$.02 & .20 & ---  & $-$.26,.26 & $-$.28,.28 & --- & .22 & $\infty$(.56)\\
2  & .08 & .17 & $-$.04 & .10 & --- & $-$.36,.36 & $-$.40,.40 &&&\\
5  & .02 & .06 & $-$.03 & .05 & --- & $-$.46,.46 & $-$0.54,0.54 &&&\\
10 & .00 & .02 & $-$.01 & .01 & --- & $-$.72,.75 & $-$1.23,1.15 &&&\\
25 & .00 & .00 & .00 &&&&&&&\\
\end{tabular}
\caption{Summary of the residuals $\hat{\epsilon}_t = Y_t - \hat{X}_t$ of the 3 filters for \texttt{NVDA} on 16 Oct 2024. The score is the subgradient of loss: $\mathrm{sign}(\hat{\epsilon}_t)$ for the median, the clipped $\psi_{\delta_{t-1}}(\hat{\epsilon}_t)$ for the Huber \& $\hat{\epsilon}_t$ for the linear filter. The median \& Huber scores are bounded, yielding infinite tail indices. The Hill estimator of the index averages the largest $0.1\%$ to $1\%$ of $|\psi(\hat{\epsilon}_t)|$. The s.e. of linear filter does not exist, as noise s.d. does not.  The sample value is in brackets.  
For the median neither the quantiles nor the s.e. are helpful due to discreteness of prices. 
}
\label{tab:errors}
\end{table}

Suppose $\theta_t$ is unique, as it is for the Huber loss when at least one trade with positive weight lies within its threshold of $\theta_t$. Then in the $\alpha=1$ case
\begin{align*}
P(\theta_t > X_t + \eta) &= P(\tilde{g}_t(\eta)> 0),\quad \tilde{g}_t(\eta) = \sum_{j=0}^{t-1} w_{t,j} \psi(Y_{t-j}-X_t - \eta) 
\end{align*}
For small $j$, the $X_t$ is close to $X_{t-j}$ as the $X$ process evolves in calendar time.  If the loss is continuously differentiable then $X_{t-j}$ should be close to $\hat{X}_{t-j}$ (discreteness of prices will impact the median filter, causing these two items to markedly differ).  So 
\begin{align*}
P(\theta_t - X_t > \eta) & \approx P(g_t(\eta)> 0),\quad g_t(\eta) = \sum_{j=0}^{t-1} w_{t,j} \psi(\hat{\epsilon}_{t-j}- \eta) \\
& \approx \frac{1}{T} \sum_{t=1}^T 1(g_t(\eta) > 0),
\end{align*}
yielding an approximation to the quantiles of $\theta_t-X_t$ which are used for confidence intervals in Table \ref{tab:errors}.  In some contexts, it might be useful to localize the average of indicators, allowing the quantiles to change through time.  The linear filter has wider confidence intervals for the estimation error than the Huber filter at every confidence level, by about a 0.1 at $80\%$ levels, 0.2 at $95\%$ and 0.6 at $99\%$.  For the median filter this approach to confidence intervals fails due to the discreteness of prices: algorithmically it would report 0,0 at $80\%$ levels, while -.20,.21 at $95\%$, \& -.51,1.00 at $99\%$, but these numbers are nonsense.      

It is tempting to use asymptotics to approximate the distribution of the error, which would work for the Huber filter yielding an approximate variance under strict stationarity, with the weights truncated at lag $h$: 
$$
\psi^{*2} = \frac{\sum_{j=0}^{h} \sum_{i=0}^{h} w_{t,j} w_{t,i} \gamma_{j-i}(\psi)}{\mathbb{E}[\psi'(\epsilon_t)]^2},
$$
where $\gamma_{j}(\psi)$ is the autocovariance function of the  $\psi(\hat{\epsilon}_1),...,\psi(\hat{\epsilon}_T)$ sequence (when the residual is i.i.d. and the weights are uniform, this produces the expression given in \cite{MyklandZhang(16)}).  We evaluate $\psi^{*2}$ with $h=W_t^{-1}(0.999)=29$, the head length of Remark~\ref{rem:strat}(d), and, since $\psi_{\delta}'(u)=1(|u|\le\delta)$, estimate $\mathbb{E}[\psi'(\epsilon_t)]$ by the share of residuals within the threshold, $\frac{1}{T}\sum_{t=1}^{T}1(|\hat{\epsilon}_t|\le \delta_{t-1})$.
But for the linear and median filters this approach for the standard error fails.  In the median filter case this is due to the discrete nature of the errors.  The linear filter crashes --- we will report an estimate of $\psi^{*}$, but its theoretical value is $\infty$ due to the heavy tails.  Section \ref{sect:TVpsi} reports empirical results for the slowly varying $\psi^*_{1:T}$ for the inputted Huber filter of the price.  It roughly halves through the day from its peak at the market opening.

\subsection{Volatility estimation using robust preaveraging}\label{sect:preav-vol}

Following the logic of the pre-averaged returns measure of variability  \citep{JacodLiMyklandPodolskijVetter(07)}, we replace their linear filter of prices with robust filters of prices.   This is inspired by \cite{MyklandZhang(16)}, but our details differ and our focus is on filtering.  We use the output of the robust filter on prices as an input into computing an estimator of the integrated volatility of the full day, and a filtered spot volatility through the day.

Again write $\hat{X}_t$ as the filtered (or raw) log-price after the $t$-th trade and $\hat{X}_{t,s}$ as the series value at the time of the most recent trade $s>0$ seconds before trade $t\in \{1,...,T\}$, taking $\hat{X}_{t,s}=\hat{X}_1$ when there is no such trade.   Then 
\begin{align*}
r_{t,s} & = \hat{X}_t - \hat{X}_{t,s},\quad s>0, \quad t \in \{2,3,...,T\},\\
& = X_t - X_{t,s} + (\hat{X}_t - X_t) - (\hat{X}_{t,s} - X_{t,s})
\end{align*}
is the corresponding filtered return over the last $\tau_t - \tau_{t,s}\ge s$ seconds. Hence $r^2_{t,s}$ minus
$$
\{[X,X](\tau_{t})-[X,X](\tau_{t,s})\} +2 \psi^{*2}_t
$$
has a conditional mean which is roughly zero given data up to time $\tau_{t,s}$, following the logic of \cite{JacodLiMyklandPodolskijVetter(07)} \& treating the 2 filter errors as uncorrelated, which is reasonable once the 2 filter windows share no trades, from about $0.5$ seconds on here. Thus the filter's $\psi^{*}_t$ induces a potentially meaningful bias to squared return volatility measures, which typically measure financial variability through $[X,X](\tau_{t})-[X,X](\tau_{t,s})$ type objects \citep[e.g.][]{AndersenBollerslevDieboldLabys(01),BarndorffNielsenShephard(02realised)}.     

In Table \ref{tab:errors} the Huber filter's estimated standard error $\psi_t^*$ is around 0.22 cents.  Using the $s=15$ min result of Table \ref{tab:preav_diffs_time} the volatility on 16 October is roughly $37\%$ annualized, which is $2.04\sqrt{s}$ cents over $s$ seconds. Hence, in terms of standard deviation type terms 
\begin{align}\frac{\sqrt{\{[X,X](\tau_{t})-[X,X](\tau_{t,s})\} +2 \psi^{*2}_t}}{\sqrt{[X,X](\tau_{t})-[X,X](\tau_{t,s})}} &= \sqrt{1 + \frac{2 \psi_t^{*2}}{[X,X](\tau_{t})-[X,X](\tau_{t,s})}} \nonumber \\
& \approx  1 + \frac{\psi_t^{*2}}{[X,X](\tau_{t})-[X,X](\tau_{t,s})},\label{eqn:scale}
\end{align}
deploying a binomial approximation.  
At the 1s level, the bias caused by the Huber filter's noise is around 1.2\% of the average level of volatility on 16 October (the corresponding result for the linear filter is 7.5\%, optimistically setting $\psi_t^*$ to  0.56 from Table \ref{tab:errors}). At the 0.1s level it would be around 12\% (around 75\% in the linear case).  During (i) periods of low volatility, (ii) periods with large market microstructure effects, (iii) taking $s\ll 1$, then the bias is likely to be empirically important \& it would be worthwhile to bias correct for the impact of the noise by dividing an initial spot volatility estimator based on $r_{t,s}$ by the left hand side of equation (\ref{eqn:scale}). See Appendix~\ref{app:robust} for several robustness checks on these results.

\subsubsection{Integrated volatility}

Using the $
r_{t,s}$ returns a daily integrated volatility estimator \citep{AndersenBollerslevDieboldLabys(01),BarndorffNielsenShephard(02realised)} with pre-robust-averaging \citep{JacodLiMyklandPodolskijVetter(07),MyklandZhang(16)} is, for a specific ``day'' of $23{,}400$ seconds   
$$
\hat{\sigma} := \hat{\sigma}^*(23{,}400),
$$
where (writing the sum forward \& backward in time to get two useful forms)
\begin{align*}
\hat{\sigma}^{2*}(u) &=  \sum_{t=2}^{T_u} (\tau_t-\tau_{t-1}) c_{t,s} \frac{r_{t,s}^2}{\tau_t-\tau_{t,s}},\quad u \in [0,23{,}400] \\  
&= \sum_{j=0}^{T_u-2} Z_{T_u-j,s}, \quad \text{where} \quad Z_{t,s} = (\tau_t-\tau_{t-1})c_{t,s} \frac{r_{t,s}^2}{\tau_t-\tau_{t,s}},  \quad Z_{1,s} := 0,
\end{align*}
a subsampled estimator of the price's $[X,X](u)$, recalling $T_u$ is the number of trades up to time $u$.    
The volatility estimator $\hat{\sigma}$ is scaled to make the volatility approximately invariant to $s$ under a Brownian motion model \& the differences $\tau_t-\tau_{t-1}$, summing to $23{,}400-\tau_1$ seconds, put the volatility on the scale of a trading day, so $\hat{\sigma}100\sqrt{252}$ is the annualized volatility in \%. This is the estimator used in (\ref{eqn:annVol}). In principle the estimator should be upscaled by $23{,}400/(23{,}400-\tau_1)$, but this term is so close to 1 in practice we avoid clutter by ignoring it below.      

Plotting the integrated volatility estimators, as a function of $s$, gives a preaverage version of the ``volatility signature'' plot.  The second part of Table \ref{tab:preav_diffs_time} shows this for a single day of data, evaluating $s$ at 6 different values. We will now see much more extensive results.  

Figure~\ref{fig:preav_signature} plots this annualized integrated volatility against $s$ for each of the 12 trading days, for the raw trades, the linear filter \& the Huber filter, with $\hat{X}_{t,s}$ as in Table~\ref{tab:preav_diffs_time}. The trade curves rise steeply below 1 second, to between about $2.5$ \& $7$ times the Huber level at a 0.1 second, depending on the day. The filtered curves rise far less and spread far less across days; from half a second on the Huber curves are nearly flat in $s$. As $s$ grows the gaps close, \& at 15 minutes the 3 curves of each trading day agree, at levels between $20$ and $57\%$ across the 12 days.

Our conclusion from this work is that it is conservative to view the volatility signature plot as flat for $s \ge 1$ when using the Huber filtered prices as inputs.

\begin{figure}[h!]
    \centering
    \includegraphics[width=0.95\linewidth]{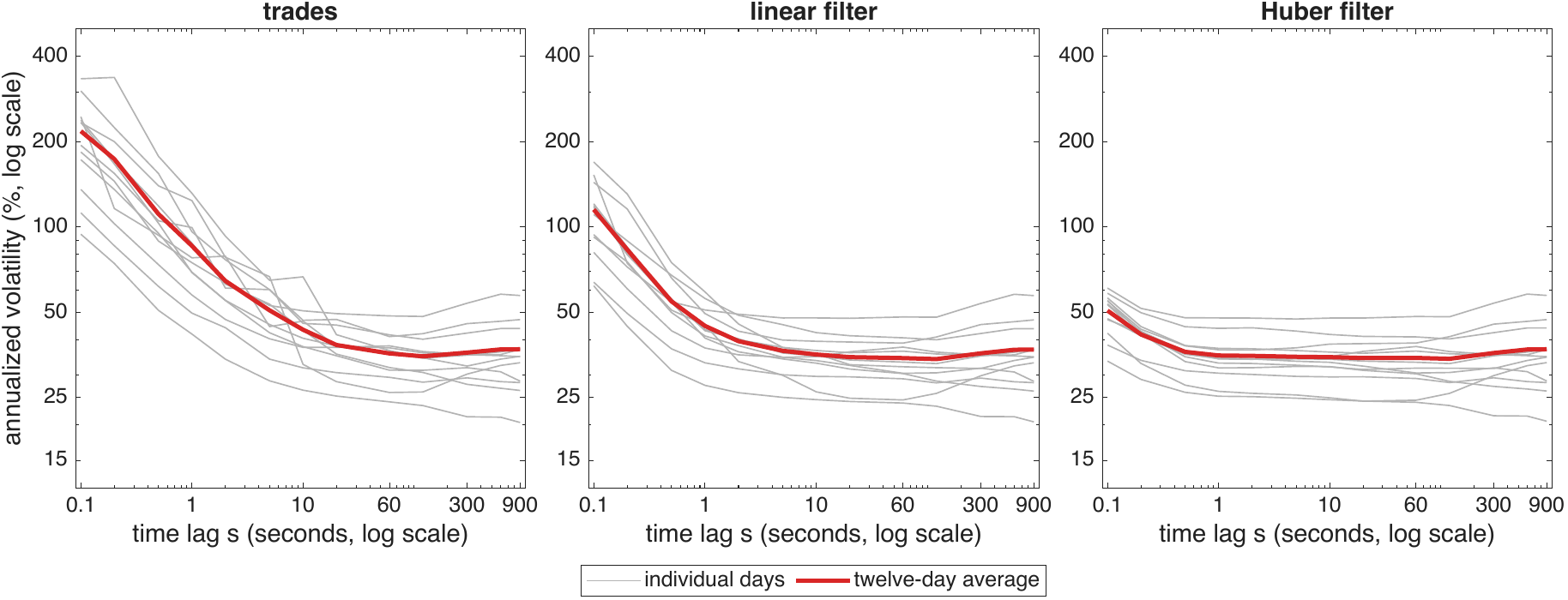}
    \caption{Volatility signature curves for \texttt{NVDA} over the 12 trading days in Oct 2024. The annualized vol $\hat{\sigma}100\sqrt{252}$ from the returns $\hat{X}_t-\hat{X}_{t,s}$ of Table~\ref{tab:preav_diffs_time}, against the lag $s$ in seconds, both axes on log scales, for: raw trades, linear filter \& Huber filter. Grey lines are the individual trading days \& the red line is the square root of the 12 day average of their squares. }
    \label{fig:preav_signature}
\end{figure}

\subsubsection{Filtered volatility}

Focus on the spot volatility.   
A simple localized spot variance estimator at time $u$ seconds (in the spirit of \cite{NelsonFoster(94)}) with preaveraging (in the spirit of \cite{MyklandZhang(16)}) is the one-sided numerical derivative of the estimated quadratic variation (QV) process, employing a bandwidth of $h\in (0,u]$ seconds, 
\begin{align*}
\hat{\sigma}^2(u) &= \frac{1}{h}\left\{\hat{\sigma}^{2*}(u) - \hat{\sigma}^{2*}(u-h)\right\},\quad u\in [h,23{,}400]  \\
& = \frac{1}{h} \sum_{j=0}^{n_{T_u,h}-1} Z_{T_u-j,s},\quad n_{T_u,h} = T_u-T_{u-h}, \\
&= \frac{n_{T_u,h}}{h} \sum_{j=0}^{T_u-2} w_{T_u,j}\, Z_{T_u-j,s},\quad w_{T_u,j} = \frac{1(j \le n_{T_u,h}-1)}{n_{T_u,h}},\quad j=0,1,...,T_u-2,
\end{align*} 
a uniform weight function applied to the past data using trade time where $h/n_{T_u,h}$ is the average time between the trades in the window.  As $h$ increases the $n_{T_u,h}$ increases, but the underlying spot volatility could move making the past data stale, implying a  trade-off if the goal is to recover the spot volatility \citep[e.g.][]{NelsonFoster(94),Kristensen(10)}.  

It is more effective empirically for the weights to decay smoothly, avoiding volatility down jumps as large older datapoints fall outside the window. Decaying weights have no $h$, but the average time between trades $h/n_{T_u,h}$ becomes the weighted average $\bar{\Delta}_{T_u}$, so now 
$$
\hat{\sigma}^2(u) = 
\frac{1}{\,\bar{\Delta}_{T_u}} \sum_{j=0}^{T_u-2} w_{T_u,j}\, Z_{T_u-j,s},\quad \text{where}\quad \bar{\Delta}_{T_u} = \sum_{j=0}^{T_u-2} w_{T_u,j}(\tau_{T_u-j}-\tau_{T_u-j-1})
$$ 

One set of weights is $w_{t,j} \propto \lambda^j$.  Then $\hat{\sigma}^2(u)$ is a scaled version of the EWMA of past $Z_{1:T_u,s}$, calculated in trade time. Here we take the hyperbolic weights from Ex.\ref{ex:hyperbolic}:  
$$
w_{t,j} = \frac{1}{\Psi(a+t) - \Psi(a)} \frac{1}{j+a},\quad a>0,\quad t\in \{2,...,T\},\quad j\in \{0,...,t-1\}, 
$$
recalling the ``Mandelbrot parameter'' $a$ \& $\Psi(a)$ is the digamma function.
Long-memory volatility is standard in finance \citep[e.g.][]{BaillieBollerslevMikkelsen(96),ComteRenault(98),AndersenBollerslevDieboldLabys(01),Corsi(09)} so a hyperbolic decay has appeal, although hyperbolic filtering is often computationally challenging with large datasets. Less so here. The filter is a weighted mean \& is computed using the stratified simulation estimator ($h = W_t^{-1}(0.999)$ \& $B=100$). With hyperbolic weights the head spans nearly the whole day, so its sums for all $t$ come from one FFT convolution of the $Z_{t,s}$ \& one of the clock gaps with the weights $1/(j+a)$, at $O(T\log T)$ flops for the day.

\subsection{End effects, diurnal effects and selecting ``$a$''}\label{sect:preav-ends}

Now focus on some practical aspects. First we need to deal with the noise scale $b_t$ for $t \le T_{60}$, the trades before 9.31am, as well as $b_0$. The first minute of trading is often intense \& quite different from  the rest of the day. Let $\tilde{\epsilon}^*_{1:T^*}$ be the $T^*$ non-zero median filter's residuals over the first minute's trading during the previous five openings. Then define, for  $ t \in \{0,...,T_{60}\}$,   
$$
b_t = \underset{\theta \ge 0}{\arg }\min \ \left(\frac{60-\tau_t}{T^* \times 60} \sum_{j=1}^{T^*} ||\tilde{\epsilon}^*_j| - \theta| 
+ \frac{\tau_t}{60} \frac{\sum_{j=0}^{t-1} w_{t,j} 
\mathbf{1}\{|\tilde{\epsilon}_{t-j}|>0\}\, \bigl||\tilde{\epsilon}_{t-j}|-\theta\bigr|}{\sum_{j=0}^{t-1} w_{t,j}\mathbf{1}\{|\tilde{\epsilon}_{t-j}|>0\}} \right), 
$$
recalling $\tilde{\epsilon}_{1:T_{60}}$ are the median filter's residuals on the day we are interested in, \& $\tau_t$ the seconds since the open of trade $t$, $\tau_0=0$; until the day's first nonzero residual the second term is dropped.  Thus $b_t$ is a weighted median of $\tilde{\epsilon}^*_{1:T^*}$ \& $\tilde{\epsilon}_{1:t}$, which uses the previous week's opening data but swaps it out as the new data becomes available.  This approach is not perfect.

Second, we need to deal with $\hat{\sigma}(u)$ for $u<15$, the first 15 seconds of trading \& set 
$$
\hat{\sigma}^2(u) = \frac{15-u}{15} \tilde{\sigma}^2 + \frac{u}{15} \frac{1}{\bar{\Delta}_{T_u}} \sum_{j=0}^{T_{u}-2} w_{T_{u},j} Z_{T_{u}-j,s},\quad u\in [0,15),
$$
where each of the previous 5 days gives the average spot variance over its first minute, \& $\tilde{\sigma}^2$ is the median of those 5 (pre-opening data was tried, but that had noticeably lower volatility than the first few minutes of the open data). $\hat{\sigma}$ shrinks the new datasource towards $\tilde{\sigma}$.  Hence $\hat{\sigma}(u)$ is initially determined by those 5 opening minutes, but as new data arises we swap out the average of those days' opening data with the new data --- trade by trade. Over the first $s$ seconds of a day there is no trade $s$ seconds earlier. For this filter we set $\hat{X}_{t,s}=X_1$, the price at the opening auction, and use the return only once that trade is at least 1 second old.

Diurnal volatility features are strong in financial markets, with intensive volatility at the start of the trading day, followed by declining volatility for much of the day \citep[e.g.][]{AndersenBollerslev(97jef)}.   For the {\tt NVDA} data, the volatility over the first 15 minutes is around 4 times higher than during the middle of the day.  For some markets, volatility rises near  the close.    

The diurnal feature can be exploited to improve the spot estimator.  Again start with  the Huber filtered $s=1$ second returns, employed to provide an initial estimate $\bar{\sigma}(u)$ using 10 seconds of data \& uniform weights.  Now we depart from the above.  For each value of $u$ we average these $\bar{\sigma}^2(u)$ over the previous 10 trading days, yielding $\grave{\sigma}^2(u)$. Then we fit a flexible Fourier functional \citep[e.g.][]{AndersenBollerslev(97jef)} form $g(u)$, with linear \& quadratic terms \& 4 cosine-sine pairs, on the $\log\{\grave{\sigma}^2(u)\}$, yielding an estimated diurnal fit of $\bar{\bar{\sigma}}(u)=\exp\{g(u)/2\}.$  
We then take our refined filtered volatility as 
$$
\hat{\sigma}^2(u) = \bar{\bar{\sigma}}^2(u)
\frac{1}{\bar{\Delta}_{T_u}} \sum_{j=0}^{T_u-2} w_{T_u,j}\, \frac{Z_{T_u-j,s}}{\bar{\bar{\sigma}}^2(\tau_{T_u-j})},\quad u \in [15,23{,}400].
$$
If the weights have very short memory then the impact of explicitly modeling the diurnal feature will be tiny, but with long memory effects it can make a difference.  Of course, more refined modeling can allow the diurnal feature to slowly change through time. 

We select $a$ using forecasting.  (1) we compute $\bar{\sigma}(u)$, a quite noisy estimate of the spot volatility $\sigma(u)$ using 10 seconds of the trade-by-trade Huber filtered price data with $s=1$ \& a uniform weight function.  This averages roughly 700 squared returns, but the returns are heavily overlapping. (2), we forecast $\bar{\sigma}(u)$ using $\hat{\sigma}(u-60;a)$, a hyperbolic filtered spot volatility 60 seconds before time $u$ noting that filter depends upon the ``$a$'' (note $\bar{\sigma}(u)$ \& $\hat{\sigma}(u-60;a)$ only share overlapping data through $\delta_t$, the Huber threshold, which has a 4-minute half-life, \& impacts the Huber filter).  We minimize the QLIKE objective function \citep[e.g.][]{patton2009evaluating,PattonSheppard(15)} to select ``$a$'', over the days in the sample
$$
\int_{75}^{23{,}400}\left\{\frac{\bar{\sigma}^2(u)}{\hat{\sigma}^2(u-60;a)} + \log \hat{\sigma}^2(u-60;a)\right\} {\mathrm d}u.
$$  
This is minimized at $a=563$ trades, roughly 8 seconds.   

We also fitted an Ex.\ref{ex:superposition} {\tt HExp}-type superposition structure, taking 4 exponential components and choosing their mixture weights by the same objective we use to choose $a$.  Figure~\ref{fig:preav_hyperhexp} shows the two fitted spot volatility paths: they are hard to tell apart, the correlation of their logarithms being $0.999$, so we keep the hyperbolic model, which is simpler.

\begin{figure}[h!]
    \centering
    \includegraphics[width=0.85\linewidth]{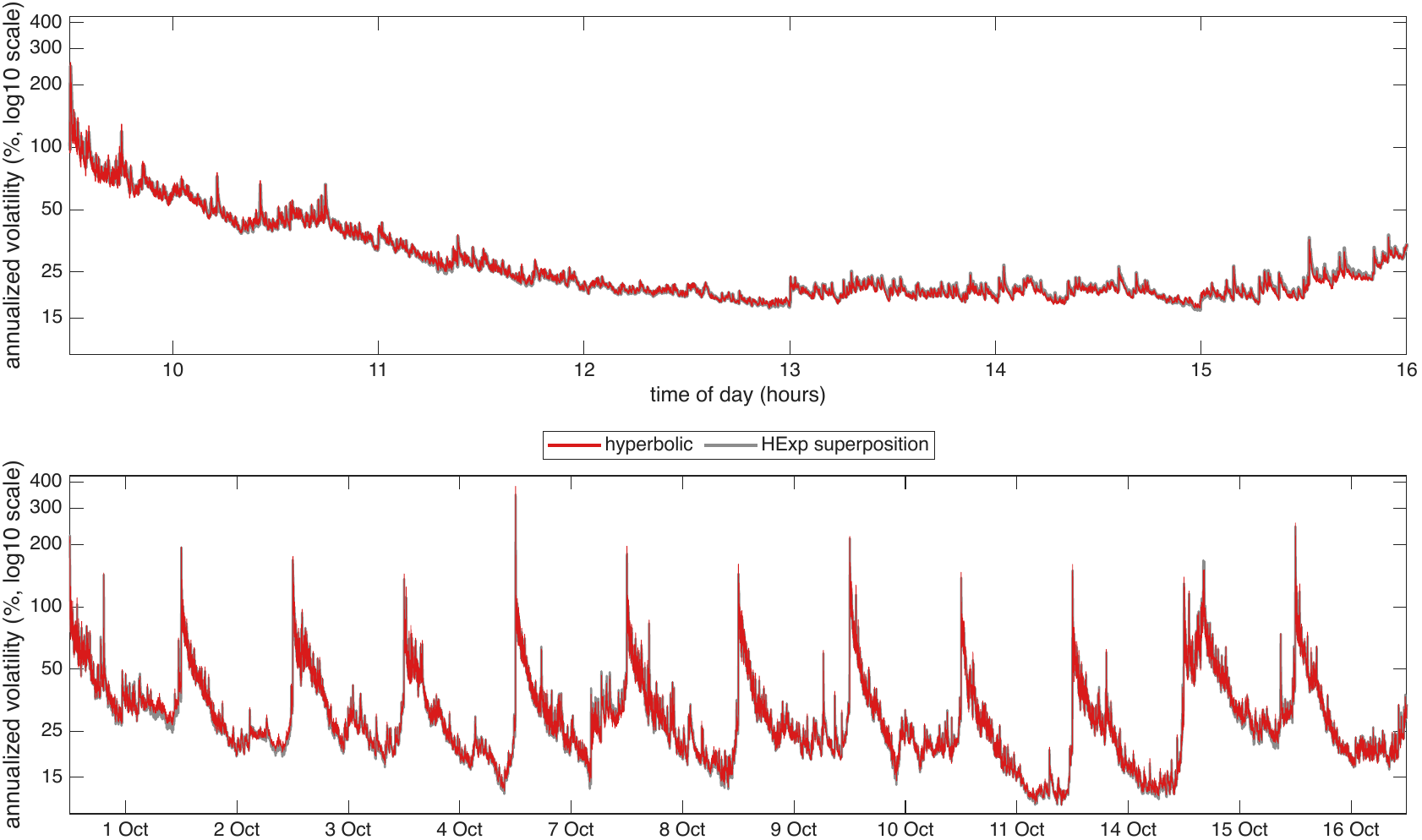}
    \caption{Hyperbolic against superposition weights for \texttt{NVDA} over 12 trading days in Oct 2024, the diurnally adjusted $\hat{\sigma}(u)100\sqrt{252}$ at $s=1$ second from the Huber filter, shown as annualized vol in \% on a log10 scale. The only difference between the 2 paths is the weight function: the hyperbolic weights of Ex.\ref{ex:hyperbolic} with $a=563$ in red, drawn on top, and the {\tt HExp} superposition of Ex.\ref{ex:superposition} in grey, 4 exponential components with half lives of 2, 20, 200 \& 2,000 seconds. Hyperparameters minimize the QLIKE objective. Top: 16 Oct, bottom: 12 days. }
    \label{fig:preav_hyperhexp}
\end{figure}

The robustness of the filtered volatility 
$
\hat{\sigma}(u)
$
is determined solely by the robustness of the price filter. We compute the path 3 ways: with the price filtered by the robust median filter, the robust Huber filter and the non-robust linear filter. The results for the median \& Huber filters are sufficiently close that we ignore the median case here.  

The top of Figure~\ref{fig:preav_volpath} shows the Huber- and linear-based paths as annualized volatilities, $\hat{\sigma}(u)100\sqrt{252}$ percent on 16 Oct 2024. Thus it can be read next to the top of Figure \ref{fig:preav_path} which showed the filtered path of the price. Recall it focused on the trade at \$118.85 around 12:43.

The spot volatility paths track each other through the quiet stretches, though the linear-based path lies about a quarter above the Huber-based one at the typical trade. That difference is the noise the linear filter leaves in (we saw this in the volatility signature plots in Figure \ref{fig:preav_signature} where the Huber filter allows us to use 1 second returns, but the linear filter would need roughly 10 second returns). At the \$118.85 trade the linear-based path jumps from around $23$ to around $273\%$ annualized and fades over the following half hour, while the Huber path stays where it is.
This difference is simply driven by the lack of robustness in the linear filter of the price. This distortion by the linear method causes a larger relative magnitude mistake in the spot volatility estimators later in the day.  It is systemic in this dataset.\footnote{Among the 54 assets, Appendix~\ref{app:xsec} shows that the difference mainly depends on the trading rate. The linear filter's spot volatility is 1.22 times the Huber filter's at the median level above 10 trades per second \& 1.01 times below 1. The slow stocks have relatively more outliers, not fewer, but moderate-sized ones, which do little harm to the linear filter.}       

\begin{figure}[h!]
    \centering
    \includegraphics[width=0.85\linewidth]{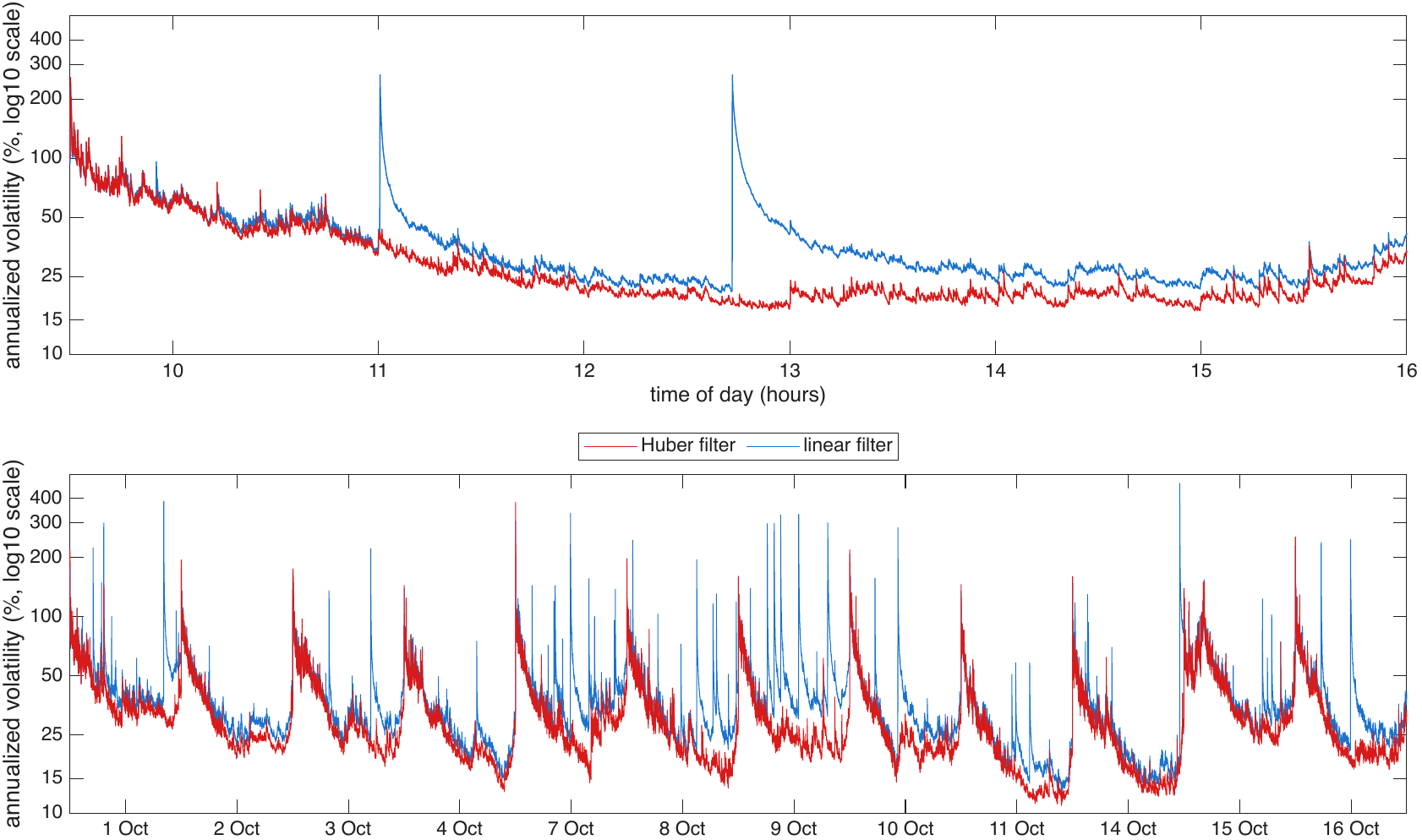}
    \caption{Filtered spot volatility for \texttt{NVDA} over 12 trading days in Oct 2024 ($19.6$M trades), the diurnally adjusted estimator $\hat{\sigma}(u)100\sqrt{252}$ at $s=1$ second with hyperbolic weights of Ex.\ref{ex:hyperbolic}, Mandelbrot parameter $a=563$, with the input prices computed using the Huber \& linear filters, shown as annualized volatility in percent on a log10 scale. The median \& Huber-filter based versions overlap almost everywhere. Top: 16 Oct, bottom: 12 days. }
    \label{fig:preav_volpath}
\end{figure}

The problematic nature of the linear filter can be seen in the bottom of Figure \ref{fig:preav_volpath}.  This shows the paths of the estimated spot volatility over the 12 trading days. The results based on the Huber filter of prices and the linear filter are roughly similar, but there are many upward blips caused simply by the lack of robustness in the linear filter.  It is not that there are no upward jumps in the spot volatility based on the Huber filter of prices: sometimes it happens responding to substantial news announcements.  It is that these jumps are much less common than in the linear filter case and that is due to the linear filter's severe vulnerability to outliers.

\subsection{Direction of extreme market microstructure noise: ``stale trades''}\label{sect:preav-direction}

The fifth filter is built on the residual of the Huber price filter of Section~\ref{sect:preav-path}, $\hat{\epsilon}_t = Y_t-\theta_t.$ We call a trade a statistical ``outlier'' when that residual is beyond the Huber threshold, $|\hat{\epsilon}_t|>\delta_{t-1}$, and there are $o=16{,}314$ of these on 16 October, roughly $1.1\%$ of the trades on that day. We view a statistical outlier as extreme market microstructure noise.  

The left of Figure~\ref{fig:preav_outdir} shows the Huber filtered price (black lines) \& the corresponding trades (dots), all recorded over a 12 minute period on 16 October.  Outliers above ($Y_t-\theta_t>\delta_{t-1}$) are shown in red, outliers below ($Y_t-\theta_t<-\delta_{t-1}$) in blue.  

As the robust filtered price falls, the trades outside the Huber band turn out to be mostly colored in red, being way above the filtered price.  When the filtered price rises the trades outside the Huber band are mostly blue, being way below the filtered price.      

\begin{figure}[h!]
    \centering
    \includegraphics[width=0.85\linewidth]{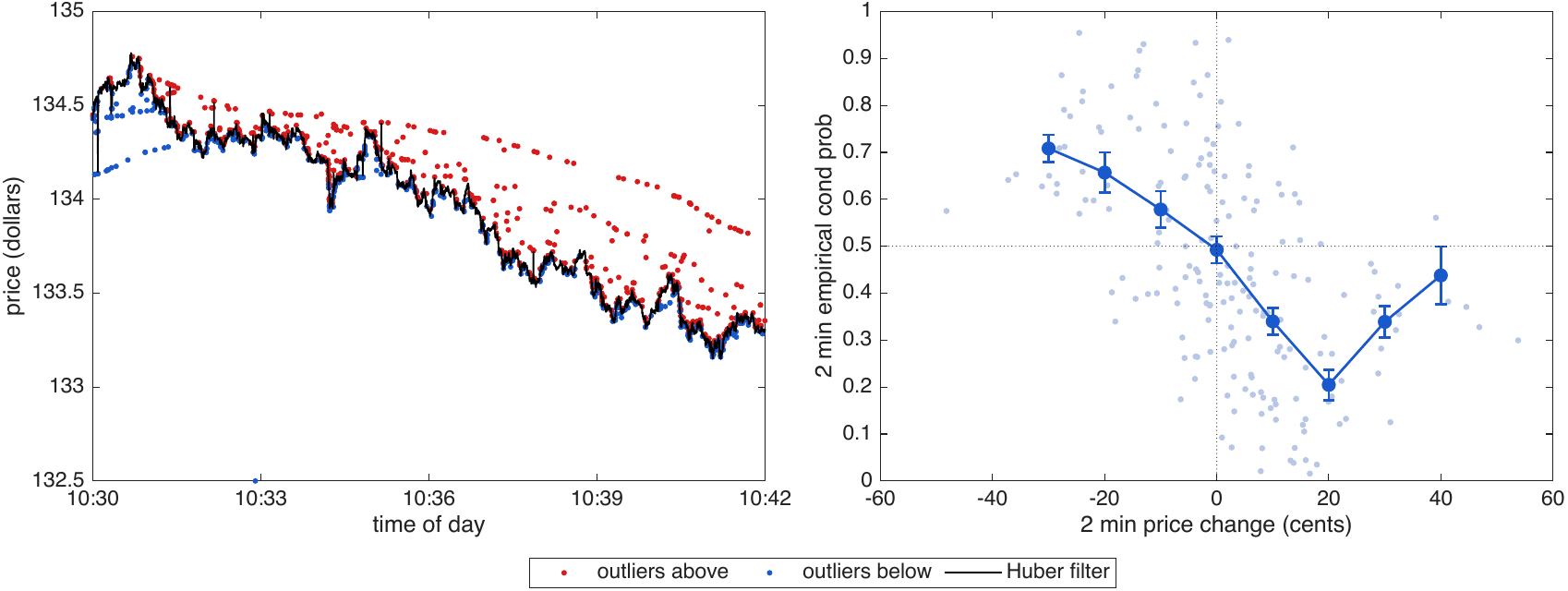}
    \caption{Outliers \& robust price for \texttt{NVDA} on 16 Oct 2024. Left: trades from 10:30 to 10:42; outliers above $Y_t-\theta_t>\delta_{t-1}$ in red, outliers below $Y_t-\theta_t<-\delta_{t-1}$ in blue, Huber filter in black. Right: over the trading day, the 2 minute empirical conditional probability that outliers are above (rather than below) the Huber band against the 2 minute change in the Huber filter: light points are single bins, the line \& bars are means \& s.e. over 10-cent ranges.}
    \label{fig:preav_outdir}
\end{figure}

An interpretation of these empirically overwhelming patterns, which systematically appear over all time periods in the dataset, is that some of the trades recorded  outside the Huber band are ``stale trades.'' Stale trades largely come from the late recording of off-exchange trades.  

To summarize this, the right of Figure \ref{fig:preav_outdir} bins the data into 2-minute intervals, again on 16 October. On the $y$-axis is the fraction of the trades outside the Huber band which are above the band (i.e. colored red) --- the conditional probability an outlier is above the price rather than being below.  On the $x$-axis is the filtered price at the end of the 2-minute bin minus the filtered price at the start of the bin.  The raw results are dots.  There is a strong negative correlation between the 2 variables.  To explore this in more detail we calculate sample means and standard errors, in blocks of 10-cent price move intervals.  Close to zero price changes there looks like a strong negative relationship.  For large absolute price changes there is some evidence the relationship may become weaker (i.e. a rotated $S$ shape), but there is not enough data to be more precise about this. The pattern holds on each of the 12 days.

We formalize these descriptive statistics by defining the ``directional conditional probability,'' the filtered probability through time that an outlier lies above the price: 
$$
q_t = \frac{p_t^+}{p_t^+ + p_t^-},\quad t \in \{T_{60}+1,...,T-1\},
$$
where $p_t^+$ is the filtered probability an outlier is above the Huber band and $p_t^-$ is the corresponding probability of being below the Huber band.  Of course $p_t^+ + p_t^-$ will vary around 0.01 as only around 1\% of the trades sit outside the band throughout this day. We exclude the opening minute warm-up, and the closing auction, which is never an outlier since $\theta_T=Y_T$. 

We build a filter for $p_t^+$ using the Bernoulli log-likelihood as the loss,
$
    L(p, d) = -d\log p - (1-d)\log(1-p),
$
where $p\in[0,1]$ is the candidate value of the filter and $d\in\{0,1\}$, taking the value of 1 if the data is above the Huber band ($Y_t-\theta_t>\delta_{t-1}$) and 0 otherwise. Then  
\begin{align*}p_t^+ &= \underset{p \in [0,1]}{\arg } \min \sum_{j=0}^{t-1} w_{t,j}\, L(p, d^+_{t-j})= \sum_{j=0}^{t-1} w_{t,j}\, d^+_{t-j},\quad  t \in \{T_{60}+1,...,T-1\}
\end{align*}
the ``binomial filter.''  Repeating this for $p_t^-$ yields
$$p_t^- = \underset{p \in [0,1]}{\arg } \min \sum_{j=0}^{t-1} w_{t,j}\, L(p, d^-_{t-j})  = \sum_{j=0}^{t-1} w_{t,j}\, d^-_{t-j},\quad  t \in \{T_{60}+1,...,T-1\},
$$
where $d^-_t$ takes the value of 1 if trade $t$ is below the Huber band ($Y_t-\theta_t<-\delta_{t-1}$) and 0 otherwise.  Then the filtered conditional probabilities are
$$
q_t = \frac{\sum_{j=0}^{t-1} w_{t,j}\, d^+_{t-j}}{\sum_{j=0}^{t-1} w_{t,j}\, (d^+_{t-j} + d^-_{t-j})}, \quad t \in \{T_{60}+1,...,T-1\}.
$$
The loss belongs to the canonical exponential family whose form drives the analytic solution (\citet{donkershephard2025CEF}). 

Throughout we have used weights decaying in clock time
$w_{t,j}\propto \lambda^{\tau_t-\tau_{t-j}},$ 
selecting $\lambda$ to maximize the sum of the Bernoulli (trade-by-trade) predictive log-likelihood over the 2 binary series $d_{T_{60}+1:T-1}^+$ and $d_{T_{60}+1:T-1}^-$, scoring trade $t$ by the filtered probabilities $p_{t-1}^{+}$ and $p_{t-1}^{-}$ at the previous trade,\footnote{The trade $t-1$ filters are the Definition~\ref{def:ewm} predictors for trade $t$. With $\alpha=1$ the predictor for trade $t$ minimizes $\sum_{j=1}^{t-1} w_{t,j} L(p,d^{+}_{t-j})$, \& these weights are the weights of the filter $p^{+}_{t-1}$ multiplied by the common factor $\lambda^{\tau_{t}-\tau_{t-1}}$, so the 2 minimizers coincide. Likewise for $p^{-}_{t-1}$. This relies on the exponential weights.} which yields a 13 second half life, although we note the log-likelihood is roughly flat for the half life in the range of 10 to 20 seconds.    

\begin{figure}[h!]
    \centering
    \includegraphics[width=0.85\linewidth]{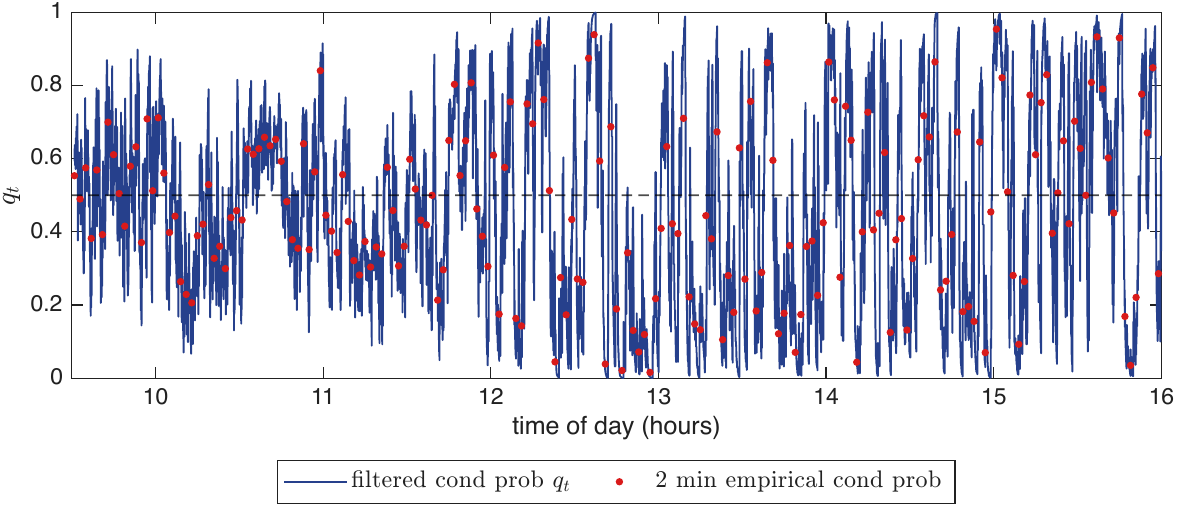}
     \caption{The directional conditional probability for \texttt{NVDA}, 16 Oct 2024. The filtered conditional probability $q_t$ that an outlier lies above the robust Huber band, under exponential weights decaying in clock time with half life of 13 seconds. The dashed line marks $0.5$. The red points are the empirical conditional probability using the 2-minute bins of Figure~\ref{fig:preav_outdir}.}
    \label{fig:preav_direction}
\end{figure}

Figure~\ref{fig:preav_direction} plots the filtered conditional probability  $q_t$ through time for \texttt{NVDA} throughout 16 October, along with the 2 minute empirical conditional probability from the right of Figure \ref{fig:preav_outdir}.  These are reasonably aligned, and $q_t$ is available trade by trade, but at a 13 second half-life it is estimated from roughly 12 outliers, so it is noisy.

\subsection{Other potential uses of the Huber filtered prices}

The focus has been on using the Huber filtered prices as an input into estimates of the time-varying volatility.  But the same type of filtered prices could be used as the basis for, for example, measuring the size of jumps in prices \citep{Mancini(01paper),BarndorffNielsenShephard(04jfe),BarndorffNielsenShephard(06test)}, estimating time-varying covariances \& betas \citep{BarndorffNielsenShephard(04multi),BarndorffNielsenHansenLundeShephard(11multi)}, factor pricing models \& asset allocation weights \citep{Pelger(19)}, statistical leverage \citep{WangMykland(14)} \& causal effects of monetary policy announcements on macro economic variables \citep{BauerSwanson(23)}.

\section{Conclusions}\label{sect:conc}

This paper suggests a non-recursive way of defining \& computing filters with general loss functions with the option to have a long-run anchor.  It  removes a computational bottleneck in the use of robust methods with flexible weight functions.  We provide some theory for this approach. A celebrated version of this setup is the exponentially weighted moving median. Losses and weights of the researcher's choosing can now be used to filter, predict and smooth at every observation; before, that was infeasible for long series outside uniform weights, exponential family likelihoods and linear filters.

The methods are illustrated on massive high frequency financial datasets, robustly filtering the price from which volatility measures are built. Robust methods provide rather simpler results than is often seen in the literature.  The consequent volatility signature plots are flat, day after day, even at the 1 second level. On the same trades, the linear filter's are not. We show that
the residuals, our proxy for the noise, have an infinite variance, which challenges the theoretical framework of the linear preaveraging methods developed in financial econometrics.  We also find an interesting predictable measure of the direction of extreme market microstructure noise.  

\baselineskip=12pt

\bibliographystyle{chicago}
\bibliography{neil.bib, Simon.bib}

\newpage

\baselineskip=20pt

\appendix

\begin{center}
\Large \textbf{Online appendix to: \\
``Filtering without recursion \\ and some of its uses in financial economics''}\\[3pt]
\normalsize
Simon Donker van Heel and Neil Shephard
\\
\today
\end{center}

\pagenumbering{arabic}
\renewcommand*{\thepage}{S\arabic{page}}
\setcounter{page}{1}
\setcounter{equation}{0}
\renewcommand{\theequation}{\Alph{section}.\arabic{equation}}
\setcounter{table}{0}
\renewcommand{\thetable}{\Alph{section}.\arabic{table}}
\setcounter{figure}{0}
\renewcommand{\thefigure}{\Alph{section}.\arabic{figure}}

\section{Robustness checks}\label{app:robust}

\subsection{Tail index}\label{sect:tailindex}

Could the Hill estimate of the tail index for the residuals being much less than 2, in Section \ref{sect:preav-tails} where it is around 1.4, be an artifact of special items rather than being systemic? A list of items which could be checked includes:
\begin{itemize}
\item effect of the intraday scale, 
\item effect of repricings, 
\item effect of the filter's estimation error, 
\item effect of the serial dependence in the residuals, 
\item does the result hold on other days, beyond 16 October?  
\end{itemize}

Our robustness checks reported in Table~\ref{tab:tailrobust} suggest the heavily tailed nature of the noise is systemic for the {\tt NVDA} trades.  The Hill estimate is below 2, on every one of the 12 days.  

\begin{table}[h!]
\centering
\begin{tabular}{@{}lcc}
& 16 October & Twelve days, min to max \\
\cmidrule(r){1-1} \cmidrule(r){2-2} \cmidrule(r){3-3}
All trades (as in the paper) & 1.43 & 1.23 to 1.81\\
Residual divided by $\delta_t$ & 1.51 & 1.31 to 1.85\\
Repricing windows removed & 1.43 & 1.14 to 1.64\\
Every 10th trade & 1.41 & 1.26 to 1.78\\
Six-trade half-life & 1.47 & 1.30 to 1.92\\
\end{tabular}
\caption{Hill estimator of the tail index of the Huber filter residual $\hat{\epsilon}_t = Y_t-\theta_t$, for \texttt{NVDA} over the twelve trading days 1 to 16 Oct 2024. The tail index estimator is the same as computed in Section~\ref{sect:preav-tails}. The twelve-day column gives the smallest and the largest of the twelve daily estimates. Second row: divides the residual by its own threshold $\delta_t$, removing scale variation through the day. Third row: drops trades where the filtered price moves more than 10 cents within $\pm 1$ second, 7\% to 25\% of a day, the windows where the residual picks up filter lag. Fourth row: thins out short-range dependence. Fifth row: doubles the filter's half-life with the thresholds held fixed, changing the filter's estimation error of equation (\ref{eqn:psi*}) with everything else unchanged.}
\label{tab:tailrobust}
\end{table}

In Appendix \ref{app:xsec} we repeat our exercises over 53 other stocks to see if the results for {\tt NVDA} generalize.  Over the 53 other stocks of Table~\ref{tab:xsec} the Hill estimator is between 1 \& 2 for 7 stocks, between 2 \& 3 for 28 stocks, between 3 \& 4 for 12 stocks and 4 or above for 6 stocks.  Whether or not the variance exists, heavy tails is important for high frequency data. Recall, a tail index over 2 would suggest that in principle linear filtering could be used for a preaveraged volatility estimator, but estimating the effect of the noise would likely be fragile, as for a tail index below 4 the estimator would converge very slowly due to the lack of a fourth moment.  Our results do suggest the empirical results for the Huber filter and linear filter are most materially distinct when the tail index is below 2.

\subsection{Comparing spot volatility estimators with $s=10$}

The impact of noise on the spot volatility scales with $\psi_t^{*2}/s$.  Figure \ref{fig:preav_volpath10} repeats Figure \ref{fig:preav_volpath}, comparing the Huber filter and the linear filter, but now with $s=10$, rather than $s=1$.  At $s=10$ the linear-based path lies $2.5\%$ above the Huber-based path at the typical trade, down from $23\%$ at $s=1$, and at the \$118.85 trade it jumps from around $20\%$ to around $85\%$ annualized, rather than to $273\%$.

\begin{figure}[h!]
    \centering
    \includegraphics[width=0.85\linewidth]{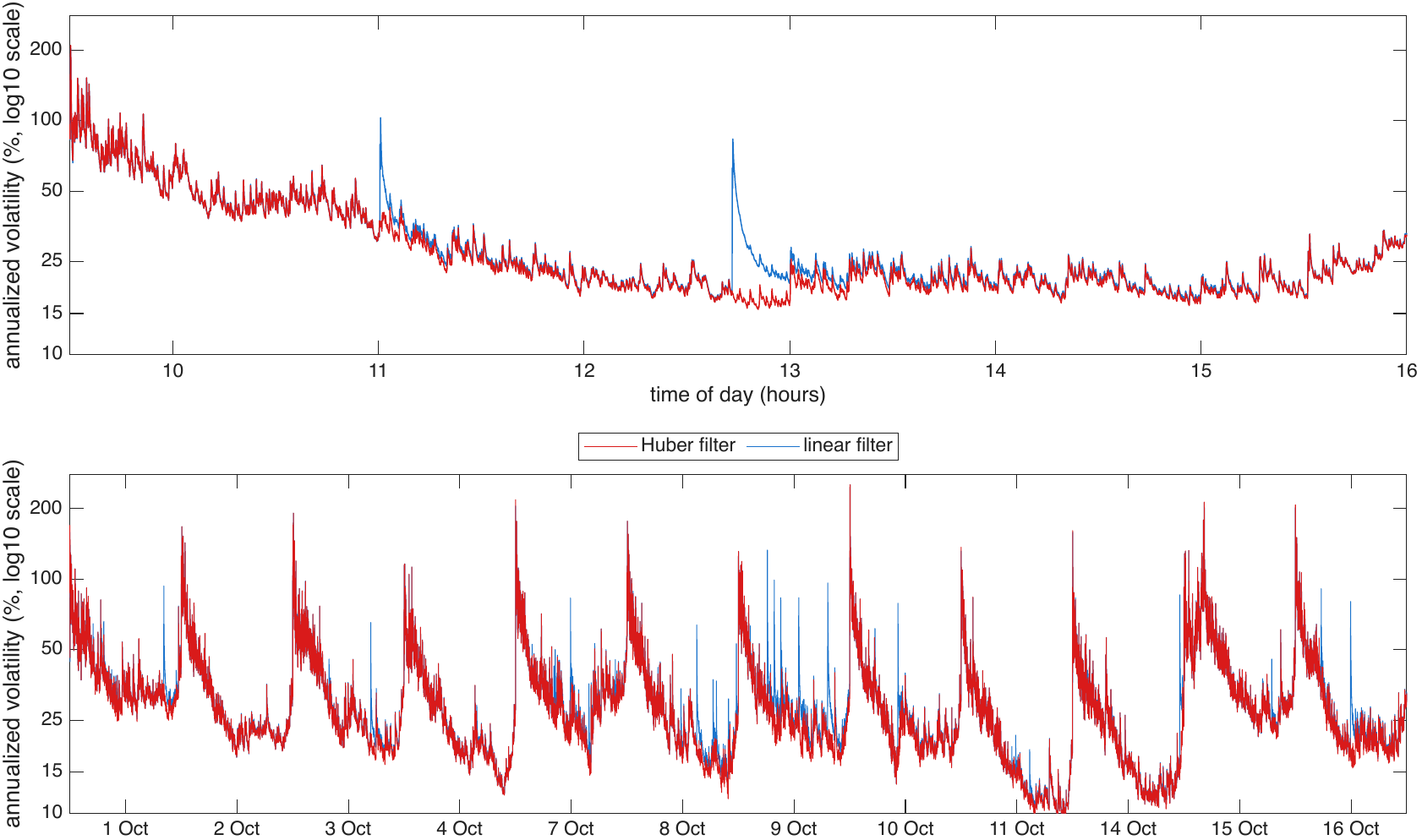}
    \caption{Filtered spot volatility for \texttt{NVDA} over 12 trading days in October 2024 ($19.6$ million trades), the diurnally adjusted estimator $\hat{\sigma}(u)100\sqrt{252}$ at $s=10$ seconds under the hyperbolic weights of Ex.\ref{ex:hyperbolic}, the Mandelbrot parameter $a=563$, with the input prices computed using the Huber filter and the linear filter, shown as annualized volatility in percent on a log10 scale. The median-filter based price path, not shown, overlaps the Huber-based one almost everywhere. Top: 16 October. Bottom: the 12 trading days.}
    \label{fig:preav_volpath10}
\end{figure}

\subsection{Volatility results using Huber filtered prices as $s$ varies}

Figure~\ref{fig:preav_volpath_s1s10} compares the filtered spot volatility based on the Huber price path at $s=1$ and $s=10$. The 2 paths track each other closely, with the $s=10$ path about $1\%$ lower at the typical trade.

Two effects work against each other. The averaging in the price filter shrinks a 1 second return more than a 10 second one, which lowers the $s=1$ path by more. The noise that remains after filtering scales with $1/(\tau_t-\tau_{t,s})$, so it raises the $s=1$ path by more. The correction $c_{t,s}$ removes the first effect but not the second. On a noise-free price the estimator has no dependence on $s$, so the difference here comes from the noise alone.

\begin{figure}[h!]
    \centering
    \includegraphics[width=0.85\linewidth]{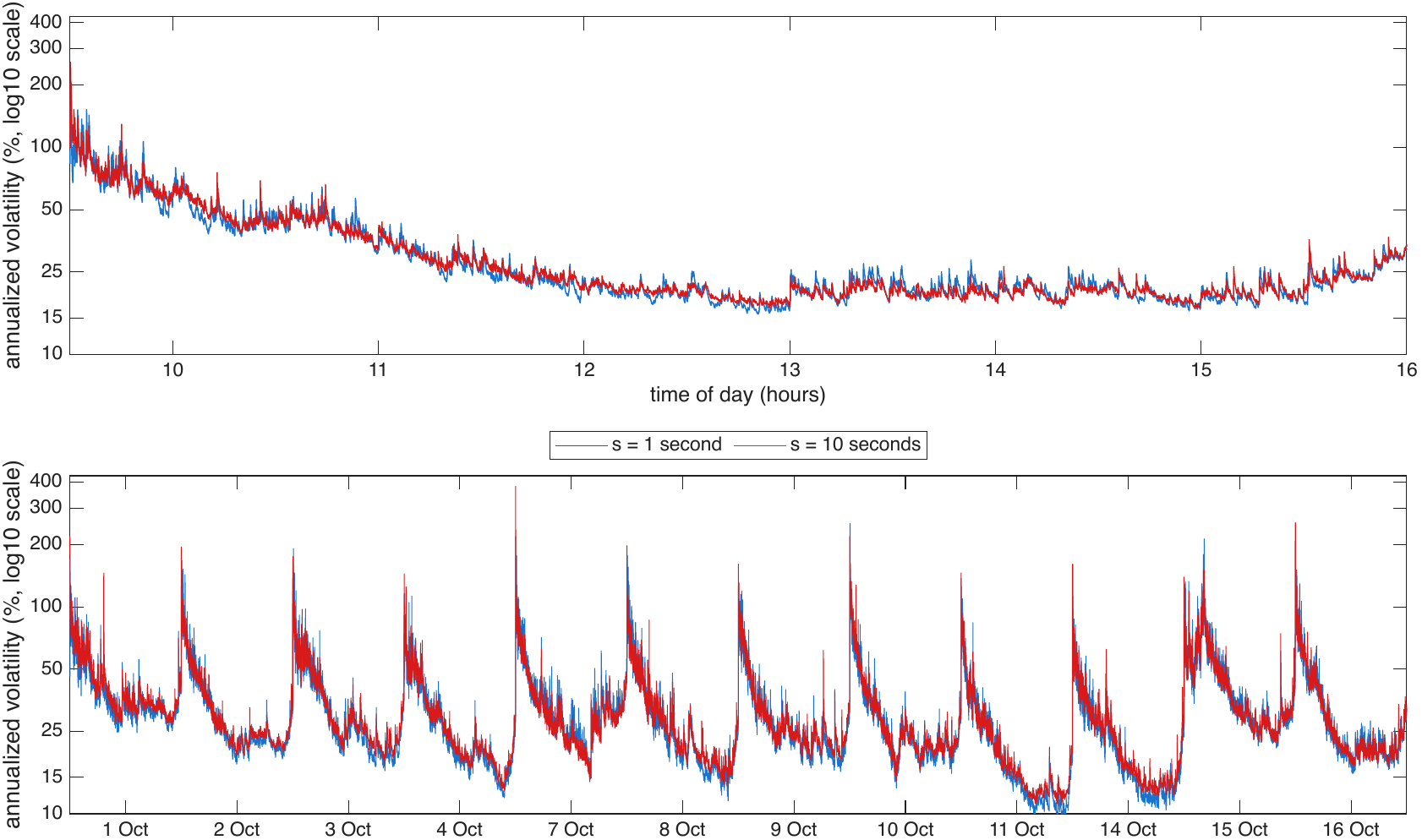}
    \caption{Filtered spot volatility for \texttt{NVDA} over 12 trading days in October 2024, the diurnally adjusted estimator at $s=1$ and $s=10$ seconds from the Huber filtered prices, under the hyperbolic weights with $a=563$, shown as annualized volatility in percent on a log10 scale. Top: 16 October. Bottom: the 12 trading days.}
    \label{fig:preav_volpath_s1s10}
\end{figure}

\subsection{Time-varying standard error of Huber filter}\label{sect:TVpsi}

To quantify how the Huber filter's standard error $\psi_t^*$ changes through time, we ran a uniform filter (a rolling estimate) based on 100,000 trades on 16 October. Figure~\ref{fig:preav_rollpsi} plots the result.  

\begin{figure}[h!]
    \centering
    \includegraphics[width=0.95\linewidth]{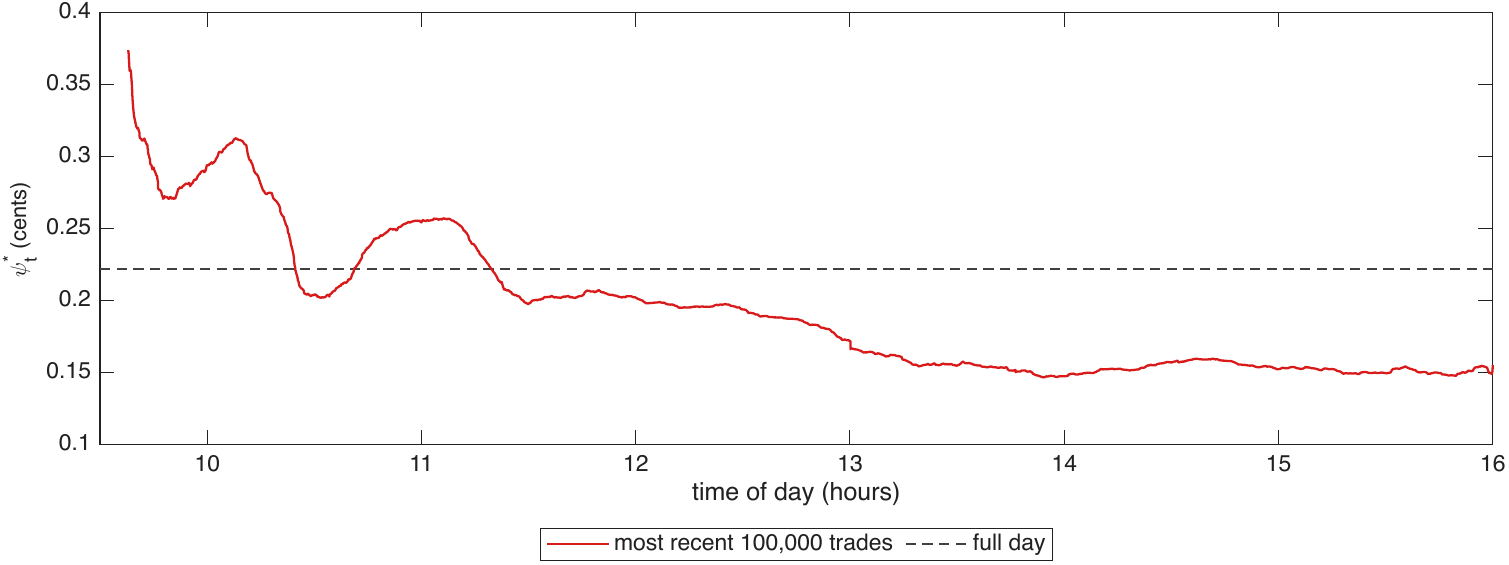}
    \caption{The Huber filter's standard error $\psi_t^*$ for \texttt{NVDA} on 16 October 2024, estimated on a uniform window of the most recent $100{,}000$ trades, moved forward $1{,}000$ trades at a time, the estimator of Table \ref{tab:errors} applied to each window. Dashed line:  full-day value, $0.22$ cents.}
    \label{fig:preav_rollpsi}
\end{figure}

There seems some substantial change, with the estimated standard error falling through the day.  In the second half of the day the standard error is around one half the value seen in the first hour of trading.  This halving is roughly the same as the change in values of the Huber threshold $\delta_t$ seen in Figure \ref{fig:preav_delta_hl}.

\subsection{The noise scale in the first minute}

Section~\ref{sect:preav-ends} starts the noise scale $b_t$ from the residuals of the previous 5 openings \& lets the current day's own residuals take over at the rate the clock runs, fully after one minute. The alternative is to let them take over at the rate the trades arrive. Trades arrive fastest in the first seconds of the day, when the residuals are also largest; on \texttt{NVDA} half of the first minute's trades arrive in its first 10 seconds. Figure~\ref{fig:six_delta_open} shows the Huber threshold $\delta_t = 2 \times 1.4826 \times b_t$ under both weighting rules for the 6 stocks of Figure~\ref{fig:six_signature}, each over the first 2 minutes of the opening on which the two differ most. By trade count $\delta_t$ jumps in the opening seconds and falls back; by the clock it moves gradually from the level of the previous openings to the day's own. After the first minute the two coincide, so nothing later in the day depends on the choice.

\begin{figure}[h!]
    \centering
    \includegraphics[width=0.95\linewidth]{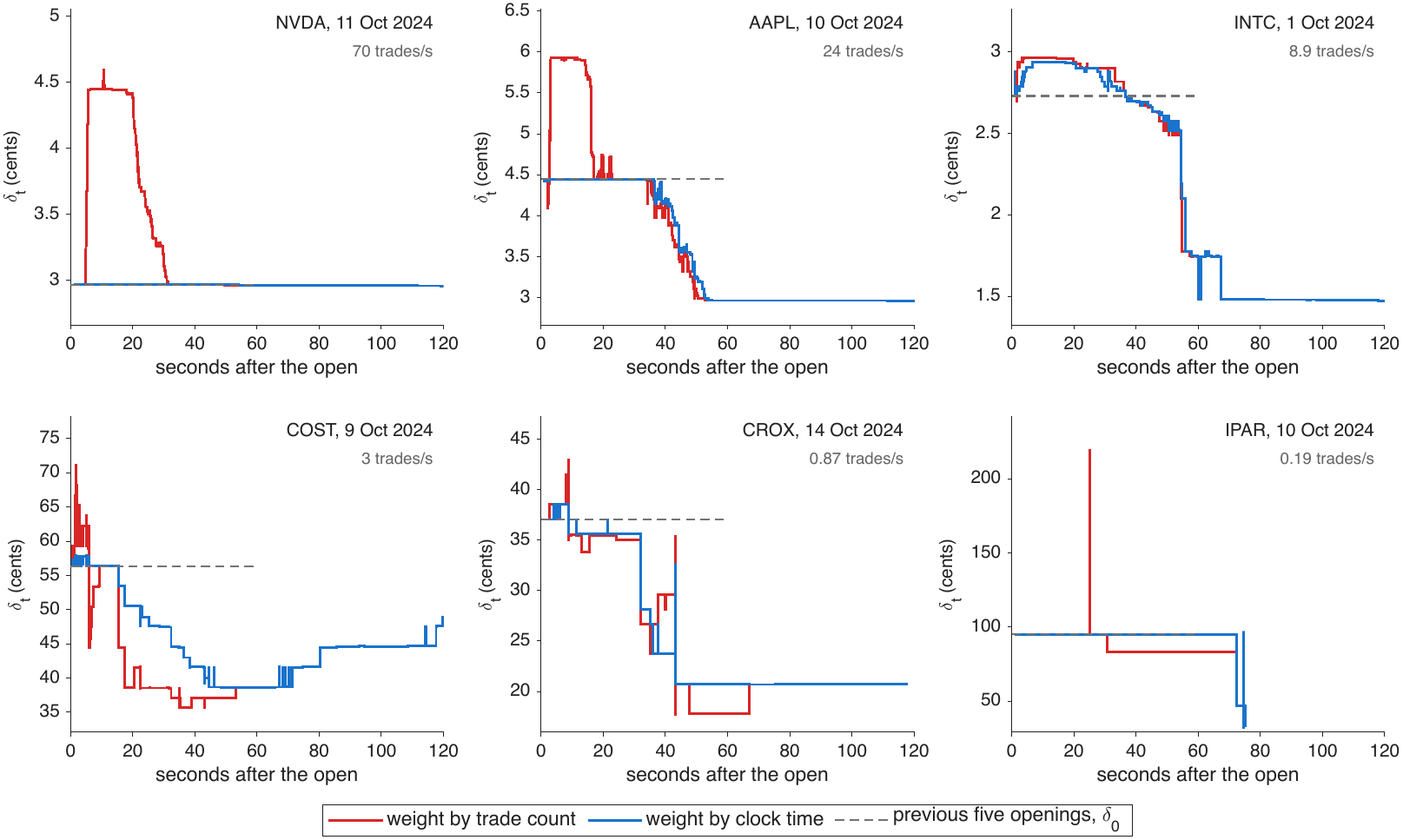}
    \caption{The Huber threshold $\delta_t$ in cents over the first 2 minutes of trading, for 6 stocks ordered by trading rate; each panel gives the stock's average trading rate over the 12 trading days. Each panel shows the day, of those 12, on which the trade-count weight moves $\delta_t$ most. Blue: the weights the paper uses, $(60-\tau_t)/(T^*\times 60)$ on the pooled previous openings and $\tau_t/60$ on the day's own residuals, from $b_t$ in Section~\ref{sect:preav-ends}. Red: those two replaced by $(T_{60}-t)/(T^*\times T_{60})$ and $t/T_{60}$, mixing by trade count rather than by the clock. Dashed: $\delta_0$, the level the previous five openings set. The two are identical for $t>T_{60}$.}
    \label{fig:six_delta_open}
\end{figure}

\subsection{Computing the finite sample correction}\label{app:c}

Simulate $M$ independent Brownian log-prices at the trade times $\tau_{1:T}$,
$$
B_m(\tau_t) - B_m(\tau_{t-1}) = \sigma \sqrt{\tau_t-\tau_{t-1}}\, U_{m,t}, \quad U_{m,t} \overset{iid}{\sim} N(0,1), \quad m \in \{1,...,M\},
$$
\& produce each $\hat{B}_m(\tau_{1:T})$ by the Huber filter at the half-life \& the threshold path $\delta_{1:T}$ of the day being measured; the median \& the linear columns of Table~\ref{tab:preav_diffs_time} instead use their own filter.\footnote{The $\sigma$ is the day's own volatility, the Huber filter's 15 minute level in Table~\ref{tab:preav_diffs_time}, converted to seconds. Its exact value matters little, since the threshold path $\delta_{1:T}$ is taken from the day and the correction for the Huber filter is within $1\%$ on average of the correction for the linear filter, which does not depend on $\sigma$. For computational speed we compute the expectation with the filter's window truncated at 30 lags, which costs at most $0.1\%$ in the estimate. We do this throughout the paper.} Then
$$
\hat{c}_{t,s} = \frac{\sum_{m=1}^{M} \{B_m(\tau_t)-B_m(\tau_{t,s})\}^2}{\sum_{m=1}^{M} (\hat{B}_{m}(\tau_t)-\hat{B}_{m}(\tau_{t,s}))^2}.
$$
The numerator has expectation $\sigma^2(\tau_t-\tau_{t,s})$ and the denominator $\mathbb{E}[(\hat{B}_{m}(\tau_t)-\hat{B}_{m}(\tau_{t,s}))^2]$, so $\hat{c}_{t,s}$ consistently estimates $c_{t,s}$ as $M\rightarrow \infty$ using the SLLN plus Slutsky's theorem. Taking both on the same $M$ paths helps as the $m$-th path of Brownian motion $B_m$ is a control variable for the $m$-th path's filtered version $\hat{B}_m$ \citep{Ripley(87)}. The 2 returns move together, so their ratio is far better determined than either sum, and where the filter changes the return little they are nearly equal and $\hat{c}_{t,s}$ is nearly 1 whatever the paths did.

One simulated price gives one draw of every return at every $s$, so $M$ prices give $M$ draws of all of them from one pass of the filter over each. The half-life and $\delta_{1:T}$ are inputs, so no tuning parameter is estimated inside the simulation.

\subsection{Volatility signatures on simulated prices}\label{app:simulated}

Here we look at simulated volatility signature plots based on the Huber filter.  The data takes the times of the trades as fixed at the $\tau_{1:T}$ seen on 16 October.  Then the prices are simulated using Brownian motion or the pure jump variance gamma process, either with or without very heavy tailed noise.  

\begin{table}[h!]
\centering
\begin{tabular}{lcc}
\toprule
 & $Y_t = X_t$ & $Y_t = X_t + \epsilon_t$\\
\midrule
$X$ Brownian motion & base case & adding noise\\
$X$ variance gamma & pure jump & noise \& pure jumps\\
\bottomrule
\end{tabular}
\caption{4 simulated cases. The rows are the price, the columns the trades. In all four $c_{t,s}$ is computed under the base case, the top left cell.}
\label{tab:fourcases}
\end{table}

First simulate a log-price at those trade times $\tau_{1:T}$: 
$$X_t-X_{t-1} = \sigma \sqrt{\tau_t-\tau_{t-1}} U_t,\quad  U_t \overset{iid}{\sim}  N(0,1),\quad t\in \{2,...,T\},
$$ 
with $\sigma$ set so the annualized volatility is constant at $40\%$, and set $Y_t=X_t$, without noise.  Here $X_t$ can be thought of as a simulated time-changed Brownian motion, using the empirical time-change, and 
$$
[X,X](23,400) = 23,400 \sigma^2.
$$  
The corresponding Huber filter's price $\hat{X}_{1:T}$ is produced by the median filter, followed by the scale filter and then the Huber filter --- all applied to $Y_{1:T}$.  In our experiments we run this 100 independent times, holding $\tau_{1:T}$ fixed, and draw the simulated price itself on the same 100.

The left of Figure~\ref{fig:preav_noisefree} computes $\hat{\sigma}_{\text{Annual};s}$ from the Huber filter's price $\hat{X}_{1:T}$. The purple line is the same estimator on the simulated price $X_{1:T}$ itself, drawn on the same 100 paths. For larger $s$ the day holds fewer nonoverlapping windows, which is why the replications vary more around the dashed line truth. Setting $c_{t,s}=1$, the green dotted line, leaves a large negative bias at small $s$, $38\%$ at $s=0.1$ seconds and $7\%$ at 1 second. That downward bias comes from the local robust averaging in the filters, not from the estimator. With $c_{t,s}$ it is gone from half a second on, where the red and purple lines coincide, and what remains below half a second is the bias of the estimated correction. A ratio of 2 averages is biased by a term of order $1/M$, and the returns are too heavy tailed for it to average away across the day. With $M=100$ it is $1.8\%$ of the truth at $s=0.1$ seconds.

Now add heavy tailed noise to the same paths,
\begin{equation}\label{eqn:noise}
Y_t = X_t + \epsilon_t, \quad \epsilon_t \overset{iid}{\sim} \omega \times t_{1.2},
\end{equation}
a scaled student-t random variable with 1.2 degrees of freedom and $\epsilon_{1:T}\ind X_{1:T}$.  We take $\omega=0.24$ cents at that day's opening price, which gives the simulated filter residual $Y_t-\theta_t$ the same median absolute size, $0.21$ cents, as the residual measured on 16 October. The right of Figure~\ref{fig:preav_noisefree} is the result. Setting $c_{t,s}=1$ the estimator is $39\%$ low at $s=0.1$ seconds against $38\%$ without the noise, so it barely responds to the noise. The noise pulls in 2 directions. It adds its own variation to the return, which pushes the estimate up. It also makes the residuals larger, so the scale filter raises $\delta_t$ from $0.16$ to $1.00$ cents, and with a higher threshold the filter clips fewer trades and smooths harder, which pushes the estimate down. The correction removes the effect of the smoothing, so what lies above the purple line at the shortest $s$ is the noise remaining in the filtered price. This is the case closest to the trades we observe, where noise is present and jumps are not.

\begin{figure}[h!]
    \centering
    \makebox[0.53\linewidth]{\textbf{Without noise}}\makebox[0.47\linewidth]{\textbf{With noise}}\\[+1mm]
    \includegraphics[width=\linewidth]{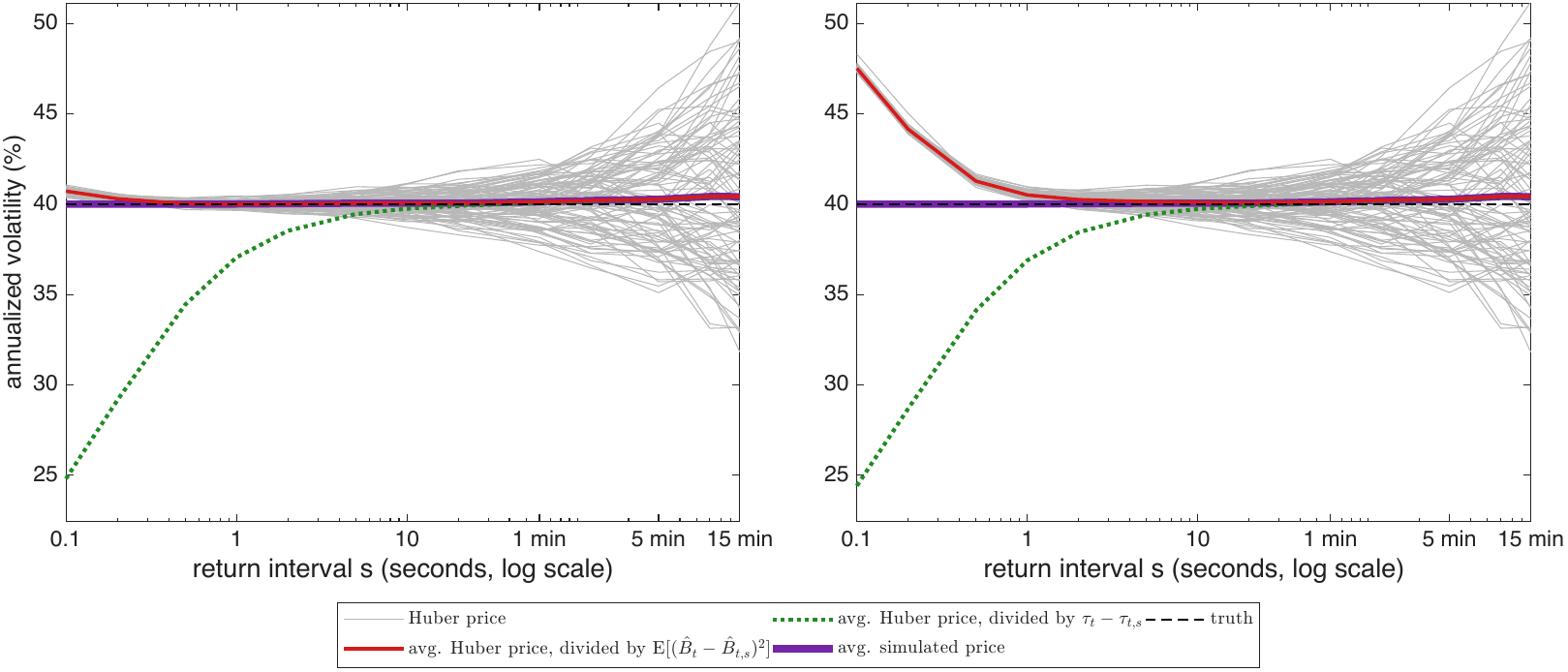}
    \caption{$\hat{\sigma}_{\text{Annual};s}$ applied to the Huber filter's price of the Brownian simulation, $Y_t=X_t$ on the left and $Y_t=X_t+\epsilon_t$ of (\ref{eqn:noise}) on the right, the same 100 paths in both. Each grey line is one replication; the red line is their average, the green dotted line the same average dividing by $\tau_t-\tau_{t,s}$ alone, the purple line the average over the simulated price $X_{1:T}$ itself, \& the dashed line the truth, $40\%$.}
    \label{fig:preav_noisefree}
\end{figure}

Now replace the Brownian price by the symmetric pure jump ``variance-gamma'' L\'{e}vy process \citep{MadanSeneta(90)}.  We start again with the no noise case.  This has discrete increments   
$$X_t-X_{t-1} = U_t \sqrt{Z(\tau_t) - Z(\tau_{t-1})}, \quad  U_t \overset{iid}{\sim}  N(0,1),\quad  
t\in \{2,...,T\},
$$
where $\{Z(u)\}\ind U_{1:T}$ is a gamma process with increments 
$$
Z(\tau_t) - Z(\tau_{t-1}) \overset{indep}{\sim} Ga(\sigma^2 \beta (\tau_t-\tau_{t-1}),\beta),\quad  
t\in \{2,...,T\},\quad Z(0)=0.
$$
For each day, the QV over the trade grid is different: $[X,X](23{,}400) = \sum_{t=2}^T U_t^2 \{Z(\tau_t) - Z(\tau_{t-1})\}$, a random variable with conditional mean $Z(23{,}400) \sim Ga(23{,}400\sigma^2\beta,\beta)$ \& unconditional mean $23{,}400 \sigma^2$. We take $Y_t=X_t$ \& apply our methods.  The LHS of Figure \ref{fig:preav_vg} plots 
$$
\hat{\sigma}_{\text{Annual};s} - 100\sqrt{252}\sqrt{[X,X](23,400)}
$$
against $s$, for 100 paths with $\beta=1$, measuring $X$ in percent, which makes the standard deviation of $Z(23{,}400)$ roughly $40\%$ of its mean.  

Averaged over the 100 paths the estimator is $3.3\%$ low at $s=1$ second, against $7.4\%$ on the Brownian price, and is within $0.6\%$ of the truth from 5 to 10 seconds. The bias is smaller because the variance gamma price barely moves between its larger jumps, so the residuals $Y_t-\theta_t$ are tiny there. The scale filter estimates $\delta_t$ from these residuals, so $\delta_t$ is in effect zero, the Huber loss is the absolute loss and the filter is a weighted median, which reproduces a jump as a step rather than averaging over it. The choice of half-life then hardly matters. The paths spread much more widely than in the Brownian case, and that spread comes from the price, not the filter. Only roughly 100 of the 1.5 million increments are larger than $0.01$ cents, and on a typical path 20 of them account for $99\%$ of the day's QV and 2 for half of it. The same estimator on the unfiltered $X_{1:T}$ spreads almost as much.

With $c_{t,s}$ the estimator is below the truth at the shortest $s$ and above it just beyond, so it is not exactly flat as it is in the Brownian case. The filter takes far less out of this price than out of a Brownian one, keeping $80\%$ of the estimate at $s=0.1$ seconds against $62\%$ there, because a jump survives a weighted median as a step. The correction is calibrated on the Brownian null, so here it is the wrong size, too small at the shortest $s$ and too large beyond them.

Finally, take the variance gamma price with the noise of (\ref{eqn:noise}), the right of Figure~\ref{fig:preav_vg}. Setting $c_{t,s}=1$ the estimator again barely moves when the noise is added, as in the Brownian case. With the correction the departures of the other 2 cases appear together, the noise remaining in the filtered price and the jump the filter clips as an outlier, and at the shortest $s$ the estimator is far above the truth.

Table~\ref{tab:preav_diffs} reports a first-order autocorrelation of $-0.18$ for the Huber filtered price on the data. These simulations say where that sign comes from. Averaging alone gives $+0.39$, its value on the Brownian price without noise, and jumps alone give $0.00$, its value on the variance gamma price without noise. Only the noise can turn it negative, so on \texttt{NVDA} the noise remaining in the filtered price outweighs the filter's averaging. Table~\ref{tab:xsecacf} shows this happens only on the most heavily traded stocks; where trading is slower the averaging dominates and the autocorrelation is positive.

\begin{figure}[h!]
    \centering
    \makebox[0.53\linewidth]{\textbf{Without noise}}\makebox[0.47\linewidth]{\textbf{With noise}}\\[+1mm]
    \includegraphics[width=\linewidth]{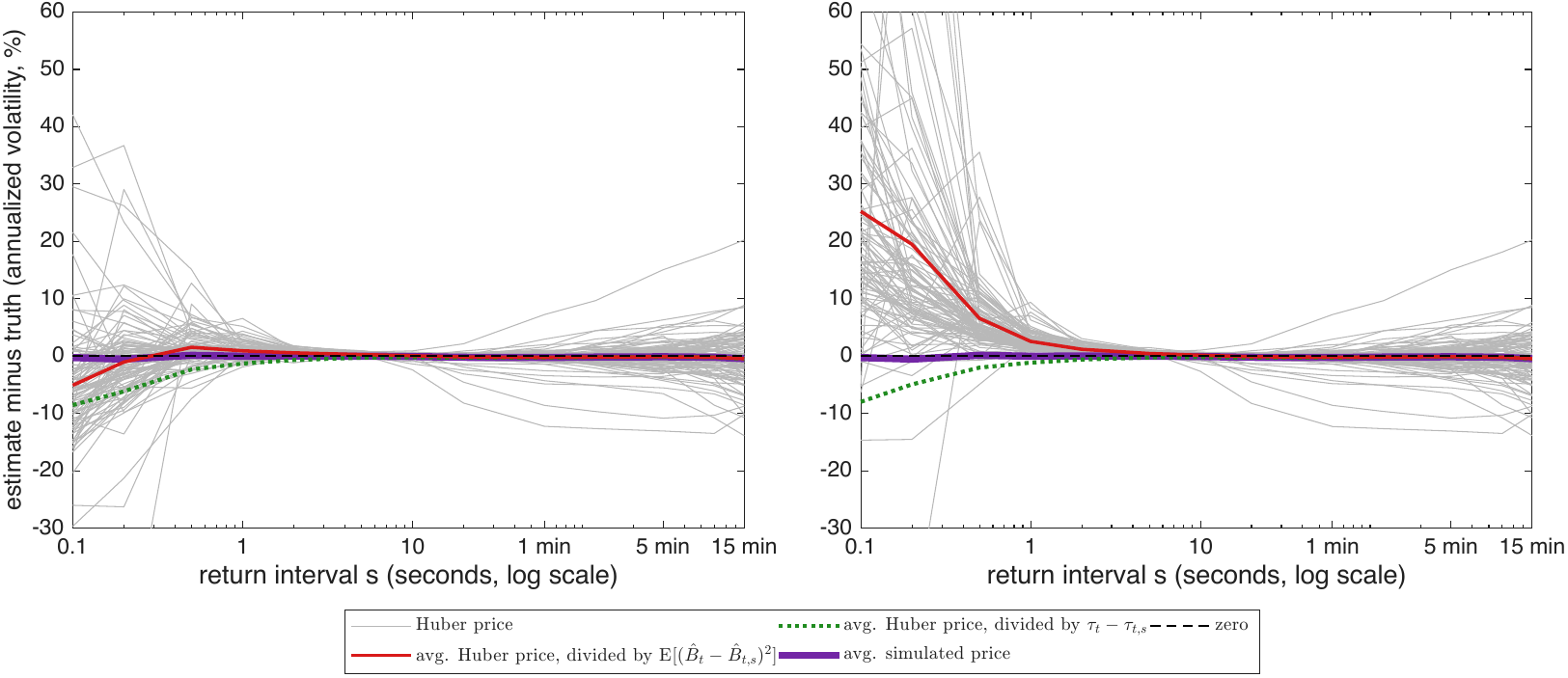}
    \caption{The same for the variance gamma price, drawn as estimate minus each path's own truth $100\sqrt{252}\sqrt{[X,X](23{,}400)}$ since the QV differs by path. Left: no noise, $Y_t=X_t$. Right: the same 100 paths with the noise of (\ref{eqn:noise}). Each grey line is one path, the red line their average, the green dotted line the same average dividing by $\tau_t-\tau_{t,s}$ alone, the purple line the average over the simulated price $X_{1:T}$ itself, \& the dashed line zero. Thirteen of the 200 paths leave the frame at the shortest lags.}
    \label{fig:preav_vg}
\end{figure} 

\setcounter{table}{0}\setcounter{figure}{0}

\section{Replicating results over 54 other assets}\label{app:xsec}

We repeated the whole calculation on another 53 stocks and 1 ETF, \texttt{SPY}.  Our Web Appendix, available at \href{https://simondonkervanheel.com/filtering-web-appendix/}{Donker van Heel's personal webpage}, gives detailed results for each asset.  It shows the trade median filter's optimal half-life does not change very much over stocks, nor does the scale filter's best half-life.  So throughout this Appendix we take a 3 trade half-life for the median filter, a 4 minute half-life for the scale filter \& estimate each stock's own Mandelbrot parameter $a$ following the procedure of Section~\ref{sect:preav-ends}. Table~\ref{tab:xsec} lists the assets, ordered by their average trading rate over the 12 days.  The results heavily depend on the trading rate.     

Figure~\ref{fig:six_signature} repeats Figure~\ref{fig:preav_signature} for 6 of the stocks.  It starts with \texttt{NVDA} at 70 trades a second and ends with \texttt{IPAR} at 1 trade every 5 seconds. The volatility signature of the trades lies well above the Huber filter's at $s=0.1$ seconds on every stock, by up to 10 times on a single day. The Huber filter's signature is flat from 1 second on the most heavily traded stocks, with a 1 minute level $0.92$ times its 1 second level, against roughly $0.6$ times on stocks with fewer than 3 trades per second. The advantage of the robust filter is largest where trading is fast and the noise is heaviest tailed. The price filter's half-life of 3 trades is a fraction of a second on \texttt{NVDA} and several seconds for stocks where trading is slow. The linear filter's curve for \texttt{IPAR} lies above the Huber filter's at the longer lags because of 1 extreme trade in the opening second of 14 October.

\begin{figure}[h!]
    \centering
    \includegraphics[width=0.95\linewidth]{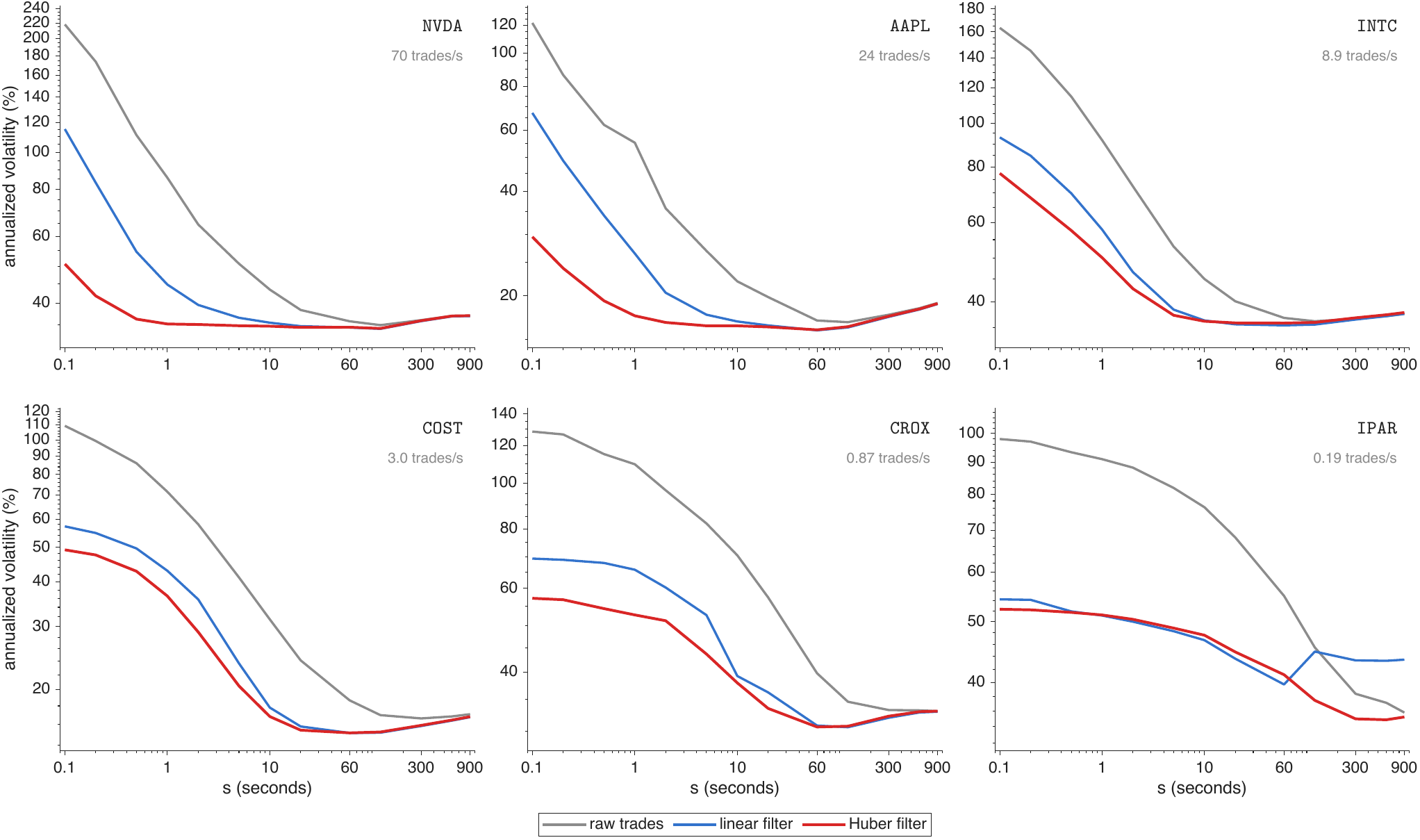}
    \caption{Volatility signature curves for six stocks ordered by trading rate, over the 12 trading days in Oct 2024; each panel gives the stock's average trading rate over those days. The annualized vol $\hat{\sigma}100\sqrt{252}$ from the returns $\hat{X}_t-\hat{X}_{t,s}$ of Table~\ref{tab:preav_diffs_time}, against the lag $s$ in seconds, both axes on log scales, for: raw trades, linear filter \& Huber filter. Each line is the square root of the 12 day average of the squares.}
    \label{fig:six_signature}
\end{figure}

At the largest Huber residual of 16 Oct 2024 the linear filter's spot volatility on \texttt{NVDA} jumps to 12 times its level in the minutes before (Figure~\ref{fig:preav_volpath}). Over the 54 stocks the median jump is to 4.6 times that level where trading is faster than 10 trades per second, to 1.2 times below 1 trade per second, and 19 of the 54 stocks jump by 3 times or more. 

The Hill index of the residual is at most 2.1 on every stock with more than 10 trades per second, so there the variance of the noise is infinite or at the boundary of existing, and above 2 on most of the others. The advantage of the Huber filter over the linear filter shrinks with it. The median spot volatility computed from linear filtered prices is 1.22 times the Huber filtered one with more than 10 trades per second \& 1.01 times with fewer than 1. On stocks with fewer than 1 trade per second the Huber filter treats 6\% of the trades as outliers, against 2\% on those with more than 10.\footnote{With a normal residual $\delta_t = 2 \times 1.4826 \times b_t$ is two standard deviations, since $1.4826$ turns the median absolute deviation into a normal standard deviation (Section~\ref{sect:preav-level}), and a normal distribution has $4.6\%$ of its mass outside two standard deviations; the slow stocks, whose residual has a finite variance, are close to that. On \texttt{NVDA} more than half of the residuals are exactly zero and $b_t$ is the median of the nonzero ones, so $\delta_t$ is large against the typical residual and few trades exceed it.} The linear filter only crashes in the cases with extremely heavy tails.

Table~\ref{tab:xsecacf} repeats the lag-1 ACF of Table~\ref{tab:preav_diffs} across the stocks. For the trades it is roughly $-0.45$ on every stock, the bounce between the bid and the ask. For the Huber filter's price its sign depends on the trading rate. The median over the 12 days is negative on 7 stocks only, all among the 8 most heavily traded, \texttt{NVDA}, \texttt{AAPL}, \texttt{AMZN}, \texttt{TSLA}, \texttt{AVGO}, \texttt{AMD} and \texttt{MU}, and positive on all the others. At lag 2 the bounce has gone from the trades, roughly $-0.02$ on every stock, while the Huber filter's price keeps a small positive dependence, roughly $0.05$ in every group.

The two half-lives were chosen by minimizing a loss for {\tt NVDA}, so it is fair to ask whether values picked on 1 stock suit the rest. They do. For the noise scale, the half-life minimizing the check loss of $b_{t,60}$ of Section~\ref{sect:preav-level} has a median of exactly 4 minutes over the 54 stocks, and imposing 4 minutes costs no stock more than 1\%. For the price filter, imposing the 3 trade half-life of Section~\ref{sect:preav-path} costs a median of 1.7\% of the $k$-trade-ahead Huber forecast loss at the worst horizon $k$ from 2 to 100 trades, and more than 5\% on 1 stock only, \texttt{GOOG}.

The ETF \texttt{SPY} is the fourth most heavily traded asset in the sample and has little noise relative to its price, $\psi^*$ of 0.06 basis points against 0.16 for \texttt{NVDA}. With so little noise the filter's averaging dominates the noise remaining in the filtered price, so the lag-1 autocorrelation is $+0.25$, while stocks with comparable trading rates are typically negative. The largest Huber residual of 16 Oct is a six-share trade at \$522.93, roughly 10\% below the market, and the linear filter's spot volatility jumps to 138 times its level in the minutes before; the Huber filter's does not.

\begin{table}[h!]
\centering
\begin{tabular}{@{}lrrrrrrr@{}}
\toprule
 & & \multicolumn{3}{c}{ACF of $\hat{X}_t-\hat{X}_{t-1}$ at lag 1} & \multicolumn{3}{c}{at lag 2}\\
\cmidrule(lr){3-5}\cmidrule(l){6-8}
 & & Huber & Huber & trades & Huber & Huber & trades\\
trades per second & stocks & 16 Oct & 12 days & 16 Oct & 16 Oct & 12 days & 16 Oct\\
\midrule
above 10 & 7 & $-$0.03 & $-$0.06 & $-$0.46 & 0.05 & 0.05 & $-$0.01\\
3 to 10 & 19 & 0.09 & 0.08 & $-$0.46 & 0.03 & 0.05 & $-$0.02\\
1 to 3 & 15 & 0.13 & 0.13 & $-$0.43 & 0.06 & 0.07 & $-$0.03\\
below 1 & 13 & 0.09 & 0.12 & $-$0.43 & 0.01 & 0.06 & $-$0.03\\
\bottomrule
\end{tabular}
\caption{The sample ACF of the 1-trade difference $\hat{X}_t-\hat{X}_{t-1}$ of Table~\ref{tab:preav_diffs} at lags 1 and 2, for the Huber filtered price and for the trades, medians over the stocks in each group of the average trading rate. 16 Oct is the day Table~\ref{tab:preav_diffs} uses; 12 days is each stock's median over the 12 trading days, then the median over the stocks in the group.}
\label{tab:xsecacf}
\end{table}

\begin{figure}[h!]
    \centering
    \includegraphics[width=0.95\linewidth]{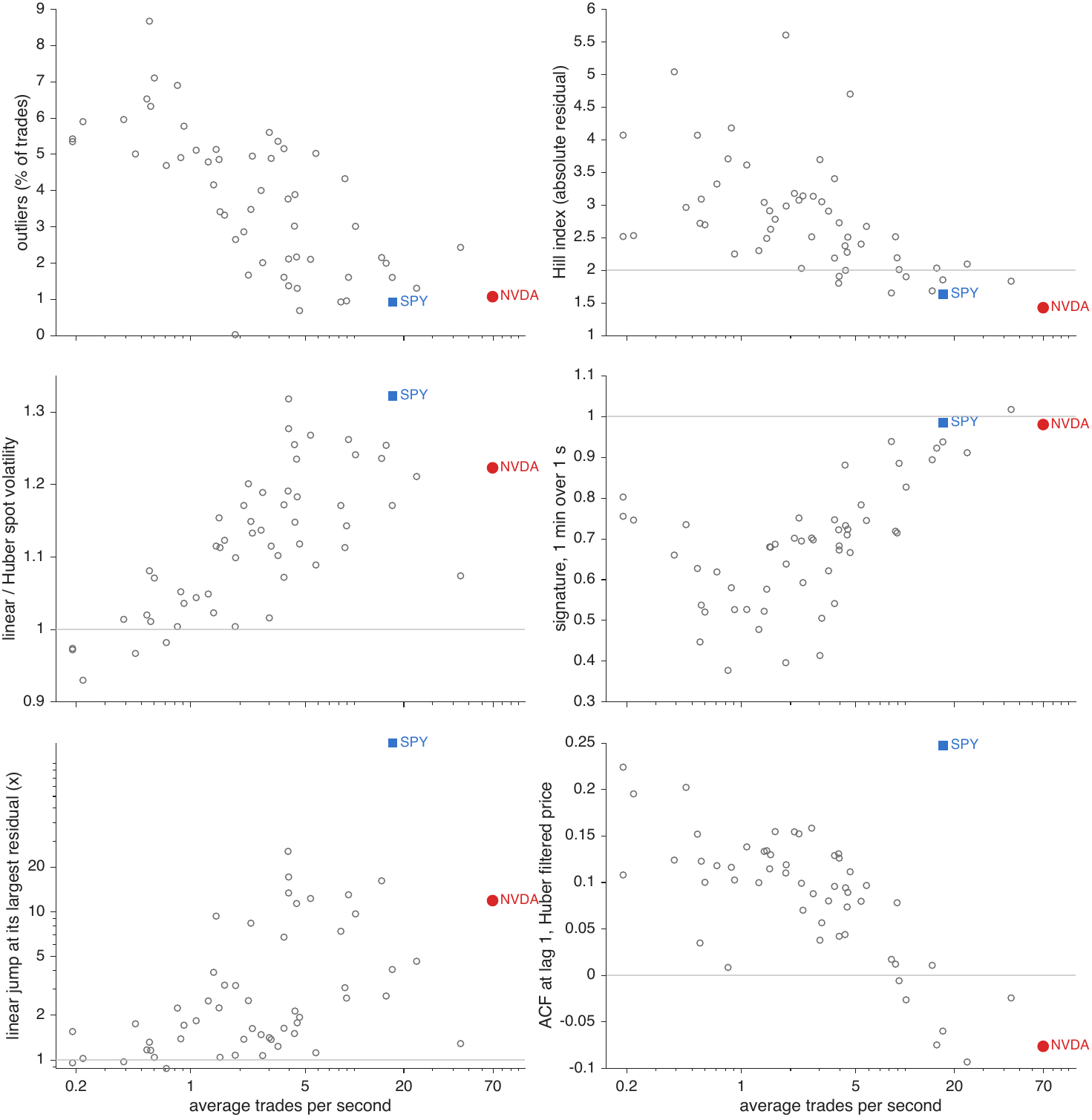}
      \caption{Six statistics of Table~\ref{tab:xsec} after the trading rate, plotted against it, \texttt{NVDA} in red and the ETF \texttt{SPY} in blue. The grey lines mark 2 for the Hill index, 1 for the 3 ratios and 0 for the ACF.}
    \label{fig:xsec}
\end{figure}

\begin{longtable}{@{}l r r r r r r r r r r@{}}
\toprule
 & trades & outlier & Hill & $\psi^*$ & $\psi^*$ & $\frac{\text{linear}}{\text{Huber}}$ & signat & linear & ACF & ACF\\
 & per sec & \% & estimate & cents & b.p. & & 60s/1s & jump & lag 1 & lag 2\\
\midrule
\endfirsthead
\toprule
 & trades & outlier & Hill & $\psi^*$ & $\psi^*$ & $\frac{\text{linear}}{\text{Huber}}$ & signat & linear & ACF & ACF\\
 & per sec & \% & estimate & cents & b.p. & & 60s/1s & jump & lag 1 & lag 2\\
\midrule
\endhead
\bottomrule
\endfoot
\bottomrule
\caption{Every stock and the ETF \texttt{SPY}, ordered by average trading rate, trades per second averaged over the 12 trading days in Oct 2024. Outlier \% is the share of trades the Huber filter treats as outliers on 16 Oct 2024. The Hill estimate is the tail index of the absolute residual $|\hat{\epsilon}_t|$, $\hat{\epsilon}_t=Y_t-\theta_t$, on that day, the average of the Hill estimates computed over the largest $0.1\%$ to $1\%$ of the absolute residuals, as in Section~\ref{sect:preav-tails}. $\psi^*$ is the standard error se$(\theta_t-X_t)$ of the Huber filter on 16 Oct 2024, in cents, computed as in Table~\ref{tab:errors}. $10^4\,\psi^*/X_T$ expresses the same standard error in basis points of the closing price $X_T$. Linear over Huber is the median spot volatility from linear filtered prices over the same from Huber filtered prices, sampled every 300 trades over the 12 days. Signature 1 min / 1 s is the 12 day average of the Huber volatility signature at $s=60$ seconds over the same at $s=1$ second, each the square root of the 12 day average of the squares. Linear jump is the linear filter's spot volatility around its largest Huber residual of 16 Oct, the largest value from 6 minutes before that trade to 12 minutes after it, over the median from 15 to 3 minutes before it. ACF lag 1 and lag 2 are the lag-1 and lag-2 sample ACFs of $\hat{X}_t-\hat{X}_{t-1}$ from the Huber filtered price, as in Table~\ref{tab:preav_diffs}, medians over the 12 days.}
\label{tab:xsec}
\endlastfoot
\texttt{\textbf{NVDA}} & 69.63 & 1.07 & 1.43 & 0.22 & 0.16 & 1.223 & 0.98 & 11.9 & $-$0.08 & 0.03\\
\texttt{TSLA} & 44.40 & 2.43 & 1.83 & 0.61 & 0.28 & 1.074 & 1.02 & 1.3 & $-$0.02 & 0.05\\
\texttt{AAPL} & 23.94 & 1.30 & 2.10 & 0.29 & 0.13 & 1.211 & 0.91 & 4.6 & $-$0.09 & 0.05\\
\texttt{SPY} & 17.06 & 0.92 & 1.63 & 0.35 & 0.06 & 1.322 & 0.99 & 137.6 & 0.25 & 0.20\\
\texttt{AMD} & 17.00 & 1.60 & 1.85 & 0.47 & 0.30 & 1.171 & 0.94 & 4.1 & $-$0.06 & 0.06\\
\texttt{AMZN} & 15.60 & 2.00 & 2.03 & 0.32 & 0.17 & 1.254 & 0.92 & 2.7 & $-$0.07 & 0.04\\
\texttt{MSFT} & 14.64 & 2.15 & 1.68 & 0.70 & 0.17 & 1.236 & 0.89 & 16.2 & 0.01 & 0.07\\
\texttt{AVGO} & 10.14 & 3.01 & 1.90 & 0.68 & 0.38 & 1.241 & 0.83 & 9.7 & $-$0.03 & 0.06\\
\texttt{MU} & 9.19 & 1.61 & 2.01 & 0.42 & 0.38 & 1.262 & 0.89 & 13.0 & $-$0.01 & 0.06\\
\texttt{INTC} & 8.94 & 0.96 & 2.19 & 0.12 & 0.52 & 1.143 & 0.71 & 2.6 & 0.08 & 0.03\\
\texttt{META} & 8.74 & 4.33 & 2.51 & 1.65 & 0.29 & 1.113 & 0.72 & 3.1 & 0.01 & 0.02\\
\texttt{GOOG} & 8.25 & 0.93 & 1.65 & 0.36 & 0.21 & 1.171 & 0.94 & 7.4 & 0.02 & 0.10\\
\texttt{COIN} & 5.81 & 5.03 & 2.67 & 1.74 & 0.83 & 1.089 & 0.74 & 1.1 & 0.10 & 0.04\\
\texttt{MRVL} & 5.39 & 2.10 & 2.40 & 0.43 & 0.52 & 1.268 & 0.78 & 12.3 & 0.08 & 0.08\\
\texttt{RIVN} & 4.62 & 0.69 & 4.70 & 0.10 & 0.94 & 1.118 & 0.67 & 1.9 & 0.11 & 0.04\\
\texttt{HOOD} & 4.47 & 1.30 & 2.51 & 0.19 & 0.72 & 1.183 & 0.72 & 1.8 & 0.09 & 0.07\\
\texttt{CELH} & 4.43 & 2.17 & 2.28 & 0.45 & 1.33 & 1.235 & 0.71 & 11.4 & 0.07 & 0.07\\
\texttt{QCOM} & 4.33 & 3.89 & 2.00 & 0.98 & 0.57 & 1.148 & 0.73 & 2.1 & 0.09 & 0.07\\
\texttt{PYPL} & 4.30 & 3.02 & 2.38 & 0.27 & 0.33 & 1.255 & 0.88 & 1.5 & 0.04 & 0.04\\
\texttt{CSCO} & 3.96 & 1.37 & 1.91 & 0.17 & 0.30 & 1.277 & 0.67 & 17.2 & 0.13 & 0.05\\
\texttt{PEP} & 3.96 & 2.11 & 2.73 & 0.48 & 0.27 & 1.318 & 0.68 & 13.4 & 0.04 & 0.07\\
\texttt{AMAT} & 3.93 & 3.77 & 1.80 & 0.91 & 0.49 & 1.191 & 0.72 & 25.5 & 0.13 & 0.09\\
\texttt{SBUX} & 3.71 & 1.61 & 2.19 & 0.34 & 0.36 & 1.172 & 0.75 & 6.7 & 0.13 & 0.07\\
\texttt{NFLX} & 3.71 & 5.15 & 3.40 & 3.31 & 0.47 & 1.072 & 0.54 & 1.6 & 0.10 & 0.01\\
\texttt{CRWD} & 3.41 & 5.36 & 2.91 & 2.17 & 0.71 & 1.102 & 0.62 & 1.2 & 0.08 & 0.03\\
\texttt{ADBE} & 3.10 & 4.89 & 3.05 & 2.86 & 0.57 & 1.115 & 0.51 & 1.4 & 0.06 & 0.02\\
\texttt{COST} & 3.02 & 5.60 & 3.70 & 3.82 & 0.43 & 1.016 & 0.41 & 1.4 & 0.04 & $-$0.01\\
\texttt{DKNG} & 2.75 & 2.01 & 3.13 & 0.25 & 0.67 & 1.189 & 0.70 & 1.1 & 0.09 & 0.09\\
\texttt{TXN} & 2.69 & 4.00 & 2.51 & 1.09 & 0.54 & 1.137 & 0.70 & 1.5 & 0.16 & 0.12\\
\texttt{PANW} & 2.38 & 4.95 & 3.14 & 2.69 & 0.72 & 1.133 & 0.59 & 1.6 & 0.07 & 0.02\\
\texttt{ABNB} & 2.33 & 3.48 & 2.03 & 0.84 & 0.62 & 1.149 & 0.69 & 8.4 & 0.10 & 0.07\\
\texttt{GILD} & 2.25 & 1.67 & 3.07 & 0.32 & 0.36 & 1.201 & 0.75 & 2.5 & 0.15 & 0.12\\
\texttt{DDOG} & 2.11 & 2.86 & 3.18 & 0.82 & 0.65 & 1.171 & 0.70 & 1.4 & 0.15 & 0.09\\
\texttt{ETSY} & 1.88 & 2.65 & 2.98 & 0.46 & 0.87 & 1.099 & 0.64 & 3.2 & 0.12 & 0.07\\
\texttt{LCID} & 1.87 & 0.03 & 5.60 & 0.08 & 2.37 & 1.004 & 0.40 & 1.1 & 0.11 & 0.00\\
\texttt{TTD} & 1.61 & 3.32 & 2.78 & 0.82 & 0.69 & 1.123 & 0.69 & 3.2 & 0.15 & 0.09\\
\texttt{ROKU} & 1.51 & 3.42 & 2.63 & 0.81 & 1.05 & 1.113 & 0.68 & 1.0 & 0.13 & 0.07\\
\texttt{TEAM} & 1.49 & 4.86 & 2.91 & 1.95 & 1.03 & 1.154 & 0.68 & 2.2 & 0.11 & 0.06\\
\texttt{INTU} & 1.43 & 5.13 & 2.49 & 4.12 & 0.68 & 1.115 & 0.58 & 9.3 & 0.13 & 0.04\\
\texttt{AMGN} & 1.38 & 4.16 & 3.04 & 1.82 & 0.57 & 1.023 & 0.52 & 3.9 & 0.13 & 0.05\\
\texttt{ISRG} & 1.28 & 4.79 & 2.30 & 3.19 & 0.67 & 1.049 & 0.48 & 2.5 & 0.10 & 0.02\\
\texttt{VRTX} & 1.08 & 5.11 & 3.61 & 3.95 & 0.81 & 1.044 & 0.53 & 1.8 & 0.14 & 0.05\\
\texttt{REGN} & 0.91 & 5.78 & 2.25 & 10.69 & 1.06 & 1.036 & 0.53 & 1.7 & 0.10 & 0.03\\
\texttt{CROX} & 0.87 & 4.91 & 4.18 & 1.35 & 0.98 & 1.052 & 0.58 & 1.4 & 0.12 & 0.06\\
\texttt{MELI} & 0.83 & 6.90 & 3.71 & 32.27 & 1.58 & 1.004 & 0.38 & 2.2 & 0.01 & $-$0.01\\
\texttt{CALM} & 0.71 & 4.69 & 3.32 & 1.13 & 1.20 & 0.982 & 0.62 & 0.9 & 0.12 & 0.09\\
\texttt{ORLY} & 0.60 & 7.11 & 2.70 & 12.75 & 1.06 & 1.071 & 0.52 & 1.0 & 0.10 & 0.06\\
\texttt{WING} & 0.57 & 6.32 & 3.09 & 6.20 & 1.64 & 1.011 & 0.54 & 1.2 & 0.12 & 0.05\\
\texttt{BKNG} & 0.56 & 8.67 & 2.72 & 70.21 & 1.62 & 1.081 & 0.45 & 1.3 & 0.03 & 0.00\\
\texttt{DUOL} & 0.54 & 6.53 & 4.07 & 5.08 & 1.77 & 1.020 & 0.63 & 1.2 & 0.15 & 0.08\\
\texttt{MIDD} & 0.46 & 5.01 & 2.96 & 1.52 & 1.10 & 0.967 & 0.73 & 1.8 & 0.20 & 0.11\\
\texttt{SAIA} & 0.39 & 5.96 & 5.04 & 11.51 & 2.50 & 1.014 & 0.66 & 1.0 & 0.12 & 0.05\\
\texttt{FIZZ} & 0.22 & 5.90 & 2.53 & 0.68 & 1.47 & 0.930 & 0.75 & 1.0 & 0.20 & 0.11\\
\texttt{LANC} & 0.19 & 5.35 & 2.52 & 3.29 & 1.83 & 0.972 & 0.76 & 1.0 & 0.11 & 0.01\\
\texttt{IPAR} & 0.19 & 5.42 & 4.07 & 2.29 & 1.94 & 0.974 & 0.80 & 1.6 & 0.22 & 0.10\\
\end{longtable}

Figure~\ref{fig:xsec} plots six statistics of Table~\ref{tab:xsec} against the average number of trades per second.  The percentage of outliers for the Huber filter declines with trading intensity.  A typical feature is that with high intensity the percentage is low but very heavy tailed.  With low intensity trading the outliers are quite common but thinner tailed.   

All but six Hill estimators are below 4, with the estimator typically declining with trading intensity.

\end{document}